\title{Multiplicative assignment with upgrades}
\author[1]{Alexander Armbruster\footnote{The first author was funded by the Deutsche Forschungsgemeinschaft (DFG, German
Research Foundation) – project 551896423.}}
\author[2]{Lars Rohwedder\footnote{The second author was supported by Dutch Research Council (NWO) project
``The Twilight Zone of Efficiency: Optimality of Quasi-Polynomial Time Algorithms" [grant number OCEN.W.21.268].}}
\author[1]{Stefan Weltge}
\author[1]{Andreas Wiese}
\author[1]{Ruilong Zhang}
\affil[1]{Technical University of Munich}
\affil[2]{University of Southern Denmark}
\date{\today}
\begin{document}

\maketitle

\begin{abstract}
We study a problem related to submodular function optimization and
the exact matching problem for which we show a rather peculiar status:
its natural LP-relaxation can have fractional optimal vertices, but there is always also an optimal \emph{integral} vertex, which we can also compute in polynomial time.

More specifically, we consider the \emph{multiplicative assignment problem with upgrades} in
which we are given a set of customers and suppliers and we seek to
assign each customer to a different supplier. Each customer has a
demand and each supplier has a regular and an upgraded cost for each
unit demand provided to the respective assigned client. Our goal is
to upgrade at most $k$ suppliers and to compute an assignment in
order to minimize the total resulting cost. This can be cast as the
problem to compute an optimal matching in a bipartite graph
with the additional constraint that we must select $k$ edges from
a certain group of edges, similar to selecting $k$ red edges in the
exact matching problem. Also, selecting the suppliers to be upgraded
corresponds to maximizing a submodular set function under a cardinality
constraint.

Our result yields an efficient LP-based algorithm to solve our problem optimally.
In addition, we provide also a purely strongly polynomial-time algorithm for it.
As an application, we obtain exact algorithms
for the upgrading variant of the problem to schedule jobs on identical
or uniformly related machines in order to minimize their sum of completion
times, i.e., where we may upgrade up to $k$ jobs to reduce their
respective processing times. 
\end{abstract}

\thispagestyle{empty} \newpage{}

\setcounter{page}{1}

\section{Introduction}

Consider a supply network with some given suppliers $I$ and customers
$J$ satisfying $|I|\ge|J|$ where each customer must be assigned
to exactly one supplier and we can assign at most one customer to
each supplier. All suppliers produce the same good, each supplier
$i\in I$ at an individual \emph{cost} $c_{i}\ge0$ per unit, and
each customer $j\in J$ has a given \emph{demand} $d_{j}\ge0$. The
\emph{multiplicative assignment problem} asks for finding an assignment
$\pi:J\to I$ of minimum total cost $\sum_{j\in J}d_{j}\cdot c_{\pi(j)}$.
Finding such an optimal assignment is easy: Sort the suppliers by
cost in non-decreasing order, sort the customers by demand in non-increasing
order, and match them accordingly.

We consider the following extension of the problem, which we call
the \emph{multiplicative assignment problem with upgrades}. For each
supplier $i\in I$ we are given in addition an \emph{improved cost}
$b_{i}\in[0,c_{i}]$ that becomes effective if we decide to \emph{upgrade}
supplier $i$, e.g., modelling to improve the production processes
or facilities of $i$ to make production more cost-efficient. Given
a number $k\in\mathbb{Z}_{\ge0}$, our task is to find a subset of
$k$ suppliers to upgrade that results in an assignment of minimum
cost.

For each selected set $X\subseteq I$ of suppliers to be upgraded,
we denote by $\cost(X)$ the cost of the resulting optimal assignment.
We will show that the function $\cost(X)$ is supermodular, 
and hence
our problem is equivalent to maximizing a submodular function subject
to a cardinality constraint. While for some classes of submodular
functions this problem is known to be polynomially solvable, e.g.,
for gross substitute functions via a greedy algorithm \cite{leme2017gross}, to
the best of our knowledge $\cost(X)$ does not belong to any of them.
In particular, we show that the greedy algorithm may fail to compute
an optimal solution.

Maybe unexpectedly, a special case of our problem is the upgrade-variant
of the classical problem of scheduling jobs non-preemptively to minimize
their average completion time, either on a single machine, on identical
machines, or on uniformly related machines (see \Cref{secSchedulingUpgrades}).
Here, each job has a regular processing time and an upgraded (smaller)
processing time, and we are allowed to choose $k$ jobs to be upgraded.
This is motivated by assigning highly skilled personnel, improvement
of equipment, or by subcontracting certain tasks, and it is also referred
to as \emph{crashing }a job in the project management literature,
see e.g.~\cite{gutjahr2000stochastic}. In the scheduling literature,
this has also been studied under the notion of \emph{testing} a job
\cite{damerius2023scheduling} (often even in the stochastic setting
where the upgraded processing times are not known a priori). Recently,
Damerius, Kling, Li, Xu, and Zhang~\cite{damerius2023scheduling}
gave a PTAS for the upgrade-variant of minimizing the job's average
completion time on a single machine, but they leave open whether the
problem can be solved in polynomial time\footnote{We remark that the PTAS works even in the weighted setting, which
the authors show to be NP-hard \cite{damerius2023scheduling}.}.

We consider now the setting where $|I|=|J|$ which we may assume w.l.o.g.\ 
by adding customers with zero demand. In this case, our problem can
be also described using the notion of bipartite perfect matchings.
We introduce a node for each supplier $i\in I$ and a node for each
customer $j\in J$. For each pair $(i,j)$ there are two edges: one
\emph{blue} edge with cost $c_{i}\cdot d_{j}$ corresponding to assigning
$j$ to $i$ \emph{without} upgrading~$i$, and one \emph{red} edge
with cost $b_{i}\cdot d_{j}$ corresponding to this assignment \emph{with}
upgrading~$i$, see \Cref{fig:redblue}. Hence, an optimal solution
corresponds to a perfect matching that has exactly $k$ red edges.
Therefore, our problem is related to the exact bipartite perfect matching
problem in which we are given a bipartite graph with red and blue
edges and an integer $k$ and need to decide whether there is a perfect
matching with exactly $k$ red edges. This problem has the intriguing
status that it admits a beautiful polynomial time \emph{randomized}
algorithm due to Mulmuley, Vazirani, and Vazirani~\cite{mulmuley1987matching},
but despite several attempts, no deterministic polynomial-time algorithm
has been found so far. In contrast to the general case of that problem,
our bipartite graphs are complete. However, for us it is not sufficient
to compute \emph{any} perfect matching with exactly $k$ red edges,
but we seek such a perfect matching that optimizes our objective function. For this variant only a randomized pseudopolynomial algorithm is known, see \Cref{secSchedulingUpgrades}, which also leads to a randomized pseudopolynomial time algorithm for the multiplicative assignment problem with upgrades.

\begin{figure}[t]
\centering \includegraphics[width=0.6\linewidth]{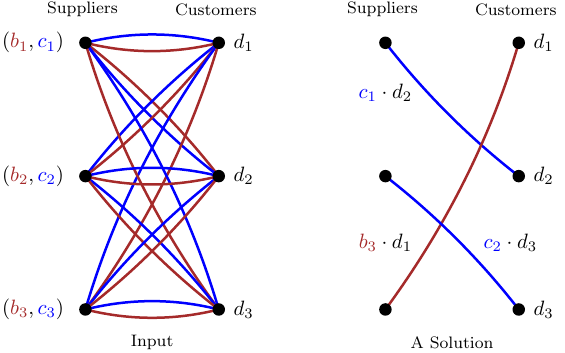}
\caption{Instance of the multiplicative assignment problem with upgrades
with $|I|=|J|=3$ and $k=1$, viewed as a perfect matching problem
in a bipartite graph with red and blue edges. The right matching corresponds
to upgrading supplier $3$ and assigning customer $1$ to supplier
$3$, customer $2$ to supplier $1$, and customer $3$ to supplier
$2$, resulting in a cost $b_{3}d_{1}+c_{1}d_{2}+c_{2}d_{3}$.}\label{fig:redblue}
\end{figure}

The usual bipartite matching problem can be formulated by a straight-forward
linear program (LP). It is well-known that its constraint matrix is
totally unimodular and, hence, all vertices of the corresponding polytope
are integral. If we require that at most $k$ red edges are chosen,
this adds ``only'' one single constraint to the LP. However, then
the resulting polytope is no longer integral. One could hope to strengthen
this LP formulation with additional constraints and variables in order
to describe exactly the convex hull of all integral points. However,
no simple characterization of the convex hull is known and
Jia, Svensson, and Yuan~\cite{jia2023exact} showed that (for some
$k$) any such extended formulation has exponential size.
This makes it difficult to apply LP methods to solve this problem.

\subsection{Our contribution}

First, we consider the natural LP formulation of the multiplicative
assignment problem. Similar to the exact bipartite matching problem,
the resulting polytope is identical to the polytope of the corresponding
bipartite matching instance with the additional constraint that at
most $k$ red edges are chosen.
We show that non-integral vertices may be even optimal LP solutions for our cost function.
However, we show that in this case there is always an optimal
\emph{integral} vertex as well! Moreover, we show how to compute such an optimal
integral vertex in polynomial time.

The fact that there is always an optimal integral
vertex while there may also exist other optimal non-integral vertices
contrasts
many other optimization problems in which either \emph{all}
(optimal) vertices of the corresponding LP are integral (e.g., shortest
path, minimum cost flow, or other problems for which the constraint
matrix is totally unimodular) or there are instances in which \emph{none}
of the optimal vertices are integral (e.g., TSP, knapsack, vertex
cover, etc.). In fact, we are not aware of any other optimization
problem with the status we prove for the multiplicative assignment
problem with upgrades.

The feasible region of our LP relaxation is the bipartite matching
polytope with one single additional constraint. Therefore, if an optimal
vertex of our LP relaxation is not integral, then it must be
a convex combination of two vertices corresponding to integral matchings.
One of these two
matchings upgrades a set $A$ of strictly \emph{less} than $k$ suppliers
and the other matching upgrades a set $B$ of strictly \emph{more} than $k$ suppliers. One key insight is to view each such supplier
$i\in A\cup B$ as the interval $[b_{i},c_{i}]$, i.e., the interval
defined by its improved cost $b_{i}$ and its usual cost $c_{i}$.
We prove that if we consider only the suppliers in which $A$ and
$B$ differ, i.e., the suppliers in the symmetric difference $A\symdiff B$,
no two of their intervals pairwise contain each other! Using this,
we derive a procedure that computes two new sets $A'$ and $B'$ that
are ``more balanced'' than $A$ and $B$: we assign the suppliers
in $A\symdiff B$ alternatingly to $A'$ and $B'$.
Then, we show that from $A,B,A'$, and $B'$ we can form a new pair having 
again the properties of $A,B$ mentioned above (in particular that a convex combination
of their corresponding optimal matchings is an optimal LP-solution)
but such that the number of upgraded
suppliers differ less than for the pair $A,B$.
We prove that this ``balancing''
process terminates with an integral optimal matching after a linear
number of iterations. If for an instance the suppliers' costs are
pairwise distinct and the customer's demands are also pairwise distinct,
we prove that then even \emph{all} optimal vertices are integral.
However, also in that case the polytope of the LP-relaxation can still
have fractional (sub-optimal) vertices since the mentioned conditions
affect only the objective function but not the polytope.

Let us denote by $h(k')$ the cost of an optimal (integral) solution
that upgrades exactly $k'$ suppliers, for each $k'\in\mathbb{N}$.
Our proof above implies that the linear interpolation of $h$ is convex.
From this property, we also derive a purely combinatorial, strongly
polynomial-time algorithm for the multiplicative assignment problem
with upgrades, which reduces the problem to at most $|I|$ instances
of weighted bipartite matching \emph{without} any additional (upgrading) constraint.
In each iteration, we start with one (integral) solution $S_{1}$
that is optimal for upgrading a certain number $k_{1}<k$ suppliers,
and another solution $S_{2}$ that is optimal for upgrading $k_{2}>k$
suppliers. Given these solutions, we solve a Lagrangian relaxation
of the multiplicative assignment problem in which the constraint for
choosing $k$ red edges is moved to the objective function, penalizing
red edges. This penalty is chosen such that $S_{1}$ and $S_{2}$
are still optimal for the resulting cost function. If the computed
solution is strictly better than $S_{1}$ and $S_{2}$ (for the new
cost function) then this yields a new integral solution $S'$ that
is optimal for updating a certain number $k'$ of suppliers with $k_{1}<k'<k_{2}$;
we replace $S_{1}$ or $S_{2}$ by $S'$ and continue with the next
iteration. Otherwise, we compute an optimal integral solution with
the same method we used above to balance the sets $A$ and $B$ iteratively.

As a consequence, we obtain an exact strongly polynomial-time algorithm
for the aforementioned upgrade-variants of scheduling jobs on $m$
identical or uniformly related parallel machines to minimize their
average completion time, which hence extends the work of~\cite{damerius2023scheduling}.
A natural question is whether our results still hold in a more general setting, for
example, when the given bipartite graph is not complete, if we can reduce the demand of some given number of customers
(similar to upgrading the costs of the suppliers), or if the suppliers are partitioned into subsets
such that we can upgrade a certain number from each of them (similar to a partition matroid). However,
for all these cases we prove that the resulting LP-relaxation might not have optimal integral vertices and, hence,
our results do not extend to them.

\subsection{Other related work}

{\bf Scheduling with Testing or Resources.}
Scheduling with testing under explorable uncertainty to bridge online and robust optimization was introduced by~\cite{durr2020adversarial} and extended to various scheduling models~\cite{albers2021explorable,bampis2021speed,gong2024approximation}. 
These problems were initially motivated by code optimization, where jobs (programs) are either executed directly or pre-processed to reduce running time. \cite{damerius2023scheduling} studied single-machine total completion time minimization in four settings (online/offline, uniform/general testing), proposing a PTAS for general testing offline but leaving open the complexity of uniform testing offline. 
Our work resolves this and extends the results to the setting of multiple machines.
Scheduling with Resource Allocation~\cite{grigoriev2007machine,kumar2009unified} dynamically adjusts job processing times via renewable resources (e.g., computational power) while respecting budget constraints.
These problems also share some similarities with our problem.
Unlike our testing-based model (fixed post-testing runtimes), resource allocation allows runtime adjustments, creating a fundamental distinction.

{\bf \noindent Budget Constrained Matching.}
Budget-constrained combinatorial optimization is well-studied~\cite{aissi2024minimum,chekuri2011multi,grandoni2014new,ravi1993many}. 
The closest work is budget-constrained bipartite matching~\cite{chekuri2011multi,grandoni2014new}, where given a bipartite graph with edges that have a weight and multiple costs, we seek a minimum weight perfect matching such that all cost dimensions satisfy the budget limit.
The problem is clearly NP-hard and generalizes our problem.
A PTAS via iterative rounding is shown in~\cite{chekuri2011multi,grandoni2014new}.

{\bf \noindent Submodular Maximization.}
Since $\cost(X)$ is supermodular (see \cref{sec:supermodularity}), our problem reduces
to maximizing a submodular function under a cardinality constraint. 
For general non-negative, monotone, submodular functions, a $(1-1/e)$-approximation is achievable, but improving this requires super-polynomially many queries~\cite{nemhauser1978best}. 
For matroid rank and gross substitutes functions, exact optimal solutions can be computed in polynomial time using the greedy algorithm~\cite{leme2017gross}. However, our problem is not
captured by these settings; indeed, we show that the greedy algorithm might fail to compute an optimal solution for our problem (see \Cref{apx:greedy}).

\section{Basic definitions and LP-formulation}

\label{secPreliminiaries}We first define the our problem formally
and introduce some basic concepts, including the LP-formulation we
will work with. We are given a set of suppliers $I$ and and a set
of customers $J$ where we assume that $|I|\ge|J|$. For each supplier
$i\in I$ we are given costs $b_{i}$ and $c_{i}$ such that $0\le b_{i}\le c_{i}$,
for each customer $j\in J$ we are given a demand $d_{j}\ge0$, and
in addition we are given a value $k\in\{0,\dots,|I|\}$. Our goal
is to compute a subset $A\subseteq I$ with $|A|\le k$ (corresponding
to the suppliers we upgrade) and a one-to-one map $\pi:J\to I$ (corresponding
of the assignment of customers to suppliers); our objective is to
minimize
\begin{equation}
\sum_{j\in J:\pi(j)\in A}b_{\pi(j)}d_j+\sum_{j\in J:\pi(j) \in I\setminus A}c_{\pi(j)}d_j.\label{eq:objective}
\end{equation}

For any fixed set $A\subseteq I$, it is easy to determine an assignment
$\pi$ that minimizes~\eqref{eq:objective}: Simply sort the suppliers
by their \emph{effective} cost ($b_{i}$ if $i\in A$ and $c_{i}$
if $i\notin A$) in non-decreasing order, sort the customers by demand
in non-increasing order, and match them accordingly. Note that this
works independently of the cardinality of $A$, so even if $|A|>k$.
Hence, we have:

\begin{restatable}{lemma}{lemComputeAssignmentOnly}
\label{lemComputeAssignmentOnly}
Given a set $A\subseteq I$,
in polynomial time we can compute a one-to-one map $\pi:J\to I$ that
minimizes $\sum_{j\in J:\pi(j)\in A}b_{\pi(j)}d_j+\sum_{j\in J:\pi(j) \in I\setminus A}c_{\pi(j)}d_j.$
\end{restatable}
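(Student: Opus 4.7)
The plan is to verify that the simple sort-and-pair procedure described just before the lemma statement is indeed optimal; there is no real obstacle here, but I will spell out the two exchange arguments that make it work. For each supplier $i \in I$, let the \emph{effective cost} be $e_i := b_i$ if $i \in A$ and $e_i := c_i$ otherwise. Then the objective \eqref{eq:objective} is exactly $\sum_{j \in J} e_{\pi(j)} d_j$, a linear function of the pairs $(\pi(j), j)$.

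First I would argue that in an optimal assignment the image $\pi(J)$ must consist of $|J|$ suppliers whose effective costs are among the $|J|$ smallest in $I$. Indeed, suppose $i \in \pi(J)$ and $i' \in I \setminus \pi(J)$ with $e_{i'} < e_i$. Let $j := \pi^{-1}(i)$ and modify $\pi$ to send $j$ to $i'$ instead; the objective changes by $(e_{i'} - e_i) d_j \le 0$, since $d_j \ge 0$. Repeating this swap yields an assignment whose image is exactly the $|J|$ suppliers with smallest effective cost (breaking ties arbitrarily), without increasing the objective.

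Second, I would invoke the rearrangement inequality to argue that within the chosen image, the smallest effective cost should be paired with the largest demand, the second smallest with the second largest, and so on. Concretely, if $\pi$ ever assigns customers $j_1, j_2$ to suppliers $i_1, i_2$ with $e_{i_1} \le e_{i_2}$ but $d_{j_1} < d_{j_2}$, then swapping the two assignments changes the objective by
\[
(e_{i_1} d_{j_2} + e_{i_2} d_{j_1}) - (e_{i_1} d_{j_1} + e_{i_2} d_{j_2}) = -(e_{i_2}-e_{i_1})(d_{j_2}-d_{j_1}) \le 0.
\]
So again the swap does not increase the cost, and after finitely many such swaps we reach exactly the sort-and-match assignment described above. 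Combining these two steps shows that the described assignment is optimal, and since it only requires sorting the suppliers by effective cost and the customers by demand, it runs in polynomial (in fact $O((|I|+|J|)\log(|I|+|J|))$) time.
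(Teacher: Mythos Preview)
Your proposal is correct and takes essentially the same approach as the paper: define the effective cost of each supplier, then use exchange arguments to show the sort-and-match assignment is optimal. The only organizational difference is that you split the argument into two phases (first fix the image of $\pi$ to be the $|J|$ cheapest suppliers, then apply the rearrangement inequality within that image), whereas the paper directly transforms an arbitrary optimal $\pi'$ into the sorted assignment by repeatedly fixing the smallest customer index where they disagree; both arguments rest on the same pairwise swap computation.
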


For the sake of completeness, we give a formal proof in \Cref{secDeferredProofs}.
Given a set $A\subseteq I$,
we denote by $\cost(A)$ the cost of the optimal assignment according
to \Cref{lemComputeAssignmentOnly}. Hence, we want to compute
a set $A\subseteq I$ with $|A|\le k$ that minimizes $\cost(A)$.
While $\pi$ can be computed by a greedy algorithm once $A$ is fixed,
we cannot simply use a greedy algorithm to compute an optimal set $A$ (see \Cref{apx:greedy}
for a counterexample). Therefore, we need a more sophisticated approach
to solve our problem.

In the remainder of this paper, we will assume that $|I|=|J|$,
which we can ensure by adding dummy customers with zero demand.
Note that then $\pi$ is bijection.

\paragraph{Bipartite matching.}

Since $|I|=|J|$ it is convenient to study our problem using the notion
of perfect matchings. To this end, we fix $G$ to be the bipartite
(multi-)graph with vertices $V=I\dot{\cup}J$ where each pair $i\in I$
and $j\in J$ are connected by a \emph{blue }edge with cost $c_{i}\cdot d_{j}$
corresponding to assigning $j$ to $i$ \emph{without} upgrading $i$,
and a \emph{red }edge with cost $b_{i}\cdot d_{j}$ corresponding
to this assignment \emph{with} upgrading~$i$. With this terminology, our problem asks for
finding a minimum-cost perfect matching in $G$ that contains at most
$k$ red edges.

\paragraph{Linear program.}

We will study the following natural linear programming formulation
of our problem:
\begin{alignat}{3}
\min & \,\,\, & \sum\nolimits_{(i,j)\in I\times J}b_{i}d_{j}x_{i,j}+c_{i}d_{j}y_{i,j} &  & \,\,\,\,\,\, & \forall j\in J\label{LP1}\\
\mathrm{s.t.} &  & \sum\nolimits_{i\in I}x_{i,j}+y_{i,j} & =1 &  & \forall j\in J\label{LP2}\\
 &  & \sum\nolimits_{j\in J}x_{i,j}+y_{i,j} & =1 &  & \forall i\in I\label{LP3}\\
 &  & \sum\nolimits_{(i,j)\in I\times J}x_{i,j} & \le k\label{LP4}\\
 &  & x_{i,j} & \ge0 &  & \forall(i,j)\in I\times J\label{LP5}\\
 &  & y_{i,j} & \ge0 &  & \forall(i,j)\in I\times J.\label{LP6}
\end{alignat}
Setting $x_{i,j}=1$ corresponds to assigning customer $j$ to supplier
$i$ \emph{with} upgraded cost (i.e., via a red edge) and $y_{i,j}=1$
corresponds to assigning customer $j$ to supplier $i$ \emph{without}
upgrading its cost (i.e., via a blue edge). For a given instance,
denote by $P(k)$ the feasible region of our LP. Without Constraint~\eqref{LP4}, by the Birkhoff–von Neumann theorem (see, e.g.~\cite[Thm.~18.1]{schrijver2003combinatorial}), this set would be identical to the perfect matching polytope of
$G$, i.e., the convex hull of all integral vectors $(x,y)$ corresponding
to perfect matchings in $G$, and thus each vertex would be integral.

However, due to the additional Constraint~\eqref{LP4} the polytope $P(k)$
may have a fractional vertex that might even be optimal for our
objective function. Consider for example the instance defined by $I=J=\{1,2\}$,
$(b_{1},c_{1})=(0,1)$, $(b_{2},c_{2})=(2,3)$, $d_{1}=d_{2}=1$ and
$k=1$, see \Cref{fig:fractional-opt}.
The resulting polytope $P(1)$ has the fractional vertex $(x^{*},y^{*})$
defined by $x_{1,2}^{*}=x_{2,1}^{*}=y_{1,1}^{*}=y_{2,2}^{*}=\frac{1}{2}$
and setting all other variables to 0.
The objective function value
of $(x^{*},y^{*})$ is $3$.
This is optimal since every point $(x,y)$ in $P(1)$ satisfies
\begin{align*}
\sum\nolimits_{(i,j)\in I\times J}b_{i}d_{j}x_{i,j}+c_{i}d_{j}y_{i,j} & =2x_{2,1}+2x_{2,2}+y_{1,1}+y_{1,2}+3y_{2,1}+3y_{2,2}\\
 & =3(\underbrace{x_{1,1}+x_{2,1}+y_{1,1}+y_{2,1}}_{=1})+3(\underbrace{x_{1,2}+x_{2,2}+y_{1,2}+y_{2,2}}_{=1})\\
 & \,\,\,\,\,-2(\underbrace{x_{1,1}+x_{1,2}+y_{1,1}+y_{1,2}}_{\le1})-(\underbrace{x_{1,1}+x_{1,2}+x_{2,1}+x_{2,2}}_{\le1})\\
 & \geq 3.
\end{align*}
On the other hand there is also the integral vertex $(\bar{x},\bar{y})$
defined by $\bar{x}_{1,1}=1$, $\bar{y}_{2,2}=1$, and setting all
other variables to 0, whose objective function value is also 3.
Hence, for this instance the polytope $P(1)$ has an optimal vertex that
is integral. In the next section, we show that this is true for \emph{any}
instance, even if the corresponding polytope $P(k)$ has \emph{also} fractional optimal vertices.

\begin{figure}
    \centering
    \includegraphics[width=0.8\linewidth]{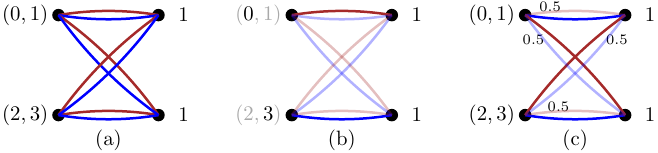}
    \caption{Instance with an optimal integral vertex (middle) and an optimal fractional vertex (right).}
    \label{fig:fractional-opt}
\end{figure}

\section{Integral optimal vertices}

\label{secIntegrality}

In this section, we prove that for any instance of our problem, the corresponding
polytope $P(k)$ admits an optimal vertex that is integral. Our proof
is constructive, yielding a polynomial-time algorithm that computes
an optimal integral vertex.

\begin{theorem} \label{thmMain} For any instance of the multiplicative
assignment problem, the corresponding polytope $P(k)$ admits an optimal
vertex that is integral. Moreover, we can compute such a vertex in
polynomial time.
\end{theorem}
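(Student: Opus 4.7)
The plan is to compute an optimal vertex $(x^\ast,y^\ast)$ of $P(k)$ using any standard LP solver; if it is integral we are done, so assume it is fractional. The first step is to decompose such a vertex. Since $P(k)$ arises from the bipartite perfect matching polytope (whose vertices are integral by Birkhoff--von Neumann) by adding the single constraint \eqref{LP4}, constraint \eqref{LP4} must be tight at $(x^\ast,y^\ast)$ and $(x^\ast,y^\ast)=\lambda\,\chi_{M_A}+(1-\lambda)\,\chi_{M_B}$ for two integral perfect matchings $M_A,M_B$ whose red-edge sets encode upgrade sets $A,B\subseteq I$ with $|A|<k<|B|$ and $\lambda|A|+(1-\lambda)|B|=k$. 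By LP-optimality of $(x^\ast,y^\ast)$ and the convexity of the decomposition, both $M_A$ and $M_B$ must themselves be cost-minimizing matchings for their respective upgrade sets, so by \Cref{lemComputeAssignmentOnly} their individual costs equal $\cost(A)$ and $\cost(B)$.

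The central structural claim is then the following antichain lemma: if $(A,B)$ is chosen among all such decompositions to minimize, say, $\sum_{i\in A\triangle B}(c_i-b_i)$ (or some similar tiebreaking), then for any two distinct $i,i'\in A\triangle B$ the intervals $[b_i,c_i]$ and $[b_{i'},c_{i'}]$ do not contain one another. I would prove this by a swap argument: assume for contradiction that $i\in A\setminus B$, $i'\in B\setminus A$ and $[b_{i'},c_{i'}]\subseteq[b_i,c_i]$ (or the symmetric case), and consider $\widetilde A=(A\setminus\{i\})\cup\{i'\}$ and $\widetilde B=(B\setminus\{i'\})\cup\{i\}$. Both have the same sizes as $A,B$, so the same convex combination $\lambda\,\chi_{M_{\widetilde A}}+(1-\lambda)\,\chi_{M_{\widetilde B}}$ is LP-feasible, and a direct sorting argument using \Cref{lemComputeAssignmentOnly} shows $\cost(\widetilde A)+\cost(\widetilde B)\le\cost(A)+\cost(B)$, contradicting the choice of $(A,B)$ (or the optimality of $(x^\ast,y^\ast)$).

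With the antichain property in hand, sort the elements of $A\triangle B$ by $b_i$ (equivalently by $c_i$) and distribute them alternately into two new sets, yielding $A',B'$ with $A'\cap B\triangle A=(A\cap B)$ augmented, $|A'|+|B'|=|A|+|B|$ and $\bigl||A'|-|B'|\bigr|\le 1$. A second exchange argument, again driven by the antichain property and the greedy sorting rule of \Cref{lemComputeAssignmentOnly}, shows that $\cost(A')+\cost(B')\le\cost(A)+\cost(B)$. From the four sets $A,B,A',B'$ I then build the new pair by a short case analysis on how $|A'|,|B'|$ compare to $k$: if $|A'|\le k\le |B'|$ (or swapped), take $(A',B')$; if both $|A'|,|B'|\le k$, the cost inequality combined with $\cost(A)\ge$ LP-value forces one of $M_{A'},M_{B'}$ to be an integral LP-optimum and we stop; the symmetric case with $B$ in place of $A$ is analogous using the pair $(A,B')$ with $|B'|<|B|$.

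The main obstacle, and the place I would spend the most care, is proving the antichain lemma and the cost inequality $\cost(A')+\cost(B')\le\cost(A)+\cost(B)$ cleanly, since these hinge on simultaneously re-sorting demands against two different cost profiles and checking that each individual swap is non-increasing in total cost. Once those are in place, termination is immediate: the progress measure $|B|-|A|$ strictly decreases at every iteration (by at least one, and roughly halves), so after at most $|I|$ iterations we reach either $|A|=k$ or $|B|=k$, producing an integral optimal vertex. Each iteration runs a single LP-solve plus a sort and a matching, so the whole procedure is polynomial, establishing \Cref{thmMain}.
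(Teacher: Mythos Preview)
Your overall architecture matches the paper's: decompose a fractional optimal vertex into an optimal pair $(A,B)$ with $|A|<k<|B|$, establish an antichain (``clean'') property on $A\triangle B$, alternate the elements of $A\triangle B$ to get $A',B'$ with $\cost(A')+\cost(B')\le\cost(A)+\cost(B)$, and iterate. Two of your steps, however, do not go through as written.

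First, your proof of the antichain property has a gap. The swap $\widetilde A=(A\setminus\{i\})\cup\{i'\}$, $\widetilde B=(B\setminus\{i'\})\cup\{i\}$ leaves $A\triangle B=\widetilde A\triangle\widetilde B$ unchanged, so your tie-breaking quantity $\sum_{i\in A\triangle B}(c_i-b_i)$ is invariant and cannot yield a contradiction; and the inequality $\cost(\widetilde A)+\cost(\widetilde B)\le\cost(A)+\cost(B)$ is not obvious (the union of the two effective-cost multisets is preserved, but $\cost$ is not linear in that multiset). You also do not address the case where both $i,i'$ lie on the same side of $A\triangle B$, where no such swap is available. The paper proceeds differently: it first makes $A$ and $B$ \emph{simple} (no upgraded supplier matched to a zero-demand customer), and then uses a direct LP perturbation---shifting $\varepsilon$ of mass along a $4$-cycle in the support of the fractional optimum---to strictly decrease the objective whenever two intervals in $A\triangle B$ are strictly nested; simplicity is exactly what guarantees the relevant demand is positive so the improvement is strict.

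Second, your case analysis for producing the next pair is not correct. Since $|A'|+|B'|=|A|+|B|$ and $\bigl||A'|-|B'|\bigr|\le 1$, the case $|A'|<k<|B'|$ is in fact vacuous for integer $k$, so you are always in ``both $\le k$'' or ``both $\ge k$''. In those cases the bound $\cost(A')+\cost(B')\le\cost(A)+\cost(B)$ alone does \emph{not} force $\min\{\cost(A'),\cost(B')\}$ to meet the LP optimum: you would need the stronger statement that some $X\in\{A',B'\}$ lies on or below the line through $(|A|,\cost(A))$ and $(|B|,\cost(B))$. The paper extracts exactly this via a short affine-combination argument and then shows that the \emph{mixed} pair $(A,X)$ or $(X,B)$ (not $(A',B')$) is again clean and optimal; this is what guarantees $|B|-|A|$ strictly decreases while the invariant is preserved.
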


Notice that the definition of $P(k)$ is independent of the costs and demands.
Since in general $P(k)$ does not coincide with the convex hull of its integer points, it may still have fractional vertices that are optimal with respect to other objective functions (not corresponding to our problem).
We also provide a sufficient conditions under which \emph{all optimal}
vertex solutions are integral.

\begin{theorem}\label{thm:AllOptimalVerticesAreInteger} If in an
instance of the multiplicative assignment problem all values $(b_{i})_{i\in I}$
and $(c_{i})_{i\in I}$ are pairwise different and also all values
$(d_{j})_{j\in J}$ are pairwise different, then every optimal vertex
of $P(k)$ is integral.\end{theorem}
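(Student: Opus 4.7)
The plan is to argue by contradiction. Suppose that $v^\star$ is an optimal vertex of $P(k)$ that is fractional. Since $P(k)$ is the bipartite perfect matching polytope intersected with the single halfspace $\sum_{(i,j)} x_{i,j}\le k$, every fractional vertex of $P(k)$ lies on the hyperplane $\sum_{(i,j)} x_{i,j}= k$ and arises as the intersection of that hyperplane with an edge of the matching polytope. Edges of the bipartite perfect matching polytope connect matchings differing along a single alternating cycle, so we may write $v^\star=\lambda v^A+(1-\lambda)v^B$ for some $\lambda\in(0,1)$, where $v^A,v^B$ are integral matching vertices corresponding to upgrade sets $A,B$ satisfying $|A|<k<|B|$ and $\lambda|A|+(1-\lambda)|B|=k$. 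This is exactly the structural setup used in the proof of Theorem~\ref{thmMain}.

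Next I would combine optimality of $v^\star$ with the convexity of $h$. Writing $M_A,M_B$ for the matchings given by $v^A,v^B$, Theorem~\ref{thmMain} tells us that the LP optimum over $P(k)$ equals $h(k)$, and the analogous statements for $P(|A|)$ and $P(|B|)$ give $\cost(M_A)\ge h(|A|)$ and $\cost(M_B)\ge h(|B|)$; the proof of Theorem~\ref{thmMain} also establishes that the piecewise-linear interpolation of $h$ is convex. Chaining,
\[
h(k)\;=\;\cost(v^\star)\;=\;\lambda\cost(M_A)+(1-\lambda)\cost(M_B)\;\ge\;\lambda h(|A|)+(1-\lambda)h(|B|)\;\ge\;h(k),
\]
so all inequalities are tight and $h$ is linear on the integer interval $[|A|,|B|]$. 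Since $|A|<k<|B|$ are integers, this forces $h(k-1)+h(k+1)=2h(k)$.

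The heart of the argument, and the main obstacle, is to show that under pairwise distinctness of the $(b_i)$, $(c_i)$ and $(d_j)$ the interpolation of $h$ is \emph{strictly} convex at every intermediate integer, contradicting the equality just derived. For this I would fix optimal upgrade sets $A_-$ of size $k-1$ and $A_+$ of size $k+1$ with $\cost(A_-)+\cost(A_+)=h(k-1)+h(k+1)$. Their symmetric difference has positive even cardinality, and by the structural lemma underlying Theorem~\ref{thmMain} the intervals $\{[b_i,c_i]:i\in A_-\symdiff A_+\}$ are pairwise non-nesting. Distributing $A_-\symdiff A_+$ alternately between the two sides produces new sets $A'_-,A'_+$ each of size exactly $k$ with
\[
\cost(A'_-)+\cost(A'_+)\;\le\;\cost(A_-)+\cost(A_+)\;=\;h(k-1)+h(k+1),
\]
and hence $2h(k)\le h(k-1)+h(k+1)$. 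The critical step is to upgrade this to a strict inequality. With pairwise distinct $(b_i),(c_i)$ the intervals are in fact \emph{strictly} non-nesting; with pairwise distinct demands together with distinct effective costs, the greedy sort in Lemma~\ref{lemComputeAssignmentOnly} yields a \emph{unique} optimal assignment for any upgrade set, so any nontrivial rearrangement of effective costs in the sorted sequence strictly changes the matching objective. A case analysis of the alternating redistribution of $A_-\symdiff A_+$ then shows that at least one swap in the balancing step strictly decreases the combined cost of the two assignments, yielding $2h(k)<h(k-1)+h(k+1)$ and the contradiction. The delicate point will be tracking, across the two assignments simultaneously, how a single interval swap in the alternation propagates through both sorted orders; this is where the distinctness hypothesis must be genuinely used, and it is what rules out the ties that can create fractional optimal vertices in the general case.
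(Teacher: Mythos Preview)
Your setup is correct and matches the paper: a fractional optimal vertex $v^\star$ lies on an edge of the perfect matching polytope, so $v^\star=\lambda\chi(M_A)+(1-\lambda)\chi(M_B)$ with $|A|<k<|B|$, and optimality together with convexity of $h$ forces $h$ to be affine on $[|A|,|B|]$, in particular $h(k-1)+h(k+1)=2h(k)$.

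The gap is the next step. Your plan is to derive a contradiction by proving that, under the distinctness hypotheses, $h$ is \emph{strictly} convex at $k$. This is false. Take $|I|=|J|=2$ with $(b_1,c_1)=(0,1)$, $(b_2,c_2)=(3,5)$, $d_1=2$, $d_2=1$; all $b_i,c_i$ are pairwise distinct and so are the $d_j$. One computes $h(0)=7$, $h(1)=5$, $h(2)=3$, so $h$ is exactly affine. In particular your proposed mechanism---that the alternating redistribution of $A_-\symdiff A_+$ strictly decreases $\cost(A_-)+\cost(A_+)$---fails here: with $A_-=\emptyset$, $A_+=\{1,2\}$ the redistribution produces $\{1\},\{2\}$ and $\cost(\{1\})+\cost(\{2\})=5+5=10=\cost(\emptyset)+\cost(\{1,2\})$. (The intervals $[0,1]$ and $[3,5]$ are disjoint, so no threshold is crossed by both, and the inequality in the redistribution lemma collapses to an equality.) Nonetheless Theorem~\ref{thm:AllOptimalVerticesAreInteger} still holds for this instance; you can check that no fractional vertex of $P(1)$ is optimal. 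The upshot is that by passing from the specific edge $\{M_A,M_B\}$ to the scalar function $h$ you throw away exactly the information needed to finish.

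The paper keeps this information. Instead of arguing via $h$, it stays with the single alternating cycle $C=M_A\symdiff M_B$ and proves (Lemma~\ref{lem:DifferenceOfOnePerCycle}) that under the distinctness hypotheses every cycle in $M_A\symdiff M_B$ satisfies $\bigl||A\cap C|-|B\cap C|\bigr|\le 1$. Since here $C$ is the unique cycle, this gives $\bigl||A|-|B|\bigr|\le 1$, contradicting $|A|<k<|B|$. The route to that cycle bound first shows the redistribution inequality is an \emph{equality} for the pair $(A,B)$ (the opposite of what you aim for), then uses distinct demands to deduce that at every threshold $t$ the counts $|\{i\in A:b_i<t<c_i\}|$ and $|\{i\in B:b_i<t<c_i\}|$ differ by at most one, and finally translates this into combinatorial constraints on how $M_A$ and $M_B$ interlace along the cycle.
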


We first focus on the proof of \Cref{thmMain}; the proof of \Cref{thm:AllOptimalVerticesAreInteger}
is given in \Cref{secAllIntegral}.
Assume we are given an instance of the multiplicative assignment problem.
First, we compute a vertex solution $(\bx,\by)$ of $P(k)$. If $(\bx,\by)$
is integral, we are done. Assume that this is not the case. Let $P$
denote the polytope we obtain by omitting the constraint $\sum\nolimits_{(i,j)\in I\times J}x_{i,j}\le k$
from our LP, i.e., omitting Constraint~\eqref{LP4}. Hence, $P$
is the convex hull of all (integral) vectors corresponding to perfect matchings. Therefore, $(\bx,\by)$
is \emph{not} a vertex of $P$. However, since $P(k)$ can be obtained from $P$ by adding only one
inequality, one can show that for $(\bx,\by)$ this inequality is tight and that
$(\bx,\by)$ lies
on an edge of $P$. Thus, it is a convex combination of two
vertices of $P$ which correspond to two perfect matchings, one upgrading less than $k$ suppliers and one upgrading more than $k$ suppliers. We compute
these points and matchings using the following lemma. Formally, for
any set \emph{$X\subseteq I$ }we denote by $M_{X}$ the optimal perfect
matching that \emph{upgrades a set $X\subseteq I$, }i.e.,\emph{ }such
that $X$ equals the set of suppliers that are incident to some red
edge in $M$ (if there is more than one such optimal matching we break ties arbitrarily). Also, for any perfect matching $M$ we denote
by $\chi(M)$ its corresponding vertex of $P$.

\begin{restatable}{lemma}{lemVertexEdgeMatchings}
\label{lemVertexEdgeMatchings}
Given a vertex $(\bx,\by)$ of $P(k)$, in polynomial
time we can compute two sets $A,B\subseteq I$ with $|A|<k<|B|$, their corresponding optimal
perfect matchings $M_{A}$ and $M_{B}$, and a value $\lambda\in(0,1)$
such that $(\bx,\by)=\lambda\cdot\chi(M_{A})+(1-\lambda)\cdot\chi(M_{B})$.
\end{restatable}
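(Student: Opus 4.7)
The plan is to exhibit $(\bar x, \bar y)$ as a nontrivial convex combination of two vertices of the perfect matching polytope $P$, i.e.\ of two perfect matchings $M_1, M_2$, and then to invoke the LP-optimality of $(\bar x, \bar y)$ to argue that $M_1, M_2$ are in fact optimal among matchings upgrading their respective supplier sets. First I verify that the constraint $\sum_{(i,j)\in I\times J} x_{i,j} \le k$ is tight at $(\bar x, \bar y)$: otherwise $(\bar x, \bar y)$ would be cut out locally by constraints of $P$ alone, forcing it to be a vertex of $P$ and hence integral, contrary to assumption. Next I show that the minimal face $F$ of $P$ containing $(\bar x, \bar y)$ is one-dimensional. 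Clearly $\dim F \ge 1$ since $(\bar x, \bar y)$ is not a vertex of $P$; and if $\dim F \ge 2$, the linear functional $d \mapsto \sum_{i,j} d_{x_{i,j}}$ has a nontrivial kernel within the directions parallel to $F$, yielding a nonzero $d$ such that $(\bar x, \bar y) \pm \varepsilon d \in P(k)$ for small $\varepsilon > 0$, contradicting that $(\bar x, \bar y)$ is a vertex of $P(k)$. Hence $F = [\chi(M_1), \chi(M_2)]$ for two perfect matchings $M_1, M_2$, and $(\bar x, \bar y) = \lambda \chi(M_1) + (1-\lambda) \chi(M_2)$ for a unique $\lambda \in (0,1)$.

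To extract $M_1, M_2$ and $\lambda$ algorithmically, I use that the symmetric difference $M_1 \triangle M_2$ is a single alternating cycle $C$ in the red/blue multigraph $G$ (single because $\dim F = 1$, as a decomposition into two or more cycles would allow independently toggled matchings and place $(\bar x, \bar y)$ in a higher-dimensional face of $P$). Coordinates of $(\bar x, \bar y)$ on $M_1 \cap M_2$ equal $1$, while the fractional coordinates are precisely the edges of $C$; so $C$ can be traced by starting from any fractional coordinate and alternately following incident fractional edges through each visited vertex. Pushing $(\bar x, \bar y)$ by $+\varepsilon$ and by $-\varepsilon$ along $C$, in each direction until a coordinate hits $0$ or $1$, yields $\chi(M_1)$ and $\chi(M_2)$ and the value $\lambda$. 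Let $A, B \subseteq I$ be the supplier sets upgraded by $M_1, M_2$; from $k = \lambda |A| + (1-\lambda) |B|$ with $\lambda \in (0,1)$, together with the observation that $|A| = |B|$ would force the entire edge $F$ to lie in the hyperplane $\{\sum x_{i,j} = k\}$ and thus make $(\bar x, \bar y)$ a relative-interior point, not a vertex, of $P(k)$, we obtain $|A| \ne |B|$ and, after possibly swapping $M_1$ and $M_2$, conclude $|A| < k < |B|$.

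The final step is to show that $M_1$ is an optimal perfect matching upgrading exactly $A$ (and analogously $M_2$ for $B$); we may then set $M_A := M_1$ and $M_B := M_2$, a legitimate choice since the definition of $M_X$ only requires tie-breaking among optimal matchings upgrading $X$. Suppose for contradiction that some matching $M'$ upgrades $A$ and has strictly smaller LP-objective value than $M_1$. Then $p := \lambda \chi(M') + (1-\lambda) \chi(M_2)$ is a convex combination of points in $P$, its $x$-coordinates sum to $\lambda |A| + (1-\lambda) |B| = k$ so $p \in P(k)$, and its LP-objective is strictly less than that of $(\bar x, \bar y) = \lambda \chi(M_1) + (1-\lambda) \chi(M_2)$, contradicting LP-optimality of $(\bar x, \bar y)$. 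I expect the main obstacle to be the dimension argument identifying $F$ as an edge of $P$, together with its algorithmic counterpart of locating the single alternating cycle $C$ in the multigraph $G$; once the edge decomposition is in hand, the optimality argument is a short convexity calculation.
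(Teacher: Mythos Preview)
Your proposal is correct and follows essentially the same approach as the paper: both argue that a non-integral vertex of $P(k)$ must lie in the relative interior of an edge of the matching polytope $P$, identify the endpoints via the single alternating cycle of fractional edges, and deduce $|A|<k<|B|$ from the tight upgrade constraint. The one notable addition is your final step, where you use LP-optimality of $(\bar x,\bar y)$ to certify that $M_1$ and $M_2$ are genuinely optimal among matchings upgrading $A$ and $B$; the paper's own proof of this lemma omits that verification (it is only needed implicitly, since the next lemma works with $\alpha\ge\cost(A)$ and $\beta\ge\cost(B)$ anyway), but your argument is a clean way to justify the word ``optimal'' in the statement.
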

A proof of above lemma is given in \Cref{secDeferredProofs}.
Note that any pair $A', B' \subseteq I$ with $|A'| < k < |B'|$ has a corresponding convex combination 
that upgrades exactly $k$ supplies. We denote
by $f_{A',B'}(k)$ the objective value of this 
fractional solution. We call $A', B'$ an \emph{optimal pair} if this solution is an optimal fractional solution, see formal definition below.

\begin{definition} Let $A',B'\subseteq I$. The pair $(A',B')$ is
an \emph{optimal pair} if $|A'|<k<|B'|$ and the optimal solution
value of the LP \eqref{LP1}-\eqref{LP6} equals
\[
f_{A',B'}(k):=\tfrac{|B'|-k}{|B'|-|A'|}\cost(A')+\tfrac{k-|A'|}{|B'|-|A'|}\cost(B').
\]
\end{definition}

To gain some intuition, if $(\bx,\by)$ upgrades (fractionally) exactly
$k$ suppliers, then $(\bx,\by)$ is the mentioned convex combination
by construction and it has an objective function value of $f_{A,B}(k)$.
Also, then $f_{A,B}(k)$ equals the optimal solution value of our LP since
$(\bx,\by)$ is an optimal LP-solution; hence $(A,B)$ is an optimal pair.
\begin{lemma}The pair $(A,B)$ is an optimal pair.

\end{lemma}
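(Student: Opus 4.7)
The plan is to verify the two conditions defining an optimal pair for the $(A,B)$ produced by \Cref{lemVertexEdgeMatchings}. The cardinality requirement $|A|<k<|B|$ is already part of that lemma, so the only remaining task is to show that $f_{A,B}(k)$ equals the optimal LP value. Since $(\bx,\by)$ is chosen as an optimal vertex of $P(k)$, the LP optimum is simply the objective value of $(\bx,\by)$, so it suffices to evaluate this objective and identify it with $f_{A,B}(k)$.

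The first step is to argue that constraint~\eqref{LP4} is tight at $(\bx,\by)$, i.e.~that $\sum_{(i,j)}\bx_{i,j}=k$. Removing \eqref{LP4} yields the perfect matching polytope $P$, whose vertices are integral by the Birkhoff--von Neumann theorem. If \eqref{LP4} were slack at $(\bx,\by)$, then every constraint of $P(k)$ tight at $(\bx,\by)$ would already be a constraint of $P$ tight there, so $(\bx,\by)$ would be a vertex of $P$ as well and hence integral---contradicting our standing assumption that $(\bx,\by)$ is fractional.

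Substituting the decomposition $(\bx,\by)=\lambda\,\chi(M_A)+(1-\lambda)\,\chi(M_B)$ from \Cref{lemVertexEdgeMatchings} into $\sum_{(i,j)}\bx_{i,j}=k$ and summing the red-edge entries on both sides gives $\lambda|A|+(1-\lambda)|B|=k$, which solves to $\lambda=(|B|-k)/(|B|-|A|)$ and $1-\lambda=(k-|A|)/(|B|-|A|)$. By construction, $M_A$ is an optimal matching among those upgrading exactly $A$, so the LP objective at $\chi(M_A)$ equals $\cost(A)$; analogously $\chi(M_B)$ attains $\cost(B)$. Linearity of the objective then yields
\[
\lambda\,\cost(A)+(1-\lambda)\,\cost(B)=f_{A,B}(k)
\]
for the LP value at $(\bx,\by)$, which by optimality of $(\bx,\by)$ is the LP optimum. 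Hence $(A,B)$ is an optimal pair.

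The only conceptually nontrivial step is the tightness argument: one must justify why a fractional vertex of $P(k)$ at which \eqref{LP4} is slack would also be a vertex of $P$. This rests on the fact that $P(k)$ is obtained from $P$ by adding a single linear inequality, so if that inequality is inactive at a point, the vertex-defining system of equalities and active inequalities reduces to constraints of $P$. Everything else is bookkeeping with linearity of the objective and the definitions of $\chi$ and $\cost(\cdot)$.
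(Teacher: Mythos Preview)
Your proof is correct and follows essentially the same route as the paper's. The only minor difference is that you explicitly justify the tightness of \eqref{LP4} and then use the lemma's claim that $M_A,M_B$ are optimal to get equality directly, whereas the paper writes the objective value as $\tfrac{|B|-k}{|B|-|A|}\alpha+\tfrac{k-|A|}{|B|-|A|}\beta\ge f_{A,B}(k)$ (with $\alpha,\beta$ the costs of the matchings on the edge of $P$) and leaves the reverse inequality, as well as the identification of $\lambda$, implicit.
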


\begin{proof} Let $\alpha$ denote the cost of $M_{A}$ and $\beta$
denote the cost of $M_{B}$. Since
$(\bx,\by)=\tfrac{|B|-k}{|B|-|A|}\chi(M_{A})+\tfrac{k-|A|}{|B|-|A|}\chi(M_{B})$
the optimum value of the LP is indeed equal to
\[
\tfrac{|B|-k}{|B|-|A|}\alpha+\tfrac{k-|A|}{|B|-|A|}\beta\ge\tfrac{|B|-k}{|B|-|A|}\cost(A)+\tfrac{k-|A|}{|B|-|A|}\cost(B)=f_{A,B}(k). \qedhere
\]
\end{proof}

It may happen that in $M_{A}$ (or in $M_{B}$) some (upgraded) supplier
in $A$ (or in $B$) is assigned to a customer with zero demand, contributing
a cost of zero to the objective function value. Intuitively, this
wastes the upgrade of this supplier since it would contribute a cost
of zero in the objective also if it were not upgraded. To avoid certain
technical complications, we would like to remove such suppliers from
$A$ and $B$. Formally, we would like $A$ and $B$ to be \emph{simple}, where we define a set $X\subseteq I$ to be \emph{simple} if in $M_{X}$
no supplier is assigned to a customer with zero demand. We can easily
make $A$ and $B$ simple by just removing iteratively suppliers that
are assigned to customers with zero demand. Formally, we apply the following
lemma to our set $(A,B)$
obtain a new simple optimal pair or even
directly an optimal integral vertex (in which case we are done).

\begin{restatable}{lemma}{lemSimplify}
\label{lemSimplify}
Let $(\hat A,\hat B)$ be an optimal pair.
In polynomial time we can compute sets $\hat A'\subseteq \hat A$ and $\hat B'\subseteq \hat B$
that are both simple such that $(\hat A',\hat B')$ is an optimal pair or $\chi(M_{\hat B'})$
is an optimal vertex.
\end{restatable}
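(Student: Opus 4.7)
The plan is to iteratively remove one supplier at a time from either $\hat A$ or $\hat B$ whenever it is matched, in the corresponding optimal matching $M_{\hat A}$ or $M_{\hat B}$, to a customer with zero demand. Two small ingredients carry the argument: a cost-invariance property for such removals, and a propagation-of-optimality argument. The interesting case is the boundary where removing an element from $\hat B$ drops $|\hat B|$ to exactly $k$, which is where the second alternative of the lemma kicks in.

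First I would establish that if $i \in X$ is matched in $M_X$ to some $j$ with $d_j = 0$, then $\cost(X \setminus \{i\}) = \cost(X)$. The bijection underlying $M_X$ is still valid when the upgraded set is $X \setminus \{i\}$, and its objective value is unchanged because the only affected edge $(i,j)$ contributes $0$ under both interpretations; the reverse inequality follows because upgrading one extra supplier can only decrease the cost of any fixed bijection (as $b_i \le c_i$). Next I would observe that any optimal pair $(\hat A,\hat B)$ automatically satisfies $\cost(\hat B) \le \cost(\hat A)$: since $|\hat A|<k$, the integer solution corresponding to $M_{\hat A}$ is LP-feasible, so $\cost(\hat A) \ge \mathrm{LP}\text{-opt} = f_{\hat A,\hat B}(k)$, and $f_{\hat A,\hat B}(k)$ being a strict convex combination of $\cost(\hat A)$ and $\cost(\hat B)$ forces $\cost(\hat B) \le \cost(\hat A)$.

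With these in hand, consider removing $i \in \hat A$ whose partner in $M_{\hat A}$ has zero demand, and set $\hat A' := \hat A \setminus \{i\}$. Then $|\hat A'| < k$, $\cost(\hat A') = \cost(\hat A)$, and a direct calculation yields
\[
f_{\hat A',\hat B}(k) - f_{\hat A,\hat B}(k) \;=\; \frac{|\hat B|-k}{(|\hat B|-|\hat A|+1)(|\hat B|-|\hat A|)}\bigl(\cost(\hat B)-\cost(\hat A)\bigr) \;\le\; 0.
\]
Since $f_{\hat A',\hat B}(k)$ is the value of an explicit feasible LP point it is at least the LP optimum, so equality holds and $(\hat A',\hat B)$ is again an optimal pair. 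The symmetric computation works for removing $i \in \hat B$ as long as $|\hat B \setminus \{i\}| > k$. If instead such a removal would give $|\hat B \setminus \{i\}| = k$, then $M_{\hat B \setminus \{i\}}$ is itself a $k$-upgrade integer matching with $\cost(\hat B\setminus\{i\}) = \cost(\hat B)$; the convexity estimate above together with $\cost(\hat B) \le \cost(\hat A)$ gives $\cost(\hat B) \le f_{\hat A,\hat B}(k) = \mathrm{LP}\text{-opt}$, while LP-feasibility gives the reverse bound, so $\chi(M_{\hat B\setminus\{i\}})$ is an optimal vertex and we return it.

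Iterating strictly decreases $|\hat A|+|\hat B|$, and each step only requires computing an optimal matching for a fixed upgrade set via \Cref{lemComputeAssignmentOnly}, giving a polynomial-time procedure that terminates either with both sets simple (the first alternative) or at the boundary case (the second alternative). The only real obstacle is verifying the convex-combination identity above and cleanly handling the $|\hat B|=k$ boundary; everything else is routine.
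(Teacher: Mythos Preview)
Your proposal is correct and follows essentially the same route as the paper: iterative removal of upgraded suppliers matched to zero-demand customers, the cost-invariance observation, the inequality $\cost(\hat B)\le\cost(\hat A)$, and a convex-combination comparison to propagate optimality (the paper does the removals in one batch and then proves a single chain of inequalities, but the content is identical). One cosmetic point: in the boundary case you output $\hat B\setminus\{i\}$ with $|\hat B\setminus\{i\}|=k$, which need not itself be simple as the lemma literally requires; continuing the removals on the $\hat B$-side past this point only shrinks the set further while preserving $\cost$ and LP-feasibility, so the fix is immediate.
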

\begin{proof}
    By iteratively removing suppliers that are assigned to customers with zero demand, we obtain simple sets $\hat A' \subseteq \hat A$ and $\hat B' \subseteq \hat B$ with $\cost(\hat A') = \cost(\hat A)$ and $\cost(\hat B') = \cost(\hat B)$.
	Let $\lambda = \frac{k - |\hat A'|}{|\hat B'| - |\hat A'|}$.
    Since $|\hat A'| < k$, the optimum LP value is at most $\cost(\hat A')$ and hence
    \[
        \cost(\hat A') \ge f_{\hat A',\hat B'}(k) = (1-\lambda) \cost(\hat A') + \lambda \cost(\hat B'),
    \]
    which yields $\cost(\hat A') \ge \cost(\hat B')$.
    If $|\hat B'| \ge k$, we thus have
    \begin{align}
        f_{\hat A',\hat B'}(k) & = \cost(\hat B') + \tfrac{|\hat B'| - k}{|\hat B'| - |\hat A'|} (\cost(\hat A') - \cost(\hat B')) \nonumber \\
        & = \cost(\hat B) + \tfrac{|\hat B'| - k}{|\hat B'| - |\hat A'|} (\cost(\hat A) - \cost(\hat B)) \nonumber \\
        & \le \cost(\hat B) + \tfrac{|\hat B'| - k}{|\hat B'| - |\hat A|} (\cost(\hat A) - \cost(\hat B)) \nonumber \\
        & = \cost(\hat A) + \tfrac{k - |\hat A|}{|\hat B'| - |\hat A|} (\cost(\hat B) - \cost(\hat A)) \nonumber \\
        & \le \cost(\hat A) + \tfrac{k - |\hat A|}{|\hat B| - |\hat A|} (\cost(\hat B) - \cost(\hat A)) = f_{\hat A,\hat B}(k).\label{eq:simpleLastStep}
    \end{align}
    If $|\hat B'| > k$, this implies that $(\hat A',\hat B')$ is also optimal.
    Otherwise, let $\hat B' \subseteq \hat B'' \subseteq \hat B$ such that $|\hat B''| = k$.
    Since $\cost(\hat B'') = \cost(\hat B)$, we may replace $\hat B'$ by $\hat B''$ in the above inequalities to obtain
    \[
        f_{\hat A,\hat B}(k) \ge f_{\hat A,\hat B''}(k) = \cost(\hat B'') = \cost(\hat B'),
    \]
    which matches the second case of the claim.
\end{proof}

Due to \Cref{lemSimplify} we can assume that our optimal pair $(A,B)$
is simple. We will show next that this implies that $(A,B)$ is \emph{clean},
which we define as follows.

\begin{definition} Let $A',B'\subseteq I$. The pair $(A',B')$ is
\emph{clean }if the symmetric difference $A'\symdiff B'=(A'\setminus B')\cup(B'\setminus A')$
does not contain two suppliers $i,i'$ with $b_{i}<b_{i'}$ and $c_{i'}<c_{i}$.

\end{definition}

Thinking of each supplier $i$ as the interval $[b_{i},c_{i}]$, this
means that a pair $(A',B')$ is clean if $A'\symdiff B'$ does not
contain an interval that is contained in the interior of
another interval from $A'\symdiff B'$.

\begin{restatable}{lemma}{lemNice}
\label{lemNice}
If $(A',B')$ is an optimal pair
where both $A',B'$ are simple, then $(A',B')$ is clean. 
\end{restatable}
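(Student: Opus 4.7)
The plan is to argue by contradiction. Assume $(A',B')$ is an optimal simple pair that is not clean, and pick $i,i'\in A'\symdiff B'$ with $b_i<b_{i'}$ and $c_{i'}<c_i$; write $\Delta_i:=c_i-b_i$ and $\Delta_{i'}:=c_{i'}-b_{i'}$, so the nested hypothesis gives $\Delta_i>\Delta_{i'}$.

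The first step will be to extract Lagrangian information from optimality of the pair. LP duality applied to the single inequality~\eqref{LP4} produces a multiplier $\mu^*\ge 0$ such that both $M_{A'}$ and $M_{B'}$ are minimum-weight perfect matchings under the modified edge weight $\min\{b_id_j+\mu^*,\,c_id_j\}$ (attained by choosing the cheaper color on each used edge), with $\cost(A')+\mu^*|A'|=\cost(B')+\mu^*|B'|$ equal to this minimum. Setting $\tau_i:=\mu^*/\Delta_i$ and comparing the color used by $M_X$ at each supplier with the cheaper one available at its matched customer, one obtains the threshold characterization for $X\in\{A',B'\}$: if $i\in X$ then the customer matched to $i$ in $M_X$ has demand $\ge\tau_i$, and if $i\notin X$ this demand is $\le\tau_i$. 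Nestedness gives $\tau_i<\tau_{i'}$, and simplicity of $X$ forces the demand to be strictly positive whenever $i\in X$.

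The representative case is the \emph{opposite-side} one, say $i\in A'\setminus B'$ and $i'\in B'\setminus A'$. Let $j$ and $j'$ denote the customers matched to $i$ and $i'$ in $M_{B'}$. I plan to modify $M_{B'}$ by swapping the partners of $i$ and $i'$ and flipping their colors accordingly: in the new matching, assign $i$ red to $j'$ and $i'$ blue to $j$, keep every other edge of $M_{B'}$ unchanged, and take $(B'\setminus\{i'\})\cup\{i\}$ as the new upgrade set. A direct bookkeeping shows that the Lagrangian cost of this new matching differs from that of $M_{B'}$ by exactly
\[
(b_i-b_{i'})\, d_{j'} + (c_{i'}-c_i)\, d_{j};
\]
both coefficients are strictly negative by nestedness, and $d_{j'}>0$ by simplicity of $B'$ (since $i'\in B'$). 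Hence the new matching has strictly smaller Lagrangian cost than $M_{B'}$, contradicting Lagrangian optimality of $M_{B'}$. The symmetric opposite-side case $i\in B'\setminus A'$, $i'\in A'\setminus B'$ is handled by the analogous construction inside $M_{A'}$, using positivity of the customer demand of $i'$ in $M_{A'}$.

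The \emph{same-side} cases $i,i'\in A'\setminus B'$ (and symmetrically both in $B'\setminus A'$) are the main technical obstacle. There a direct swap of $i$ and $i'$ inside $M_{A'}$ produces only a nonnegative Lagrangian change, tight with the greedy ordering underlying $M_{A'}$, so no single-matching swap strictly improves. I plan to break the tie by leveraging two extra consequences of $(A',B')$ being an optimal pair: first, that $A'$ and $B'$ each minimize $\cost$ among all sets of their respective cardinalities (otherwise $(\hat A,B')$ with a strictly cheaper $\hat A$ of size $|A'|$ would give a smaller $f_{A',B'}(k)$, contradicting LP optimality); and second, the cardinality bound $|B'\setminus A'|\ge|A'\setminus B'|+(|B'|-|A'|)\ge 3$, guaranteeing an auxiliary element $i''\in B'\setminus A'$. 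A two-stage exchange that first swaps $i$ with $i''$ between $A'$ and $B'$ reduces the nested configuration for the remaining pair to the opposite-side setting treated above. The subtlety lies in verifying that this preparatory exchange keeps us at an optimal pair, so that the threshold and simplicity properties still apply when invoking the opposite-side swap; this is where the detailed case analysis concentrates.
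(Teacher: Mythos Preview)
Your opposite-side argument is correct and in fact coincides with the paper's computation: when $i$ and $i'$ lie on different sides of $A'\symdiff B'$, the red edge at the inner interval and the blue edge at the outer interval both live inside the \emph{same} matching ($M_{A'}$ or $M_{B'}$), and your $4$-cycle swap strictly lowers the Lagrangian cost.

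The same-side case, however, has a genuine gap. Your two-stage exchange is not justified: you never show that after swapping $i$ with an auxiliary $i''$ the resulting pair is still optimal and simple, which is exactly what you would need in order to re-invoke the threshold characterization and the opposite-side argument. Worse, the symmetric sub-case $i,i'\in B'\setminus A'$ can occur with $A'\setminus B'=\emptyset$ (take any $A'\subsetneq B'$ with $|A'|<k<|B'|$), and then there is no auxiliary element $i''\in A'\setminus B'$ at all, so your reduction cannot even begin.

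The idea you are missing is to work directly with the \emph{fractional} LP optimum
\[
(\bar x,\bar y)=\tfrac{|B'|-k}{|B'|-|A'|}\,\chi(M_{A'})+\tfrac{k-|A'|}{|B'|-|A'|}\,\chi(M_{B'})
\]
rather than with a single matching. Regardless of which side $i$ and $i'$ lie on, the inner interval ($i'$ in your labeling) is upgraded in \emph{one} of $M_{A'},M_{B'}$, so $x_{i',j'}>0$ for some $j'$ with $d_{j'}>0$ by simplicity; and the outer interval $i$ is \emph{not} upgraded in one of them, so $y_{i,j}>0$ for some $j$. A single $4$-cycle perturbation of $(\bar x,\bar y)$---decrease $x_{i',j'}$ and $y_{i,j}$ by $\varepsilon$, increase $x_{i,j'}$ and $y_{i',j}$ by $\varepsilon$---preserves all LP constraints including~\eqref{LP4} and changes the objective by
\[
\varepsilon\bigl((b_i-b_{i'})\,d_{j'}+(c_{i'}-c_i)\,d_j\bigr)<0,
\]
contradicting LP optimality. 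This handles all four side-combinations uniformly; no case distinction, no weakly-optimal or cardinality observations are needed.
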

\begin{proof} 
Suppose for the sake of contradiction that there are
$i,i'\in A\symdiff B$ with $b_{i'}<b_{i}$ and $c_{i}<c_{i'}$.
Intuitively, this leads to a contradiction because there is an optimal fractional solution that selects both $i$ and $i'$ non-integrally, yet upgrading $i'$ is strictly better than upgrading $i$. Thus, by an exchange argument the supposedly optimal fractional solution can be improved.
Below, we formalize this intuition.

As
$i\in A\symdiff B$, there is a blue edge in $(M_{A}\cup M_{B})$
connecting $i$ to some $j\in J$. As $i'\in A\symdiff B$, there
is red edge in $(M_{A}\cup M_{B})$ connecting $i'$ to
some $j'\in J$. Since $A$ and $B$ are simple, we have that
\begin{equation}
d_{j'}>0\label{eqw47g34dfsfsd}
\end{equation}
holds. Letting $(\bx,\by)=\tfrac{|B|-k}{|B|-|A|}\chi(M_{A})+\tfrac{k-|A|}{|B|-|A|}\chi(M_{B})$,
notice that $y_{i,j},x_{i',j'}>0$ and hence $x_{i,j'},y_{i',j}<1$.
Thus, there exists some $\varepsilon>0$ such that all entries of
$(\tilde{\bx},\tilde{\by})$ defined via
\[
\tilde{x}_{e}\coloneqq\begin{cases}
x_{e}+\varepsilon & \text{if }e=(i,j')\\
x_{e}-\varepsilon & \text{if }e=(i',j')\\
x_{e} & \text{else}
\end{cases}\qquad\text{and}\qquad\tilde{y}_{e}\coloneqq\begin{cases}
y_{e}+\varepsilon & \text{if }e=(i',j)\\
y_{e}-\varepsilon & \text{if }e=(i,j)\\
y_{e} & \text{else}
\end{cases}
\]
are in $[0,1]$. Notice that $(\tilde{\bx},\tilde{\by})$ is a valid
LP solution. The difference of the objective values of $(\tilde{\bx},\tilde{\by})$
and $(\bx,\by)$ is equal to
\[
\varepsilon\cdot(b_{i}d_{j'}+c_{i'}d_{j}-c_{i}d_{j}-b_{i'}d_{j'})<\varepsilon\cdot(b_{i'}d_{j'}+c_{i}d_{j}-c_{i}d_{j}-b_{i'}d_{j'})=0,
\]
where the strict inequality follows from~\eqref{eqw47g34dfsfsd}.
This means that $(\bx,\by)$ is not an optimal LP solution, a contradiction
to the optimality of $A,B$. \end{proof}

Thus, from now on we may assume that $(A,B)$ is a clean and optimal
pair, and we will maintain this property when we adjust $A$ and $B$
in the following.
The fact that $(A,B)$ is clean allows us to order the elements (interpreted
as intervals) in $A\symdiff B$: By relabeling the suppliers, we may
assume that $A\symdiff B=\{1,\dots,\ell\}\eqqcolon[\ell]$ where $b_{1}\le b_{2}\le\dots\le b_{\ell}$
and $c_{1}\le c_{2}\le\dots\le c_{\ell}$.
Given this ordering, we define
two new sets $A'$, $B'$ that both contain $A\cap B$, and we redistribute the suppliers in $A\symdiff B$
alternatively. Formally, we define $A'\coloneqq(A\cap B)\cup\{2s:s\in[\lfloor\ell/2\rfloor]\}$
and $B'\coloneqq(A\cap B)\cup\{2s-1:s\in[\lceil\ell/2\rceil]\}$.
An example is shown in \Cref{fig:redistribution}.
We call $(A',B')$ a \emph{redistribution} of $(A,B)$. Our main technical
contribution is the following lemma which we will prove in \Cref{subsec:Proof-redistribution-lemma}.

\begin{figure}
    \centering
    \includegraphics[width=0.8\linewidth]{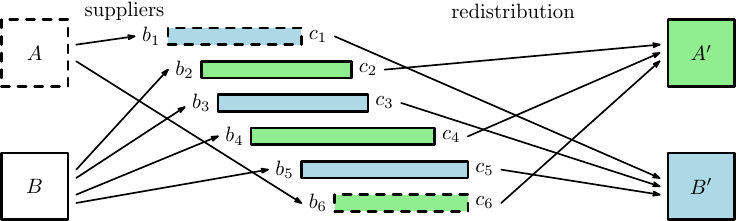}
    \caption{Illustration for the redistribution. The figure shows only the suppliers in $A\symdiff B$.
    The dotted and solid rectangles are suppliers in $A$ and $B$, respectively. The green and blue rectangles are suppliers in $A'$ and $B'$ after redistribution.}
    \label{fig:redistribution}
\end{figure}

\begin{lemma}[Redistribution lemma]\label{lem:ConvexityAfterRearrangment}
Let $(A',B')$ be a redistribution of a clean optimal pair $(A,B)$.
We have $|A|<|A'|\le|B'|<|B|$ and $\cost(A')+\cost(B')\le\cost(A)+\cost(B)$.
\end{lemma}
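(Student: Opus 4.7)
The cardinality claim $|A| < |A'| \le |B'| < |B|$ is an elementary counting argument. Writing $a = |A \setminus B|$ and $b = |B \setminus A|$, we have $a + b = \ell$, while $|A| < k < |B|$ together with integrality forces $b - a \ge 2$. Hence $a \le \lfloor (\ell - 2)/2 \rfloor < \lfloor \ell/2 \rfloor$ and $b > \lceil \ell/2 \rceil$, which, combined with $|A'| = |A \cap B| + \lfloor \ell/2 \rfloor$ and $|B'| = |A \cap B| + \lceil \ell/2 \rceil$, yields the bound.

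For the cost inequality, my plan is to pass to the explicit formula $\cost(X) = \sum_{t=1}^{n} e^X_{(t)} d_t$, where $e^X_i := b_i$ if $i \in X$ and $c_i$ otherwise, $e^X_{(1)} \le \dots \le e^X_{(n)}$ is sorted increasingly, and $d_1 \ge \dots \ge d_n$ is sorted decreasingly. Setting $u_t := e^A_{(t)} + e^B_{(t)}$ and $u'_t := e^{A'}_{(t)} + e^{B'}_{(t)}$, the target becomes $\sum_{t=1}^{n} (u_t - u'_t) d_t \ge 0$. A key observation is that $A' \cap B' = A \cap B$ and $A' \symdiff B' = A \symdiff B$, so for every $i$ the pointwise sum $e^A_i + e^B_i$ equals $e^{A'}_i + e^{B'}_i$ (both are $2 b_i$, $2 c_i$, or $b_i + c_i$ depending on the classification of $i$). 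In particular $\sum_t u_t = \sum_t u'_t$, and an Abel summation, using that $d_t$ is nonincreasing with $d_n \ge 0$, reduces the claim to the partial-sum inequality $\sum_{t=1}^{s} u_t \ge \sum_{t=1}^{s} u'_t$ for every $s \in [n]$.

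To prove this I would invoke the layer-cake representation $\sum_{t=1}^{s} e^X_{(t)} = \int_0^\infty g(f_X(\theta))\, d\theta$, where $g(x) := \max(0, s - x)$ is convex and $f_X(\theta) := |\{i : e^X_i \le \theta\}|$. The same per-supplier observation as above yields $f_A(\theta) + f_B(\theta) = f_{A'}(\theta) + f_{B'}(\theta)$ for every $\theta$. By convexity of $g$ and the equal-sum condition, the pointwise inequality $g(f_A(\theta)) + g(f_B(\theta)) \ge g(f_{A'}(\theta)) + g(f_{B'}(\theta))$ is then equivalent to the dispersion inequality $|f_A(\theta) - f_B(\theta)| \ge |f_{A'}(\theta) - f_{B'}(\theta)|$: indeed, for convex $g$ and fixed $x + y = S$, the map $x \mapsto g(x) + g(S - x)$ is convex in $x$, hence monotone in $|x - y|$. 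Integrating over $\theta$ would then yield the partial-sum inequality.

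The hardest step is this dispersion inequality, and this is exactly where cleanness of $(A, B)$ enters. A direct computation (all terms from $\{i : c_i \le \theta\}$ cancel) gives
\[
f_A(\theta) - f_B(\theta) = |(A \setminus B) \cap U^*(\theta)| - |(B \setminus A) \cap U^*(\theta)|,
\]
where $U^*(\theta) := \{i \in A \symdiff B : b_i \le \theta < c_i\}$, and similarly $f_{A'}(\theta) - f_{B'}(\theta) = |\{i \in U^*(\theta) : i \text{ even}\}| - |\{i \in U^*(\theta) : i \text{ odd}\}|$. By cleanness we have $b_1 \le \dots \le b_\ell$ and $c_1 \le \dots \le c_\ell$, so $U^*(\theta)$ is a \emph{contiguous} range of consecutive indices in $[\ell]$; hence its even/odd counts differ by at most $1$ (exactly by $1$ if $|U^*(\theta)|$ is odd, by $0$ otherwise). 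In contrast, the $(A\setminus B)/(B\setminus A)$ counts partition $U^*(\theta)$, so their difference has the same parity as $|U^*(\theta)|$ and absolute value at least $0$ (even case) or at least $1$ (odd case), in either case matching or exceeding the even/odd difference. This establishes the dispersion inequality and completes the argument.
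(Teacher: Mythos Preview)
Your proof is correct and shares the same combinatorial core with the paper's argument---namely, that cleanness makes the set of ``active'' suppliers $U^*(\theta)=\{i\in A\symdiff B:b_i\le\theta<c_i\}$ a contiguous block of $[\ell]$, so that the alternating redistribution is the most balanced possible split of it---but the packaging is genuinely different. The paper introduces truncated instances $\Pi_t$ and computes the marginal change $\cost(S,t_h)-\cost(S,t_{h+1})$ explicitly as $(t_h-t_{h+1})\sum_{j} d_j$ over a range of customers determined by $|I_{\le}|$ and $|S_*|$; the comparison then becomes an inequality between sums of demands (their inequality~\eqref{eq483gj94gj}), which is verified by a direct index-shifting calculation using $p'=\lfloor s/2\rfloor$, $q'=\lceil s/2\rceil$. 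You instead apply Abel summation on the demand side and a layer-cake integral on the cost side, reducing everything to the pointwise Schur-convexity statement $g(f_A(\theta))+g(f_B(\theta))\ge g(f_{A'}(\theta))+g(f_{B'}(\theta))$ for the convex function $g(x)=\max(0,s-x)$, together with the dispersion bound $|f_A-f_B|\ge|f_{A'}-f_{B'}|$. Your route is more conceptual and reusable (the same template would handle any objective that is Schur-convex in the sorted effective costs), whereas the paper's hands-on computation makes the role of the demand ordering completely explicit and avoids any appeal to convexity or majorization.
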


\begin{restatable}{corollary}{corRedistributionNiceOptimal}
\label{corRedistributionNiceOptimal}
Let $(A',B')$
be a redistribution of a clean optimal pair $(A,B)$. Then one of the
following holds:
\begin{itemize}
\item[a)] $\chi(M_{A'})$ or $\chi(M_{B'})$ is an optimal (integral) vertex
or
\item[b)] one of $(A,B')$, $(A,A')$, $(A',B)$, $(B',B)$ is a clean optimal
pair.
\end{itemize}
\end{restatable}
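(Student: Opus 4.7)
The plan is to leverage two consequences of the redistribution lemma: the size identity $|A'| + |B'| = |A| + |B|$ (immediate from the definition of the redistribution) and the cost inequality $\cost(A') + \cost(B') \le \cost(A) + \cost(B)$. Writing $f(x) := \cost(A) + \tfrac{\cost(B)-\cost(A)}{|B|-|A|}(x-|A|)$ for the linear interpolation through $(|A|, \cost(A))$ and $(|B|, \cost(B))$, optimality of $(A,B)$ says that $f(k)$ equals the LP optimum. Define the slacks $\delta_{A'} := f(|A'|) - \cost(A')$ and $\delta_{B'} := f(|B'|) - \cost(B')$. Because $f$ is linear and $|A'| + |B'| = |A| + |B|$, we have $f(|A'|) + f(|B'|) = \cost(A) + \cost(B)$; combined with the cost inequality this yields $\delta_{A'} + \delta_{B'} \ge 0$, so at least one of the slacks is non-negative.

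Next I would set up a uniform sufficient criterion for the candidate pairs. For any $(X,Y) \in \{(A,A'), (A,B'), (A',B), (B',B)\}$ one checks directly from the definition of the redistribution that $X \symdiff Y \subseteq A \symdiff B$, so cleanliness of $(A,B)$ is inherited. Suppose additionally $|X| < k < |Y|$ together with $\cost(X) \le f(|X|)$ and $\cost(Y) \le f(|Y|)$. Then $f_{X,Y}$ is the linear interpolation through two points lying below $f$ at their respective abscissae, so by linearity $f_{X,Y}(k) \le f(k)$; combined with the LP lower bound $f_{X,Y}(k) \ge f(k)$ (the convex combination $\tfrac{|Y|-k}{|Y|-|X|}\chi(M_X) + \tfrac{k-|X|}{|Y|-|X|}\chi(M_Y)$ is LP-feasible) we obtain equality, certifying $(X,Y)$ as a clean optimal pair. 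The required inequality $\cost(X) \le f(|X|)$ is automatic whenever $X \in \{A, B\}$ and equivalent to $\delta_X \ge 0$ when $X \in \{A', B'\}$.

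A short case distinction based on where $k$ sits among $|A'| \le |B'|$ finishes the argument. If $k < |A'|$, pair $(A, A')$ works provided $\delta_{A'} \ge 0$; otherwise $\delta_{B'} > 0$ and pair $(A, B')$ works (note $k < |A'| \le |B'|$). The case $k > |B'|$ is entirely symmetric using $(A', B)$ and $(B', B)$. If $|A'| < k < |B'|$, pair $(A', B)$ works when $\delta_{A'} \ge 0$ and pair $(A, B')$ works when $\delta_{B'} \ge 0$. The main technical wrinkle is the boundary $k \in \{|A'|, |B'|\}$, where no pair satisfies the strict bracketing: if $k = |A'|$, then $M_{A'}$ is itself a feasible integer matching and the inequality $\cost(A') \le f(|A'|) = f(k)$ (which holds precisely when $\delta_{A'} \ge 0$) forces $\chi(M_{A'})$ to be an optimal integer vertex, landing in case (a); if $\delta_{A'} < 0$ instead, then $\delta_{B'} > 0$ and $|A'| < |B'|$, so pair $(A, B')$ again satisfies $|A| < k < |B'|$ strictly and the criterion applies. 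The sub-subcase $|A'| = |B'| = k$ is handled directly: feasibility yields $\cost(A'), \cost(B') \ge f(k)$ while $\cost(A') + \cost(B') \le \cost(A) + \cost(B) = 2f(k)$ forces both to equal $f(k)$, so both $\chi(M_{A'})$ and $\chi(M_{B'})$ are optimal integer vertices. The sign inequality $\delta_{A'} + \delta_{B'} \ge 0$ is precisely the ingredient that makes every case go through, and chasing these boundary cases is where essentially all of the bookkeeping sits.
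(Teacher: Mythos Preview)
Your proof is correct and follows essentially the same strategy as the paper: both hinge on the observation that $\delta_{A'} + \delta_{B'} \ge 0$ (the paper phrases this as $s\cdot\cost(X) + t\cdot|X| \le r$ for some $X\in\{A',B'\}$), and then pair the ``good'' set $X$ with whichever of $A,B$ brackets $k$. The only difference is organisational: the paper first fixes the $X$ with non-negative slack and then does a three-way case split on $|X|$ versus $k$, whereas you split first on the position of $k$ relative to $|A'|,|B'|$ and then on which slack is non-negative, which produces a few more sub-cases but no new ideas.
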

\begin{proof}
    Let $s \coloneqq |B| - |A| > 0$, $t \coloneqq \cost(A) - \cost(B)$, and $r \coloneqq |B| \cost(A) - |A| \cost(B)$.
    By the construction of $A'$ and $B'$, we have $|A'| + |B'| = |A| + |B|$.
    Moreover, by \Cref{lem:ConvexityAfterRearrangment}, we know that $\cost(A') + \cost(B') \le \cost(A) + \cost(B)$ holds.
    We obtain
    \[
        s \cdot (\cost(A') + \cost(B')) + t \cdot (|A'| + |B'|)
        \le s \cdot (\cost(A) + \cost(B)) + t \cdot (|A| + |B|) = 2r.
    \]
    This means that some $X \in \{A',B'\}$ must satisfy $s \cdot \cost(X) + t \cdot |X| \le r$.
    If $|X| = k$, then this yields $\cost(X) \le \frac{r - tk}{s} = f_{A,B}(k)$, as claimed in a).
    If $|X| < k < |B|$, then
    \begin{align*}
        f_{X,B}(k)
        & = \cost(B) + \tfrac{|B| - k}{|B| - |X|} (\cost(X) - \cost(B)) \\
        & \le \cost(B) + \tfrac{|B| - k}{|B| - |X|} (\tfrac{r - tk}{s} - \cost(B)) \\
        & = \cost(B) + \tfrac{|B| - k}{|B| - |A|} \tfrac{|B| - k}{|B| - |X|} (\cost(A) - \cost(B)) \\
        & \le \cost(B) + \tfrac{|B| - k}{|B| - |A|} (\cost(A) - \cost(B)) \\
        & = f_{A,B}(k),
    \end{align*}
    and hence $(X,B)$ is optimal.
    If $|A| < k < |X|$, we analogously obtain $f_{A,X}(k) \le f_{A,B}(k)$, and hence $(A,X)$ is optimal.
    Notice that $A \symdiff X \subseteq A \symdiff B$ and $X \symdiff B \subseteq A \symdiff B$, and hence both $(A,X)$ and $(X,B)$ are clean.
    So, in both cases we obtain b).
\end{proof}

Note that in a) we are done (we can check this by comparing the corresponding
costs to the cost of $(\bx,\by)$). Otherwise, we replace $(A,B)$
by the new clean optimal pair that we denote by $(\hat{A},\hat{B})$.
Since $|A|<|A'|\le|B'|<|B|$ the difference $|\hat{B}|-|\hat{A}|$
is strictly smaller than the difference $|B|-|A|$. Thus, we can conclude:

\begin{lemma}\label{lem:one-iteration}Given a clean optimal pair $(A,B)$,
in polynomial time we can compute an optimal integral vertex of $P(k)$
or a clean optimal pair $(\hat{A},\hat{B})$ such that $|\hat{B}|-|\hat{A}|<|B|-|A|$.
\end{lemma}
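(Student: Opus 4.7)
The plan is to derive this lemma directly from the redistribution machinery already established, in particular from \Cref{lem:ConvexityAfterRearrangment} and \Cref{corRedistributionNiceOptimal}. Almost all of the conceptual work has been done; what remains is to assemble the pieces and verify the strict decrease in the gap.

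First I would compute, in polynomial time, the redistribution $(A',B')$ of $(A,B)$: since $(A,B)$ is clean, the suppliers in $A \symdiff B$ can be sorted simultaneously by $b_i$ and by $c_i$, and the definition of redistribution is then just an alternating assignment. Next I would apply \Cref{corRedistributionNiceOptimal} to $(A,B)$ and $(A',B')$. In case a) of that corollary, one of $\chi(M_{A'})$ or $\chi(M_{B'})$ is an optimal integral vertex. We can identify which one (and certify integrality of the optimum) by computing $\cost(A')$ and $\cost(B')$ via \Cref{lemComputeAssignmentOnly} and comparing them with the LP optimum value $f_{A,B}(k)$; whichever cost matches gives the desired optimal integral vertex and we are done. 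In case b), one of the four pairs $(A,B')$, $(A,A')$, $(A',B)$, $(B',B)$ is a clean optimal pair; we test each candidate by computing the corresponding $f$-value and comparing it with $f_{A,B}(k)$, and we take $(\hat A, \hat B)$ to be any candidate that matches. By definition of an optimal pair, this $(\hat A, \hat B)$ automatically satisfies $|\hat A| < k < |\hat B|$.

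The remaining step is to verify $|\hat B| - |\hat A| < |B| - |A|$. By \Cref{lem:ConvexityAfterRearrangment} we have the chain $|A| < |A'| \le |B'| < |B|$. For each of the four possible pairs this immediately gives a strict decrease of the gap: $|B'| - |A| < |B| - |A|$, $|A'| - |A| < |B| - |A|$, $|B| - |A'| < |B| - |A|$, and $|B| - |B'| < |B| - |A|$. Hence whichever pair is chosen as $(\hat A, \hat B)$, the gap strictly decreases.

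For the running time, computing a redistribution requires only sorting $A \symdiff B$, computing costs reduces to a bounded number of calls to \Cref{lemComputeAssignmentOnly}, and checking optimality amounts to comparing a constant number of scalar values. Hence the whole procedure is polynomial. The main obstacle of the overall argument sits not in this lemma but in its two prerequisites: \Cref{lem:ConvexityAfterRearrangment} (the nontrivial inequality $\cost(A')+\cost(B') \le \cost(A)+\cost(B)$ together with the size bounds) and the case analysis in \Cref{corRedistributionNiceOptimal}; assuming these, the present lemma follows as a straightforward packaging.
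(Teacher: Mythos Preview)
Your proposal is correct and follows essentially the same approach as the paper: compute the redistribution, apply \Cref{corRedistributionNiceOptimal}, output the optimal vertex in case a), and otherwise take whichever of the four candidate pairs is clean and optimal, using the size chain $|A|<|A'|\le|B'|<|B|$ from \Cref{lem:ConvexityAfterRearrangment} to get the strict gap decrease. The paper's own argument is just the short paragraph immediately preceding the lemma statement, and your write-up spells out the same reasoning with slightly more detail on how to detect which case applies.
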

Given the initial optimal pair $(A,B)$, we apply \Cref{lem:one-iteration}
at most $|I|$ times and eventually obtain an optimal integral vertex
of $P(k)$. This completes the proof of \Cref{thmMain}.

Above, we solved the LP to compute the fractional vertex $(\bx,\by)$
and, based on it, the initial optimal pair. In \Cref{secCombinatorial}, we
will see that we can instead also compute the initial optimal pair
by a combinatorial algorithm, which yields a purely combinatorial algorithm
to solve the multiplicative assignment problem.

\subsection{Proof of the redistribution lemma}\label{subsec:Proof-redistribution-lemma}

\newcommand{\cp}{\Pi}

The bounds on $|A'|, |B'|$ follow immediately from the fact that the redistribution guarantees $|A'| \le |B'| \le |A'| + 1$ and that $|A| < k < |B|$.

We start by giving some intuition for the comparison of costs. Instead of directly comparing costs we can equivalently compare the change in cost, i.e., it suffices to prove that
\begin{equation}
    (\cost(A') - \cost(\emptyset)) + (\cost(B') - \cost(\emptyset))
    \le (\cost(A) - \cost(\emptyset)) + (\cost(B) - \cost(\emptyset)).\label{eq:costs}
\end{equation}
The difficulty comes from the complicated structure of the $\cost(\cdot)$ function.
Even the change in cost for upgrading a single supplier $i$, i.e., $\cost(\{i\}) - \cost(\emptyset)$, is not obvious since the optimal matching may be different depending on whether we upgrade~$i$. On the other hand, in the 
special case where the open interval $(b_i, c_i)$ does not contain any of the cost values $b_{i'}, c_{i'}$, $i'\in I$, the order of suppliers by cost does not change if we upgrade $i$ and therefore the optimal matching also does not change. Thus, in this case we have the simple identity
$\cost(\{i\}) - \cost(\emptyset) = (b_i - c_i) \cdot d_j$ where $j\in J$ is the customer that supplier $i$ is matched to.
Our proof strategy is to view the cost changes
$\cost(A) - \cost(\emptyset)$, $\cost(B) - \cost(\emptyset)$, 
etc.\ as
the sum of step-wise upgrades where each step does not change the optimal matching. Within each step we can then more easily compare the change in costs between the different sets. 

Formally,
for each $t \ge 0$, we will consider a ``truncated'' instance $\cp_t$ in which costs cannot get upgraded below $t$.
More precisely, $\cp_t$ has the same suppliers $I$, customers $J$, standard supplier costs $c$, and customer demands $d$.
For a supplier $i \in I$, the upgraded cost in $\cp_t$ will be equal to $\min\{t,c_i\}$ if $b_i \le t$, and remains $b_i$ otherwise.
For suppliers $S \subseteq I$ to upgrade, we define the resulting cost in $\cp_t$ as $\cost(S,t)$.

Let $t_1 < t_2 < \dots < t_\ell$ with $\{t_1,\dots,t_{\ell}\} = \{b_i : i\in I\} \cup \{c_i : i\in I\}$.
We will show that for every $h \in [\ell-1]$ it holds that
\begin{multline}
    \label{eqsdgh83h4g74}
    \cost(A',t_h) - \cost(A',t_{h+1}) + \cost(B',t_h) - \cost(B',t_{h+1}) \\
    \le \cost(A,t_h) - \cost(A,t_{h+1}) + \cost(B,t_h) - \cost(B,t_{h+1})
\end{multline}
Summing over all $h \in [\ell - 1]$, we obtain Inequality \eqref{eq:costs}
which yields
the lemma's statement.

Let us fix $h \in [\ell - 1]$.
We may assume that $J = [n]$ and $d_1 \ge d_2 \ge \dots \ge d_n$.
Consider a 
subset $S \subseteq I$ 
for which we would like to determine $\cost(S,t_h) - \cost(S,t_{h+1})$. 
To this end, note that for a supplier $i \in I$ we can only have $c_i \le t_h$, $b_i \le t_h < t_{h+1} \le c_i$, or $t_{h+1} \le b_i$.
We denote by $I_{\le}$ the suppliers of the first type, i.e., $I_{\le} \coloneqq \{ i \in I : c_i \le t_h \}$.
By $S_*$ we denote the suppliers of the second type that are also contained in $S$, i.e., $S_* \coloneqq \{ i \in S : b_i \le t_h, \, t_{h+1} \le c_i \}$. By $S_{\ge} \coloneqq I \setminus (I_{\le} \cup S_*)$ we denote all remaining suppliers.
Note that $S_{\ge}$ depends on $S$ but is not necessarily a subset of $S$.

Assuming that we only upgrade the suppliers in $S$, 
for $t \in \{t_h,t_{h+1}\}$
let us consider the effective costs
of each supplier $i\in I$ in the instance $\cp_t$, i.e., $\max \{b_i, \min\{t, c_i\}\}$
if $i\in S$ and $c_i$ if $i \not \in S$.
The effective cost of any supplier in $I_{\le}$ is at most $t_h$ and independent of whether $t=t_h$ or $t=t_{h+1}$.
Similarly, the effective cost of any supplier in $S_{\ge}$ is at least $t_{h+1}$ and again independent of whether $t=t_h$ or $t=t_{h+1}$.
The effective cost of any supplier $i \in S_*$ is equal to $t$.
In $\cp_t$, an optimal perfect matching that upgrades exactly the suppliers in $S$ can be constructed by
assigning the suppliers in $I_{\le}$ to the first $|I_{\le}|$ customers, the suppliers in $S_{\ge}$ to the last $|S_{\ge}|$ customers, and the suppliers in $S_*$ to the remaining customers.
Note that we may use the same assignment for $t=t_h$ and $t=t_{h+1}$.
The costs of two matchings differ by
\begin{equation}
    \label{eq438gh9g99h}
    \cost(S,t_h) - \cost(S,t_{h+1}) = (t_h-t_{h+1}) \sum_{j = |I_{\le}| + 1}^{|I_{\le}| + |S_*|} d_j.
\end{equation}
We claim that
\begin{equation}
    \label{eq483gj94gj}
    \sum_{j = |I_{\le}| + 1}^{|I_{\le}| + |A'_*|} d_j
    +
    \sum_{j = |I_{\le}| + 1}^{|I_{\le}| + |B'_*|} d_j
    \ge
    \sum_{j = |I_{\le}| + 1}^{|I_{\le}| + |A_*|} d_j
    +
    \sum_{j = |I_{\le}| + 1}^{|I_{\le}| + |B_*|} d_j.
\end{equation}
holds.
Notice that multiplying this inequality by $t_h - t_{h+1} < 0$ and using~\eqref{eq438gh9g99h} for the sets $A, B, A'$, $B'$, we obtain~\eqref{eqsdgh83h4g74} and are done.

To show \eqref{eq483gj94gj}, let $p \coloneqq \min \{|A_*|, |B_*|\}$, $q \coloneqq \max \{|A_*|, |B_*|\}$, $p' \coloneqq \min \{|A'_*|, |B'_*|\}$, and $q' \coloneqq \max \{|A'_*|, |B'_*|\}$.
Recall that by construction $A \cap B = A' \cap B'$ and $A \symdiff B = A' \symdiff B'$. 
In particular, we have $s \coloneqq p + q = p' + q'$.
Recall also that since the pair $(A,B)$ is clean, the suppliers $A \symdiff B$ can be ordered such that $b_i$ and $c_i$ are both non-decreasing.
In this ordering, the suppliers in $(A_* \cup B_*)\cap (A \symdiff B)$ are consecutive.
Recall the suppliers in $A \symdiff B$ have been alternatingly redistributed to $A'$ and $B'$.
This implies that $q' \le p' + 1$ and thus $p' = \lfloor s/2 \rfloor$ and $q' = \lceil s/2 \rceil $.
Letting $r \coloneqq |I_\le|$ and $g \coloneqq \lceil s/2 \rceil - p$, we obtain~\eqref{eq483gj94gj} via
\begin{alignat*}{10}
    \sum_{j = |I_{\le}| + 1}^{|I_{\le}| + |A'_*|} d_j + \sum_{j = |I_{\le}| + 1}^{|I_{\le}| + |B'_*|} d_j
    & = \sum_{j = 1}^{p'} d_{j+r} && + \sum_{j = 1}^{q'} d_{j+r} \\
    & = \sum_{j = 1}^{\lceil s/2 \rceil} d_{j+r} && + \sum_{i = 1}^{\lfloor s/2 \rfloor} d_{j+r} \\
    & = 2 \sum_{j = 1}^{p} d_{j+r} && + \sum_{j = p + 1}^{\lfloor s/2 \rfloor} d_{j+r} && + \sum_{j = p + 1}^{\lceil s/2 \rceil} d_{j+r} \\
    & \ge 2 \sum_{j = 1}^{p} d_{j+r} && + \sum_{j = p + 1}^{\lfloor s/2 \rfloor} d_{j+r+g} && + \sum_{j = p + 1}^{\lceil s/2 \rceil} d_{j+r} \\
    & = 2 \sum_{j = 1}^{p} d_{j+r} && + \sum_{j = \lceil s/2 \rceil + 1}^{q} d_{j+r} && + \sum_{j = p + 1}^{\lceil s/2 \rceil} d_{j+r} \\
    & = \sum_{j = 1}^{p} d_{j+r} && + \sum_{j = 1}^{q} d_{j+r} \\
    & = \sum_{j = |I_{\le}| + 1}^{|I_{\le}| + |A_*|} d_j && + \sum_{j = |I_{\le}| + 1}^{|I_{\le}| + |B_*|} d_j.
\end{alignat*}

\subsection{Sufficient condition for integrality of all optimal vertices}
\label{secAllIntegral}

This section aims to prove \Cref{thm:AllOptimalVerticesAreInteger}.
We consider the instance that satisfies (\rom{1}) all $b_i,c_i$ are pairwise different and (\rom{2}) all $d_j$ are pairwise different, and show that every optimal vertex solution is integral.
We remark that both these two conditions are required, and missing one of the conditions will lead to a fractional vertex.
In \Cref{secPreliminiaries}, we showed that the instance only satisfies (\rom{1}) has fractional vertices.
Below we give another example, which shows that the instance that only satisfies (\rom{2}) also has fractional vertices.

\paragraph*{Fractional Vertices without (\rom{1}).}
Consider the following instance defined by $I=J=\set{1,2,3}$, where $(b_1,c_1)=(0,1),(b_2,c_2)=(1,1),(b_3,c_3)=(1,4)$, and $d_1=3,d_2=2,d_3=1$, and $k=1$.
See \Cref{fig:fractional-opt-2}(a) for an example.
One can easily check that the optimal integral solution is $6$ by either upgrading node $1$ or $3$; see \Cref{fig:fractional-opt-2}(b).
There exists a fractional solution that is also equal to $6$; see \Cref{fig:fractional-opt-2}(c).
Using a similar calculation in \Cref{secPreliminiaries}, one can show that any feasible solution to LP has a value of at least $6$.
Thus, the fractional solution shown in \Cref{fig:fractional-opt-2}(c) is an optimal fractional solution. As its support is a single cycle, it is on an edge of $P$. As the two endpoints of this edge upgrade $0$ and $2$ suppliers, this fractional solution is also a vertex of $P(1)$.

\begin{figure}
    \centering
    \includegraphics[width=0.8\linewidth]{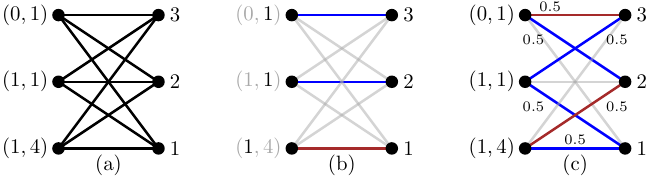}
    \caption{Illustration of the fractional vertices without (\rom{1}).}
    \label{fig:fractional-opt-2}
\end{figure}
\bigskip

We now prove \Cref{thm:AllOptimalVerticesAreInteger}.
Let $\bx$ be an optimal vertex of linear-program~\eqref{LP1}--\eqref{LP4}, which lies on the edge of the perfect matching polytope $P$.
Thus, $\bx$ is a convex combination of two matchings $M_A$ and $M_B$, which upgrades a set $A\subseteq I$ and $B\subseteq I$ with $|A|<k<|B|$, respectively.
We slightly abuse the notations and also use $A$ and $B$ to represent red edges in $M_A$ and $M_B$.
Similar to \Cref{secIntegrality}, we assume $A\symdiff B = \{1,2,\dotsc,\ell\}$ and create sets $A' \coloneqq (A \cap B) \cup \{ 2s-1 : s \in [\lceil \ell/2 \rceil] \}$ and $B' \coloneqq (A \cap B) \cup \{ 2s : s \in [\lfloor \ell/2 \rfloor] \}$.
Before showing \Cref{thm:AllOptimalVerticesAreInteger}, we first show that the inequality in \Cref{lem:ConvexityAfterRearrangment} is not strict; this will be useful later.

\begin{lemma}
We have $\cost(A')+\cost(B')=\cost(A)+\cost(B)$. \label{lem:InequalityNotStrict}
\end{lemma}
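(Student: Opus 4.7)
The plan is to apply the Redistribution Lemma (\Cref{lem:ConvexityAfterRearrangment}), which already gives the weak inequality $\cost(A')+\cost(B') \le \cost(A)+\cost(B)$, and to rule out strict inequality by contradicting the LP-optimality of $\bx$. So the entire content of the lemma is to argue: if redistribution strictly decreases the total cost, then one can construct a feasible LP solution of value strictly less than $f_{A,B}(k)$, which is the LP optimum.

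Setting things up, note that $(A,B)$ is an optimal pair in the sense of \Cref{secIntegrality}: since $\bx$ is optimal and lies on an edge of $P$ between $\chi(M_A)$ and $\chi(M_B)$ with $|A|<k<|B|$, the LP optimum must equal $f_{A,B}(k)$ and $M_A,M_B$ must be cost-minimizing over matchings upgrading $A,B$ respectively. Following the same setup as in \Cref{secIntegrality} (applying \Cref{lemSimplify} and \Cref{lemNice} if needed), we may further assume $(A,B)$ is simple and clean, so that \Cref{lem:ConvexityAfterRearrangment} produces $(A',B')$ with $|A'|+|B'|=|A|+|B|$ and $\cost(A')+\cost(B')\le\cost(A)+\cost(B)$.

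Suppose, towards a contradiction, that this inequality is strict. Mimicking the algebra in the proof of \Cref{corRedistributionNiceOptimal}, set $s\coloneqq|B|-|A|$, $t\coloneqq\cost(A)-\cost(B)$, and $r\coloneqq|B|\cost(A)-|A|\cost(B)$; then
\[
    s\bigl(\cost(A')+\cost(B')\bigr) + t\bigl(|A'|+|B'|\bigr) < 2r,
\]
so some $X\in\{A',B'\}$ satisfies $s\cdot\cost(X)+t\,|X|<r$, equivalently $\cost(X)<f_{A,B}(|X|)$. If $|X|=k$, then $\chi(M_X)$ is an integrally feasible point of $P(k)$ with value $\cost(X)<f_{A,B}(k)$, contradicting LP-optimality. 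Otherwise $|X|<k<|B|$ or $|A|<k<|X|$; re-running the chain of inequalities in \Cref{corRedistributionNiceOptimal} with strictness preserved yields $f_{X,B}(k)<f_{A,B}(k)$ or $f_{A,X}(k)<f_{A,B}(k)$ respectively, and the corresponding convex combination is a feasible LP point beating the optimum. Either case is a contradiction, so equality must hold.

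The main obstacle is purely bookkeeping: \Cref{corRedistributionNiceOptimal} only asserts weak optimality in its case (b), so one must re-execute its algebraic derivation and track where strictness is preserved. The hypotheses (i) and (ii) of \Cref{thm:AllOptimalVerticesAreInteger} are \emph{not} used in this step; they will only enter subsequently, where the equality established here, combined with strict ordering of the $b_i,c_i$ and of the $d_j$, will be used to force integrality of $\bx$.
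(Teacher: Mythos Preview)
Your proof is correct and takes essentially the same route as the paper's: both invoke the Redistribution Lemma for the weak inequality, assume strictness, deduce that some $X\in\{A',B'\}$ satisfies $\cost(X)<f_{A,B}(|X|)$ (the paper writes this as $\cost(S)<c(\bx_{|S|})$, which is the identical quantity since $c(\bx_{|S|})=f_{A,B}(|S|)$), and then exhibit a feasible LP point of value strictly below $f_{A,B}(k)$. The only cosmetic difference is that you route the algebra through \Cref{corRedistributionNiceOptimal} rather than writing out the explicit convex combinations $\bx_{|A'|},\bx_{|B'|}$; one small caveat is that invoking \Cref{lemSimplify} would literally replace $(A,B)$ by smaller sets and hence alter the statement---but all you actually need is cleanness, which is part of the ambient setup of \Cref{secAllIntegral} and which the paper itself presupposes here.
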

\begin{proof}
Recall that \Cref{lem:ConvexityAfterRearrangment}
already proves
$\cost(A')+\cost(B') \le \cost(A)+\cost(B)$.
We show that if the inequality is strict, then it shall contradict the optimality of $\bx$.
Note that $\bx_{|A'|}=\frac{|B|-|A'|}{|B|-|A|}\chi(M_A)+\frac{|A'|-|A|}{|B|-|A|}\chi(M_B)$ and $ \bx_{|B'|}=\frac{|B|-|B'|}{|B|-|A|}\chi(M_A)+\frac{|B'|-|A|}{|B|-|A|}\chi(M_B)$ are LP solutions for upgrading $|A'|$ and $|B'|$ suppliers, respectively.
Suppose for the sake of contradiction that $\cost(A')+\cost(B')<\cost(A)+\cost(B)$. 
Note that $c(\bx_{|A'|})+c(\bx_{|B'|})=\cost(A)+\cost(B)$, where $c(\bx)$ is the LP's objective value when the solution is $\bx$. 
Thus, for $S=A'$ or $S=B'$, we obtain $\cost(S)< c(x_{|S|})$. 
If $|S|\geq k$ then $\frac{k-|A|}{|S|-|A|}\chi(M_A)+\frac{|S|-k}{|S|-|A|}\chi(M_S)$ is an LP solution for upgrading $k$ suppliers with cost strictly less than $c(\bx)$. 
Similarly if $|S|<k$ then $\frac{k-|S|}{|B|-|S|}\chi(M_S)+\frac{|B|-k}{|B|-|S|}\chi(M_B)$ is an LP solution for upgrading $k$ suppliers with cost strictly less than $c(\bx)$. 
This is a contradiction as $\bx$ is an optimal LP solution. Thus $\cost(A')+\cost(B')= \cost(A)+\cost(B)$.    
\end{proof}
\begin{lemma}\label{Lemt}
If all demands are pairwise different then for every $t\in\R\setminus\{b_{i},c_{i}:i\in I\}$ it holds that
\begin{equation*}
||\{i\in A:b_{i}<t<c_{i}\}|-|\{i\in B:b_{i}<t<c_{i}\}|| \le 1 .
\end{equation*}
\end{lemma}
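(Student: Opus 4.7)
The plan is to argue by contradiction. Suppose for some $t \in \R \setminus \{b_i, c_i : i \in I\}$ we have $\bigl||\{i \in A : b_i < t < c_i\}| - |\{i \in B : b_i < t < c_i\}|\bigr| \ge 2$. I would show that this hypothesis causes the cost-comparison argument in the proof of the redistribution lemma to become strict at one specific level, yielding $\cost(A') + \cost(B') < \cost(A) + \cost(B)$, contradicting the equality from \Cref{lem:InequalityNotStrict}.

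First, I would locate $t$ in one of the gaps $(t_h, t_{h+1})$ of the sorted cost endpoints $t_1 < \dots < t_\ell$ from \Cref{subsec:Proof-redistribution-lemma}; if $t$ lies outside the whole range $[t_1,t_\ell]$, both counts are zero and the statement is trivial. Because no $b_i$ or $c_i$ lies in the open gap, for every $S \subseteq I$ the set $\{i \in S : b_i < t < c_i\}$ coincides with the set $S_*$ at level $h$ used in that proof. Setting $p := \min(|A_*|, |B_*|)$, $q := \max(|A_*|, |B_*|)$, and $s := p+q$, the contradiction hypothesis reads $q - p \ge 2$, which is precisely the condition that makes $\lfloor s/2 \rfloor \ge p+1$ and $g := \lceil s/2 \rceil - p \ge 1$.

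I would then revisit the chain of inequalities used to derive \eqref{eq483gj94gj} at this particular level $h$. The only step that is a bona fide inequality rather than an equality is the monotonicity bound
\[
\sum_{j = p+1}^{\lfloor s/2 \rfloor} d_{j+r} \ge \sum_{j = p+1}^{\lfloor s/2 \rfloor} d_{j+r+g},
\]
which under our hypothesis involves a non-empty sum of length $\lfloor s/2 \rfloor - p \ge 1$. Since the demands are pairwise distinct and sorted non-increasingly, they are strictly decreasing, so with $g \ge 1$ every comparison $d_{j+r} > d_{j+r+g}$ is strict. After multiplying by the negative factor $t_h - t_{h+1}$ (which flips the direction) and combining with the non-strict inequalities of the form \eqref{eqsdgh83h4g74} at all other levels, the telescoping argument of the redistribution lemma yields $\cost(A') + \cost(B') < \cost(A) + \cost(B)$, contradicting \Cref{lem:InequalityNotStrict}.

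The main obstacle is ensuring the strict inequality genuinely propagates through the telescoping. This reduces to two routine checks: the indices $j+r$ and $j+r+g$ must lie in $\{1,\dots,n\}$ (which holds since $j+r+g \le r + q \le n$), and pairwise distinctness of the demands must not be vacuous in the range that controls the strict gap — the only way this could fail is if several demands equalled zero, which the hypothesis of \Cref{thm:AllOptimalVerticesAreInteger} excludes.
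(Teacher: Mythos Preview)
Your proposal is correct and follows essentially the same argument as the paper: assume the gap exceeds one at some $t$, locate $t$ in an interval $(t_h,t_{h+1})$ so that the counts become $|A_*|$ and $|B_*|$, observe this forces $p+1\le\lfloor s/2\rfloor$ and $g\ge1$, and then use strict monotonicity of the (pairwise distinct) demands to make the key inequality in the redistribution-lemma chain strict, contradicting \Cref{lem:InequalityNotStrict}. Your exposition is slightly more explicit about the telescoping and index-range checks, but the logical structure is identical.
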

\begin{proof}
    Recall the proof of the proof of \Cref{lem:ConvexityAfterRearrangment}.
    We observe that none of the inequalities~\eqref{eq483gj94gj} can be strict since otherwise we obtain $\cost(A')+\cost(B')<\cost(A)+\cost(B)$, contradicting \Cref{lem:InequalityNotStrict}.

	For the sake of contradiction, suppose that there is some $t\in\R\setminus\{b_{i},c_{i}:i\in I\}$ such that $||\{i\in A:b_{i}<t<c_{i}\}|-|\{i\in B:b_{i}<t<c_{i}\}||>1$. Then we have $||A_*| - |B_*|| > 1$ 
    for the interval $(t_h,t_{h+1})$ containing~$t$. This implies $p\leq q-2$ and therefore also $p+1\leq \lfloor s/2 \rfloor$ and $g\geq 1$. As the demands are pairwise different and thus strictly decreasing, we obtain $\sum_{j=p+1}^{\lfloor s/2 \rfloor}d_{j+r}>\sum_{j=p+1}^{\lfloor s/2 \rfloor}d_{j+r+g}$, which directly makes~\eqref{eq483gj94gj} strict.
\end{proof}

To show \Cref{thm:AllOptimalVerticesAreInteger}, it suffices to prove the following lemma.

\begin{lemma}\label{lem:DifferenceOfOnePerCycle}
    Let $C$ be a cycle in $M_A \symdiff M_B$. Then $||A\cap C|-|B\cap C||\leq 1$.
\end{lemma}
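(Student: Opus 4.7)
I would prove the lemma by contradiction. Suppose some cycle $C$ has $|A \cap C| - |B \cap C| \ge 2$; the reverse inequality is handled symmetrically. Writing $V_I(C)$ for the suppliers incident to $C$ and $A_C := A \cap V_I(C), B_C := B \cap V_I(C)$, this is the assumption $|A_C| - |B_C| \ge 2$.

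The plan is to build a sub-instance on $V(C)$ and apply \Cref{Lemt} inside it. The sub-instance has supplier set $V_I(C)$, customer set $V_J(C)$, and inherits all costs and demands; in particular the pairwise-distinctness hypothesis is inherited. Since $M_A$ and $M_B$ are globally sorted-greedy matchings, their restrictions $M_A|_C$ and $M_B|_C$ are exactly the sorted-greedy optimal sub-matchings for the upgrade sets $A_C$ and $B_C$, respectively. Using LP duality at $\bx$, there is a Lagrange multiplier $\mu \ge 0$ such that both $M_A$ and $M_B$ minimize the sum of $\cost(\cdot)$ and $\mu$ times the number of red edges over all global perfect matchings; a cut-and-paste exchange along $C$ transports this optimality to the sub-instance, so that $M_A|_C$ and $M_B|_C$ are simultaneously penalty-optimal there with the same $\mu$. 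Consequently the pair $(B_C, A_C)$ (ordered by size) is a clean optimal pair for the sub-LP at $k_C := \lambda|A_C|+(1-\lambda)|B_C|$; cleanness descends from global cleanness because $A_C \symdiff B_C \subseteq A \symdiff B$.

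Applying \Cref{Lemt} to this sub-instance, whose hypotheses are satisfied, then yields the pointwise bound $\bigl|\,|A_C \cap I^{\mathrm{sub}}_t|-|B_C \cap I^{\mathrm{sub}}_t|\,\bigr| \le 1$ for every $t \in \mathbb{R}\setminus\{b_i,c_i:i\in V_I(C)\}$, where $I^{\mathrm{sub}}_t := \{i \in V_I(C) : b_i < t < c_i\}$.

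The last step, which I expect to be the main technical obstacle, is to convert this pointwise interval bound into the total-count bound $|A_C|-|B_C|\le 1$; in general the pointwise statement alone does not preclude a large gap in total counts (for instance via disjoint intervals). The decisive structural fact is that $C$ is the \emph{single} cycle in the sub-instance's symmetric difference, so the rank-change permutation on $V_I(C)$ between the two sorted orders is one cycle of full length $|V_I(C)|$. By cleanness, two active $A \setminus B$-suppliers cannot swap with each other directly, so any pair of them in the same permutation cycle must be connected by a chain of ``bridging'' swaps involving active $B \setminus A$-suppliers (whose intervals overlap both endpoints) or inactive suppliers with effective cost inside the corresponding interval. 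A careful counting shows that splicing $|A_C \setminus B_C|$ such positively-signed suppliers into one permutation cycle either introduces enough bridging $B \setminus A$-suppliers to force $|A_C|-|B_C| \le 1$, or else produces a common value $t^\ast$ lying in the intersection of at least two active $A \setminus B$-intervals but in no active $B \setminus A$-interval. The latter case gives $|A_C \cap I^{\mathrm{sub}}_{t^\ast}| - |B_C \cap I^{\mathrm{sub}}_{t^\ast}| \ge 2$, contradicting the pointwise bound and completing the proof.
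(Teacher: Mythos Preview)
Your reduction to a sub-instance on $V(C)$ is a legitimate first move and the pieces you assemble there are correct: the restrictions $M_A|_C$, $M_B|_C$ are indeed the sorted optimal matchings for $A_C$, $B_C$ in the sub-instance, the cut-and-paste Lagrangian argument does transport optimality, cleanness does descend via $A_C\symdiff B_C\subseteq A\symdiff B$, and so \Cref{Lemt} applies there. But this reduction buys you nothing substantive: the same pointwise bound already holds globally (that is precisely what \Cref{Lemt} says), and the single-cycle structure was always available on each cycle. The hard step---converting ``at most one more active $A$-interval than $B$-interval through every point $t$'' into ``at most one more $A$-supplier than $B$-supplier on $C$''---is exactly what remains, and your sketch does not carry it out.

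Concretely, the pointwise bound controls overlaps of intervals $[b_i,c_i]$ at each $t$, whereas the quantity you must bound is a total count; these are not the same, and disjoint intervals show the implication fails in general. Your proposed dichotomy (either enough ``bridging'' $B\setminus A$-suppliers, or a common $t^*$ in two $A\setminus B$-intervals and no $B\setminus A$-interval) is neither defined nor argued: you have not said what a bridge is, why the single permutation cycle forces bridges to appear, or why their absence produces such a $t^*$. The single-cycle hypothesis is a statement about the permutation $M_A\circ M_B^{-1}$ on customers; the pointwise bound is a statement about the interval geometry of suppliers. Connecting the two requires translating the interval information into \emph{rank} information about $M_A(v)$ and $M_B(v)$, and that translation is the real content of the proof. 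The paper does this explicitly: from \Cref{Lemt} and cleanness it extracts three structural properties of the matchings---$|M_A(v)-M_B(v)|\le 1$ for $v\notin A\symdiff B$, a monotonicity condition $M_A(v)\le M_B(v)$ for $v\in A\setminus B$ (and symmetrically), and a non-nesting condition on pairs in $A\symdiff B$---and then proves an abstract combinatorial lemma, by an iterative contraction of each cycle, that any two matchings with these three properties satisfy $||A\cap C|-|B\cap C||\le 1$. Your sketch contains none of this machinery, and without it the last step is a genuine gap.
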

Using this Lemma, it is straightforward to show that every optimal vertex is integral.
\begin{proof}[Proof of \Cref{thm:AllOptimalVerticesAreInteger} assuming \Cref{lem:DifferenceOfOnePerCycle}]
    Suppose that $\bx$ is a vertex of polytope $P(k)$ and not integral.
    As $\bx$ lies on an edge of the perfect matching polytope $P$ connecting $M_A$ and $M_B$, the edges $M_A \symdiff M_B$ form one unique cycle.
    By \Cref{lem:DifferenceOfOnePerCycle} this implies $||A|-|B||\leq 1$. If $|A|=|B|$ then $|A|=|B|=k$ implying both $\chi(M_A)$ and $\chi(M_B)$ are points in polytope $P(k)$.
    Thus, $\bx$ is not a vertex of $P(k)$. 
    If $||A|-|B||=1$, this contradicts the fact that the values $|A|, |B|$ and $k$ are all integers since $\min\{|A|,|B|\}<k<\max \{|A|,|B|\}=\min\{|A|,|B|\}+1$. 
\end{proof}
Thus it remains to prove \Cref{lem:DifferenceOfOnePerCycle}. 
To this end, we show that when $M_A$ and $M_B$ satisfy some properties, the cycle in $M_A\symdiff M_B$ shall fulfill \Cref{lem:DifferenceOfOnePerCycle}.
These properties are stated in  \Cref{lem:DifferenceOfOnePerCycleInductionHypothesis}.

\begin{lemma}\label{lem:DifferenceOfOnePerCycleInductionHypothesis}
    Let $\bar{G}=(\bar{V}\cup [\bar{n}], \bar{E})$ with $\bar{n}=|\bar{V}|$ being a complete bipartite Graph. Let $\bar{A}, \bar{B}\subseteq \bar{V}$ be two subsets of nodes and let $M_{\bar{A}}, M_{\bar{B}} : [\bar n] \rightarrow [\bar n]$ be two perfect matchings with the following properties:
    \begin{enumerate}[label=(\ref{lem:DifferenceOfOnePerCycleInductionHypothesis}\ablue{\alph*}),leftmargin=*,align=left]
        \item For $v\in \bar{V} \setminus (\bar{A}\symdiff\bar{B})$ we have $|M_{\bar{A}}(v)-M_{\bar{B}}(v)|\leq 1$.
        \label{prop:induction:other-nodes}
        \item For $v\in \bar{A}\setminus \bar{B}$ we have $M_{\bar{A}}(v)\leq M_{\bar{B}}(v)$ and for $v\in \bar{B}\setminus \bar{A}$ we have $M_{\bar{B}}(v)\leq M_{\bar{A}}(v)$.
        \label{prop:induction:base-case}
        \item There does not exist $v_1, v_2 \in A \symdiff B$ such that $\min\{M_{\bar{A}}(v_1), M_{\bar{B}}(v_1)\}<\min\{M_{\bar{A}}(v_2), M_{\bar{B}}(v_2)\}$ and  $\max\{M_{\bar{A}}(v_1), M_{\bar{B}}(v_1)\}>\max\{M_{\bar{A}}(v_2), M_{\bar{B}}(v_2)\}$.
        \label{prop:induction:including}
    \end{enumerate}
    Then for every cycle $C$ in $M_{\bar{A}}\symdiff M_{\bar{B}}$ we have $||\bar{A}\cap C|-|\bar{B}\cap C||\leq 1$.
\end{lemma}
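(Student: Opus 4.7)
The plan is to prove the lemma by induction on the length of the cycle $C$. For the base case $|C| = 4$ (i.e., $C$ visits exactly two $\bar V$-vertices $v_1, v_2$), a direct enumeration using properties \ref{prop:induction:other-nodes} and \ref{prop:induction:base-case} shows that any configuration with $||\bar A \cap C| - |\bar B \cap C|| \ge 2$ would force $M_{\bar A}(v_i) = M_{\bar B}(v_i)$ for some $i$, contradicting $v_i \in C$. For the inductive step, my goal is to ``contract'' a small piece of $C$ into a shorter cycle $C'$ on a modified instance that still satisfies \ref{prop:induction:other-nodes}--\ref{prop:induction:including}, while keeping the quantity $|\bar A \cap C| - |\bar B \cap C|$ invariant; the inductive hypothesis then closes the argument.

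First I would handle the easy case: suppose some $v^* \in C \cap (\bar V \setminus (\bar A \symdiff \bar B))$. By \ref{prop:induction:other-nodes}, the two cycle-neighbors of $v^*$ in $[\bar n]$, namely $M_{\bar A}(v^*)$ and $M_{\bar B}(v^*)$, differ by exactly $1$. Delete $v^*$ together with $M_{\bar A}(v^*)$, and re-route by setting $M'_{\bar B}(v^{**}) := M_{\bar B}(v^*)$, where $v^{**}$ is the cycle-neighbor of $M_{\bar A}(v^*)$. The only vertex whose matching has changed is $v^{**}$, and its new $M_{\bar B}$-value differs from the old by $\pm 1$. This shift is small enough that one can check by a short case analysis that \ref{prop:induction:other-nodes} and \ref{prop:induction:base-case} continue to hold, and that the new interval $I'_{v^{**}}$ (when $v^{**} \in \bar A \symdiff \bar B$) cannot create a strictly-nested pair with any other existing interval, hence \ref{prop:induction:including} survives. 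Since $v^* \notin \bar A \symdiff \bar B$, the count $|\bar A \cap C| - |\bar B \cap C|$ is unchanged, so induction applies.

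In the remaining case all vertices of $C$ lie in $\bar A \symdiff \bar B$. I would focus on the global minimum $u_m$ of $C$ in $[\bar n]$; by \ref{prop:induction:base-case} the two cycle-adjacent transitions are $v_m \in \bar A \setminus \bar B$ (a down step) and $v_{m+1} \in \bar B \setminus \bar A$ (an up step). Merge them into a single new vertex $w$ with $M'_{\bar A}(w) := M_{\bar A}(v_{m+1})$ and $M'_{\bar B}(w) := M_{\bar B}(v_m)$, obtaining a cycle of length $|C| - 2$. If $|M'_{\bar A}(w) - M'_{\bar B}(w)| \le 1$, place $w$ outside the symmetric difference so that \ref{prop:induction:other-nodes} holds; otherwise place $w$ in $\bar A' \symdiff \bar B'$ on the side dictated by the sign of $M'_{\bar A}(w) - M'_{\bar B}(w)$. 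This removes one vertex from each of $\bar A \setminus \bar B$ and $\bar B \setminus \bar A$ on the cycle and (in the second sub-case) re-inserts one of a chosen type, so the effect on $|\bar A \cap C| - |\bar B \cap C|$ is bounded by $1$ in absolute value, compatible with the target bound when combined with the inductive hypothesis on $C'$.

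The main obstacle I foresee is in this last sub-case: the new big interval $I_w$, being ``narrower'' than $I_{v_{m+1}} \cup I_{v_m}$ (both of which had $u_m$ as a shared endpoint), may fit strictly inside some other interval $I_{v_k}$, thus violating \ref{prop:induction:including} for $C'$. Overcoming this is the crux of the proof: one has to show that such a violation forces structural information on $C$ that can be exploited by contracting at a different site instead. Concretely, I would argue that if contraction at the global minimum creates a nesting violation, then the ``enveloping'' interval $I_{v_k}$ together with the non-nesting property forces the global maximum (or another carefully chosen local extremum of $C$) to have the property that its analogous merged interval has length $\le 1$, so that placing the corresponding $w$ outside the symmetric difference yields a safe contraction. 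Verifying that at least one extremum always admits a safe contraction, via a careful global case analysis exploiting \ref{prop:induction:including}, is the technical heart of the argument.
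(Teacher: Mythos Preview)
Your overall plan---induction on cycle length via local contractions---matches the paper's, and your ``easy case'' for $v^*\notin\bar A\symdiff\bar B$ is essentially the paper's first contraction phase (the paper merges the two adjacent $[\bar n]$-nodes $M_{\bar A}(v^*),M_{\bar B}(v^*)$ rather than deleting one and rerouting, but the effect is the same). The genuine gap is in the hard case. By merging the two $\bar V$-vertices at the global minimum of $[\bar n]$ you create a \emph{new} vertex $w\in\bar V$ with a fresh interval $I_w$, and you correctly identify that $I_w$ may be strictly nested inside some existing $I_{v_k}$, destroying \ref{prop:induction:including}. Your proposed fix---if the global minimum fails, some other extremum must admit a safe contraction---is not proved, and there is no evident reason it should hold; this is precisely the technical heart of the lemma, not a detail to be filled in later. (There is also a bookkeeping issue: in your second sub-case the count $|\bar A\cap C|-|\bar B\cap C|$ changes by $\pm1$, which combined with the inductive bound on $C'$ only gives $\le 2$, not $\le 1$.)

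The paper avoids the obstacle by a different contraction: it merges $[\bar n]$-nodes, not $\bar V$-nodes, so no new interval is ever created. Concretely, after the first phase it picks the \emph{minimal} $m\in[\bar n]$ with $M_{\bar A}(M_{\bar B}(m))<m$ and $M_{\bar B}(M_{\bar A}(m))<m$ (a minimal ``local maximum'' along the cycle, not the global minimum) and proves directly from \ref{prop:induction:including} that the two second-neighbours satisfy $|M_{\bar A}(M_{\bar B}(m))-M_{\bar B}(M_{\bar A}(m))|\le 1$: if some $n'$ lay strictly between them, then by minimality of $m$ one of the two cycle-steps out of $n'$ goes upward, producing an interval that nests or is nested with one of the two intervals at $m$, a contradiction. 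Since the merged $[\bar n]$-nodes are adjacent, collapsing them shifts all surviving intervals consistently, so \ref{prop:induction:other-nodes}--\ref{prop:induction:including} are preserved automatically; and by \ref{prop:induction:base-case} the two deleted $\bar V$-nodes $M_{\bar A}(m),M_{\bar B}(m)$ lie one in $\bar A\setminus\bar B$ and one in $\bar B\setminus\bar A$, so the count $|\bar A\cap C|-|\bar B\cap C|$ is \emph{exactly} preserved. This choice of contraction point, together with merging on the $[\bar n]$-side rather than the $\bar V$-side, is the idea you are missing.
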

\begin{proof}[Proof of \Cref{lem:DifferenceOfOnePerCycle} assuming \Cref{lem:DifferenceOfOnePerCycleInductionHypothesis}.]
    We show that $M_A\symdiff M_B$ fulfills the requirements of \Cref{lem:DifferenceOfOnePerCycleInductionHypothesis}.

    \paragraph{Property \ref{prop:induction:other-nodes}.}
    To show that for each $v\in V \setminus (A\symdiff B)$ we have $|M_{A}(v)-M_{B}(v)|\leq 1$, 
    we distinguish two cases: (\Rom{1}) $v\in V\setminus (A\cup B)$ and (\Rom{2}) $v\in A\cap B$.
    
    For Case~(\Rom{1}), let $v\in V \setminus (A\cup  B)$ and let $t=c_v-\varepsilon$ with $\varepsilon$ small enough such that there exists no $c_{v'}$ or $b_{v'}$ in $(t, c_{v})$.
    By \Cref{Lemt}, we obtain $||\{v'\in A:b_{v'}<t<c_{v'}\}|-|\{v'\in B:b_{v'}<t<c_{v'}\}||\leq 1$.
    Recall that in the matchings $M_A$ and $M_B$, the suppliers are sorted by their effective cost in increasing order, and the customers are sorted by decreasing order of demands and then matched accordingly. This implies that for $S\in \{A, B\}$ we have
    \begin{align*}
        M_{S}(v) &=|\{v'\in V: c_{v'}\leq c_v\}|+|\{v'\in S:b_{v'}<c_v<c_{v'}\}| \\
        &=|\{v'\in V: c_{v'}\leq c_v\}|+|\{v'\in S:b_{v'}<t<c_{v'}\}| .     
    \end{align*}
    As $||\{v'\in A:b_{v'}<t<c_{v'}\}|-|\{v'\in B:b_{v'}<t<c_{v'}\}||\leq 1$ we also obtain $|M_{A}(v)-M_{B}(v)|\leq 1$.
    
    In Case (\Rom{2}), let $v\in A\cap B$. The argument is almost the same. Let $t=b_v+\varepsilon$ with $\varepsilon$ be small enough so that there exist no $c_{v'}$ or $b_{v'}$ in $(b_{v}, t+\varepsilon)$.  
    Then by the same argument as before
    \begin{align*}
    M_{S}(v) &= |\{v'\in V: c_{v'}\leq b_v\}|+|\{v'\in S:b_{v'}\leq b_v<c_{v'}\}| \\
    &=|\{v'\in V: c_{v'}\leq b_v\}|+|\{v'\in S:b_{v'}<t<c_{v'}\}| . 
    \end{align*}
    This then implies $|M_{A}(v)-M_{B}(v)|\leq 1$ using \Cref{Lemt}.

    \paragraph{Property \ref{prop:induction:base-case}.}
    Let $v\in A\setminus B$. The case $v\in B \setminus A$ is symmetric. 

    First we argue that $A$ and $B$ are simple. As all $d_j$ are pairwise different, there is at most one customer $\bar{j}$ with demand $0$. If there is none, it is clear that $A$ and $B$ are simple. 
    So suppose there is exactly one. 
    Then an optimal assignment upgrading $S\subseteq I$ assigns $\bar{j}$ to an upgraded supplier only if all other suppliers are also upgraded. 
    This implies $S=I$ and is thus only possible for $B=I$. 
    As $|A|<k<|B|$ implies $|A|\leq n-2$, there is a supplier $i\not\in A$ assigned to a customer with positive demand. 
    Recall the proof of \cref{lemSimplify}. 
    As Upgrading $i$ improves the cost we have $\cost(A)>\cost(A\cup \{i\})\geq \cost(B)$ and thus $f_{A, B}(k)<\cost(B)$. 
    So $\chi(M_{B'})$ is not an optimal vertex. 
    As $\cost(B)>\cost(A)$ and $|B'|<|B|$ the inequality \eqref{eq:simpleLastStep} is strict. 
    This implies $f_{A, B}(k)>f_{A', B'}(k)$, which is a contradiction to the fact that $(A, B)$ is an optimal pair. 
    So we obtain that, indeed, $A$ and $B$ are simple.
  
    As in the previous cases $M_{A}(v)=|\{v'\in V: c_{v'}\leq b_v\}|+|\{v'\in A:b_{v'}\leq b_v<c_{v'}\}|$. 
    Note that $(A,B)$ is a simple optimal pair and by \Cref{lemNice}, we have that $A, B$ is clean, i.e., for $v'\in A\setminus B$ with $b_{v'}\leq b_v$ that $c_{v'}\leq c_{v}$. Thus 
    \begin{align*}
        M_A(v)&\leq |\{v'\in V: c_{v'}\leq c_v\}|+|\{v'\in A\cap B:b_{v'}\leq b_v<c_v<c_{v'}\}|\\
        &\leq |\{v'\in V: c_{v'}\leq c_v\}|+|\{v'\in B:b_{v'}<c_v<c_{v'}\}|=M_B(v)
    \end{align*}

    \paragraph{Property \ref{prop:induction:including}.}
    There will be two cases: Case (\Rom{1}) $v_1,v_2\in A\setminus B$, the case $v_1,v_2\in B\setminus A$ is analogue; Case (\Rom{2}) $v_1\in A\setminus B$ and $v_2\in B\setminus A$, the case $v_2\in A\setminus B$ and $v_1\in B\setminus A$ is symmetric.
    
    In Case (\Rom{1}), we have $v_1, v_2\in A \setminus B$. 
    Then, by \ref{prop:induction:other-nodes} of \Cref{lem:DifferenceOfOnePerCycleInductionHypothesis}, we have $M_A(v_1)\leq M_B(v_1)$ and $M_A(v_2)\leq M_B(v_2)$. 
    If $c_{v_1}<c_{v_2}$ then $M_B(v_1)<M_B(v_2)$, contradicting \ref{prop:induction:base-case} of \Cref{lem:DifferenceOfOnePerCycleInductionHypothesis}.
    If $c_{v_1}>c_{v_2}$ then \Cref{lemNice} implies $b_{v_1}>b_{v_2}$ and therefore $M_A(v_1)>M_A(v_2)$, contradicting again \ref{prop:induction:other-nodes} of \Cref{lem:DifferenceOfOnePerCycleInductionHypothesis}.
    
    In Case (\Rom{2}), we have $v_1\in A\setminus B$ and $v_2\in B\setminus A$.
    Then, by \ref{prop:induction:base-case} of \Cref{lem:DifferenceOfOnePerCycleInductionHypothesis}, we have $M_A(v_1)\leq M_B(v_1)$ and $M_B(v_2)\leq M_A(v_2)$. So to prove the lemma, we need to show that $M_A(v_1)\geq M_B(v_2)$ or $M_B(v_1)\leq M_A(v_2)$. 
    
    First suppose that $ b_{v_1}<b_{v_2}$. 
    This implies $c_{v_1}<c_{v_2}$ by \Cref{lemNice}. 
    Let $t=c_{v_1}+\varepsilon$ with $\varepsilon$ be small enough so that there is neither $b_{v}\in (c_{v_1}, t)$ nor $c_{v}\in (c_{v_1}, t)$ for any $v\in V$. Then we have
    \begin{align*}
        M_{B}(v_1)&=|\{v\in V: c_v\leq c_{v_1}\}|+|\{v\in B: b_v< c_{v_1}<c_{v}\}|\\
        &=|\{v\in V: c_v\leq t\}|+|\{v\in B: b_v< t<c_{v}\}|\\
        &\leq |\{v\in V: c_v\leq t\}|+|\{v\in A: b_v< t<c_{v}\}|+1
    \end{align*}
        Note that for $v\in A$ with $b_v< t<c_{v}$ we either have $c_v<c_{v_2}$ or $c_v>c_{v_2}$ as $v_2\not\in A$, thus
    \begin{align*}
        M_{B}(v_1)&\leq |\{v\in V: c_v< c_{v_2}\}|+|\{v\in A: b_v< c_{v_2}<c_{v}\}|+1\\
        &\leq |\{v\in V: c_v\leq c_{v_2}\}|+|\{v\in A: b_v< c_{v_2}<c_{v}\}|\\
        &=M_{A}(v_2)
    \end{align*}

Now suppose that $ b_{v_1}>b_{v_2}$. Let $t=b_{v_2}+\varepsilon$ with $\varepsilon$ be small enough so that there is neither $b_{v}\in (b_{v_2}, t)$ nor $c_{v}\in (b_{v_2}, t)$ for any $v\in V$. Then we have
 \begin{align*}
    M_{B}(v_2)&=|\{v\in V: c_v< b_{v_2}\}|+|\{v\in B: b_v\leq b_{v_2}<c_{v}\}|\\
    &=|\{v\in V: c_v\leq t\}|+|\{v\in B: b_v< t<c_{v}\}|\\
    &\leq |\{v\in V: c_v\leq t\}|+|\{v\in A: b_v< t<c_{v}\}|+1\\
    &\leq |\{v\in V: c_v< b_{v_1}\}|+|\{v\in A: b_v< b_{v_1}<c_{v}\}|+1\\
    &\leq |\{v\in V: c_v< b_{v_1}\}|+|\{v\in A: b_v\leq b_{v_1}<c_{v}\}|\\
    &=M_{A}(v_1)
\end{align*}
\end{proof}
The last step is to show \Cref{lem:DifferenceOfOnePerCycleInductionHypothesis}.
\begin{proof}[Proof of \Cref{lem:DifferenceOfOnePerCycleInductionHypothesis}]
Given a complete bipartite graph $\bar{G}$, we consider two node sets $\bar{A},\bar{B}$ and two perfect matchings $M_{\bar{A}},M_{\bar{B}}$ that satisfy all properties stated in \Cref{lem:DifferenceOfOnePerCycleInductionHypothesis}.
The bipartite multigraph $M_{\bar{A}}\symdiff M_{\bar{B}}$ may contain more than one cycle, in which we shall consider each cycle separately and show that each cycle $C$ on its own satisfies $||\bar{A}\cap C|-|\bar{B}\cap C||\leq 1$.
This shall prove the lemma.

We prove the lemma by iteratively {\em contracting} the graph $M_{\bar{A}}\symdiff M_{\bar{B}}$.
In each contracting step, we shall delete some edges and contract two nodes, and thus we obtain a smaller graph.
Thanks to the properties stated in \Cref{lem:DifferenceOfOnePerCycleInductionHypothesis}, each contracting step ensures that (\rom{1}) either exactly one edge from both $\bar{A}$ and $\bar{B}$ is deleted or no edges from $\bar{A}\cup\bar{B}$ are removed; (\rom{2}) the new graph that results still satisfies all properties in \Cref{lem:DifferenceOfOnePerCycleInductionHypothesis}.
Finally, we reach the base case where each cycle includes at most $4$ edges and then stop contracting.
In the following, we first show that the base case always satisfies the desired property and describe the contracting step.
This finishes the proof of \Cref{lem:DifferenceOfOnePerCycleInductionHypothesis}.

\paragraph{Base Case.}
Consider any cycle $C$ in $M_{\bar{A}}\symdiff M_{\bar{B}}$ with at most $4$ edges, we claim that $\abs{\abs{\bar{A}\cap C}-\abs{\bar{B}\cap C}}\leq 1$.
We can assume that $C$ has exactly $4$ edges; otherwise, the claim is trivially valid.
We can also assume that $C$ includes exactly two nodes from $\bar{A}\cup\bar{B}$; otherwise, the claim also holds trivially.
It is easy to see that the claim is valid if $C$ contains exactly one node from $\bar{A}$ and one node from $\bar{B}$.
Thus, both nodes come from the same set.
Without loss of generality, suppose that $C$ includes two nodes in $\bar{A}$.
However, this contradicts \ref{prop:induction:base-case} of \Cref{lem:DifferenceOfOnePerCycleInductionHypothesis}.
Hence, the base case is true.

\paragraph{Contracting.}
In the first phase of contracting, we aim to contract nodes in $\bar{V}\setminus(\bar{A}\symdiff\bar{B})$.
For a node $v$ in this set, the contracting operation consists of two steps: (\rom{1}) remove two edges that are adjacent to $v$; (\rom{2}) merge the node $M_{\bar{A}}(v)$ and $M_{\bar{B}}(v)$. 
An example is shown in the cycle $C_1$ in \Cref{fig:contracting}.
By \ref{prop:induction:other-nodes} of \Cref{lem:DifferenceOfOnePerCycleInductionHypothesis}, we have $|M_{\bar{A}}(v)-M_{\bar{B}}(v)|\leq 1$; so, the new graph that results also fulfills the requirements of this lemma.
If $v\in \bar{V}\setminus (\bar{A}\cup\bar{B})$, no edge is deleted from $\bar{A}\cup\bar{B}$ being removed in this step.
If $v\in\bar{A}\cap \bar{B}$, then exactly one edge is removed from both $\bar{A}$ and $\bar{B}$.

At the end of the first phase, the graph does not include any nodes from $\bar{V}\setminus(\bar{A}\symdiff\bar{B})$.
Thus, each node of $\bar{V}$ in the new graph either belongs to $\bar{A}$ and $\bar{B}$.
In the second phase, we first find a minimal node $m\in[\bar{n}]$ such that $M_{\bar{A}}(M_{\bar{B}}(m))<m$ and $M_{\bar{B}}(M_{\bar{A}}(m))<m$. 
Such a node must exist as this is true for $m=\bar{n}$. 
The crucial property of such a node is that it satisfies 
\begin{equation}
|M_{\bar{A}}(M_{\bar{B}}(m))-M_{\bar{B}}(M_{\bar{A}}(m))|\leq 1.
\label{equ:contracting:key}
\end{equation}
\Cref{equ:contracting:key} allows us to contract $M_{\bar{A}}(M_{\bar{B}}(m))$ and $M_{\bar{B}}(M_{\bar{A}}(m))$ while the graph that results still fulfills the requirements of the lemma. 
In particular, the contracting operation consists of the following two steps: (\rom{1}) delete all edges that are adjacent to $M_{\bar{B}}(m)$ and $M_{\bar{A}}(m)$; (\rom{2}) merge the node $M_{\bar{A}}(M_{\bar{B}}(m))$ and $M_{\bar{B}}(M_{\bar{A}}(m))$.
An example is shown in the cycle $C_2$ in \Cref{fig:contracting}.
This operation removes exactly one edge from both $\bar{A}$ and $\bar{B}$.

    \begin{figure}[tb]
        \centering
        \includegraphics[width=0.7\linewidth]{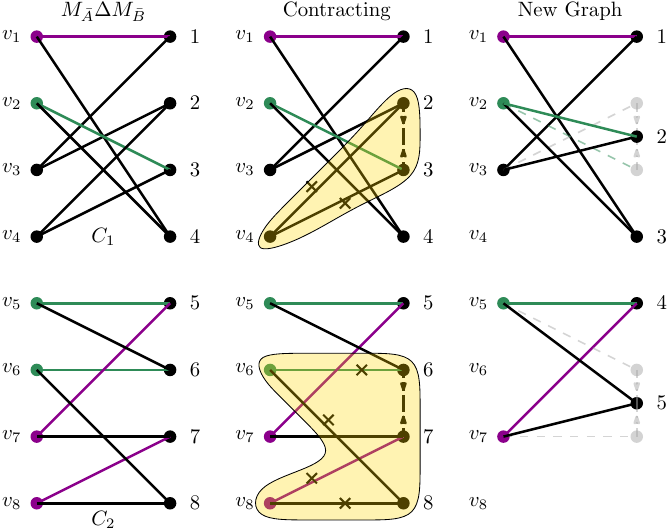}
        \caption{Illustration of Contracting Operation. The left part is the graph $M_{\bar{A}}\Delta M_{\bar{B}}$, which consists of two cycles. The purple and green nodes/edges are the set $\bar{A}$ and $\bar{B}$, respectively. The middle part shows the contracting operation, in which we shall remove some edges and contract two nodes. The edges labeled by ``$\times$'' are edges that will be deleted. The right part is the new graph after the contracting operation. The first cycle still includes more than $4$ edges, so one more contracting operation will be performed.}
        \label{fig:contracting}
    \end{figure}

Now, it remains to show \Cref{equ:contracting:key}.
For the sake of the contradiction, suppose that $|M_{\bar{A}}(M_{\bar{B}}(m))-M_{\bar{B}}(M_{\bar{A}}(m))|>1$. 
Let $n'\in [\bar{n}]$ such that $M_{\bar{A}}(M_{\bar{B}}(m))<n'<M_{\bar{B}}(M_{\bar{A}}(m))$ or $M_{\bar{A}}(M_{\bar{B}}(m))>n'>M_{\bar{B}}(M_{\bar{A}}(m))$. 
As $n'<m$ and $m$ is the minimal node for which $M_{\bar{A}}(M_{\bar{B}}(m))<m$ and $M_{\bar{B}}(M_{\bar{A}}(m))<m$, we have that $M_{\bar{A}}(M_{\bar{B}}(n'))>n'$ or $M_{\bar{B}}(M_{\bar{A}}(n'))>n'$. 
Without loss of generality, we assume that the first is the case and let $m':=M_{\bar{A}}(M_{\bar{B}}(n'))$. 
Note that $m'\neq m$ as $m$ is already matched with other nodes. 
In Case (\Rom{1}), we suppose that $m'<m$.
Recall that we have $M_{\bar{A}}(M_{\bar{B}}(m))<n'$ or $M_{\bar{B}}(M_{\bar{A}}(m))<n'$.
This is a contradiction to \ref{prop:induction:including} of \Cref{lem:DifferenceOfOnePerCycleInductionHypothesis} on $\bar{G}$ for $v_1=M_{\bar{B}}(m)$ or $v_1=M_{\bar{A}}(m)$ and $v_2=M_{\bar{B}}(n')$.
Now we consider Case (\Rom{2}) and suppose that $m'>m$, which is similar as Case (\Rom{1}).
Recall that we have $M_{\bar{A}}(M_{\bar{B}}(m))>n'$ or $M_{\bar{B}}(M_{\bar{A}}(m))>n'$.
This is a contradiction to \ref{prop:induction:including} of \Cref{lem:DifferenceOfOnePerCycleInductionHypothesis} on $\bar{G}$ for $v_1=M_{\bar{B}}(m)$ or $v_1=M_{\bar{A}}(m)$ and $v_2=M_{\bar{B}}(n')$.
Thus $|M_{\bar{A}}(M_{\bar{B}}(m))-M_{\bar{B}}(M_{\bar{A}}(m))|\leq 1$.
An example can be found in \Cref{fig:contracting:key}.
    \begin{figure}[tb]
        \centering
        \includegraphics[width=0.4\linewidth]{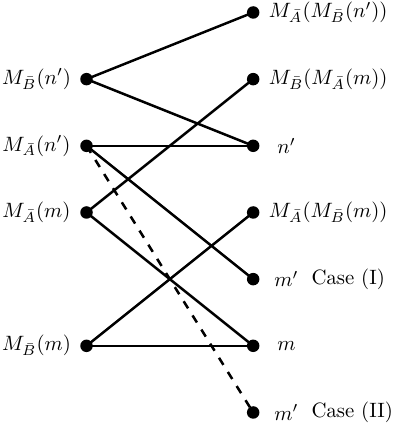}
        \caption{Illustration of the proof of \Cref{equ:contracting:key}. The minimal node $m$ such that $M_{\bar{A}}(M_{\bar{B}}(m))<m$ and $M_{\bar{B}}(M_{\bar{A}}(m))<m$ must satisfy \Cref{equ:contracting:key}; otherwise, we can always find a pair of nodes in $\bar{V}$ such that they violates \ref{prop:induction:including} of \Cref{lem:DifferenceOfOnePerCycleInductionHypothesis}.}
        \label{fig:contracting:key}
    \end{figure}
\end{proof}

\section{Combinatorial algorithm}

\label{secCombinatorial} In \Cref{secIntegrality} we have seen
that, given an optimal pair $(A,B)$, we can compute an optimal solution
by a purely combinatorial algorithm. In this section, we present a
combinatorial algorithm for computing such an optimal pair, so in
particular \emph{without }solving a linear program. This yields a
completely combinatorial algorithm for the multiplicative assignment
problem with upgrades.

Our algorithm is based on a key structural insight into the problem.
Given an instance, we consider the function $h:\{0,\ldots,|I|\}\rightarrow\mathbb{R}_{\ge0}$
such that for each $k'$ the value $h(k')$ equals the cost of the
optimal solution if we were allowed to upgrade $k'$ suppliers,
i.e., $h(k')=\min\{\cost(X):X\subseteq I,\,|X|=k'\}$. Note that
$h(k')$ is defined even if $k'>k$ where $k$ is the number of suppliers
we are allowed to upgrade in the given instance. Clearly, $h$ is
non-increasing since allowing more upgrades cannot increase the cost
of the optimal solution. A direct consequence of \Cref{thmMain} is that $h$ is also convex.

\begin{lemma}\label{lem8928916861}For any instance of the multiplicative
assignment problem with upgrades, the (linear interpolation of the) function $h$ is convex.

\end{lemma}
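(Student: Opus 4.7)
The plan is to derive the claim directly from \Cref{thmMain}. Convexity of the piecewise linear interpolation of $h$ is equivalent to the Jensen-type inequality
$$h(k_2) \;\le\; \tfrac{k_3 - k_2}{k_3 - k_1}\, h(k_1) + \tfrac{k_2 - k_1}{k_3 - k_1}\, h(k_3)$$
for all integers $0 \le k_1 < k_2 < k_3 \le |I|$, so I would focus on establishing this inequality.

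First I would observe that $h$ is non-increasing: adding one more supplier to $X$ can only decrease effective costs, so $\cost(X \cup \{i\}) \le \cost(X)$ for every $X \subseteq I$ and $i \in I \setminus X$. Consequently $\min\{\cost(X) : |X| \le k\} = h(k)$ for every $k$, and I would identify this quantity with the optimum value of the LP \eqref{LP1}--\eqref{LP6} over $P(k)$: the ``$\le$'' direction is immediate since $\chi(M_{X^{*}})$ is LP-feasible for an optimal $X^{*}$ with $|X^{*}|\le k$, while ``$\ge$'' uses \Cref{thmMain}, which supplies an integral LP-optimum that corresponds to a matching whose cost is at least $\cost(X)$ for its associated upgrade set $X$ with $|X|\le k$, and hence at least $h(|X|) \ge h(k)$.

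Now pick sets $X_1, X_3 \subseteq I$ with $|X_\alpha| = k_\alpha$ and $\cost(X_\alpha) = h(k_\alpha)$ for $\alpha \in \{1,3\}$, set $\lambda := (k_3-k_2)/(k_3-k_1) \in (0,1)$, and form the fractional point $z := \lambda\, \chi(M_{X_1}) + (1-\lambda)\, \chi(M_{X_3})$. Both $\chi(M_{X_1})$ and $\chi(M_{X_3})$ satisfy the matching equalities \eqref{LP2}--\eqref{LP3}, and so does their convex combination $z$; moreover the sum of its $x$-entries equals $\lambda k_1 + (1-\lambda) k_3 = k_2$, so $z \in P(k_2)$. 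Its LP objective is exactly $\lambda h(k_1) + (1-\lambda) h(k_3)$, and since the LP optimum of $P(k_2)$ equals $h(k_2)$ by the previous paragraph, the desired inequality follows. Beyond the invocation of \Cref{thmMain}, no real obstacle arises here; the argument is essentially a standard convex-combination exploiting that $P(k)$ is obtained from the perfect matching polytope by a single cardinality constraint.
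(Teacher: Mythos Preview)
Your proof is correct and follows essentially the same approach as the paper: both pick optimal sets realizing $h(k_1)$ and $h(k_3)$, take the appropriate convex combination of the corresponding matchings to obtain a feasible point in $P(k_2)$, and then invoke \Cref{thmMain} to identify the LP optimum over $P(k_2)$ with $h(k_2)$. Your write-up is slightly more explicit about why the LP optimum equals $h(k)$ (via monotonicity of $\cost$ and of $h$), whereas the paper states this identification in one line, but the substance is the same.
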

\begin{proof}
We have to show that $h(k_{B}) \le \frac{k_{C}-k_{B}}{k_{C}-k_{A}}h(k_{A})+\frac{k_{B}-k_{A}}{k_{C}-k_{A}}h(k_{C})$ holds for all integers $0 \le k_A < k_B < k_C \le |I|$.
Let $A,C \subseteq I$ be such that $k_A = |A|$, $k_C = |C|$, $\cost(A) = h(k_A)$, and $\cost(C) = h(k_C)$.
Let $\beta$ denote the optimum solution value of our LP for upgrading up to $k_B$ suppliers.
Recall that $\beta \le f_{A,C}(k_B)$.
Moreover, by \Cref{thmMain}, we have $h(k_B) = \beta$.
Thus, we obtain $h(k_B) \le f_{A,C}(k_B) = \frac{k_{C}-k_{B}}{k_{C}-k_{A}}h(k_{A})+\frac{k_{B}-k_{A}}{k_{C}-k_{A}}h(k_{C})$.
\end{proof}
In our algorithm, we start with a \emph{weakly }optimal pair $(A,B)$
which we define to be a pair $A,B\subseteq I$ such that $\cost(A)=h(|A|)$
and $\cost(B)=h(|B|)$, i.e., among all sets $S\subseteq I$ with
$|S|=|A|$ the set $A$ has the smallest cost, and among all sets
$S'\subseteq I$ with $|S'|=|B|$ the set $B$ has the optimal cost.
We start with the pair $(\emptyset,I)$ which is clearly weakly optimal
since $\emptyset$ and $I$ are the only subsets of $I$ with 0 and
$|I|$ elements, respectively.
Suppose we are given a weakly optimal pair $(A,B)$. We describe a
routine that asserts that $(A,B)$ is even an optimal pair, or directly
outputs an optimal integral solution, or computes another weakly optimal
pair $(A',B')$ with $|B'|-|A'|<|B|-|A|$. Thus, if we iterate this
routine for at most $|I|$ iterations, we eventually find an optimal
pair.
We define an auxiliary cost
function where for each subset $X\subseteq I$ we set
$
g_{A,B}(X)\coloneqq\cost(X)+\frac{\cost(A)-\cost(B)}{|B|-|A|}\cdot|X|.
$
Intuitively, this cost function is a type of Lagrangian relaxation
of our actual cost function where we replace Constraint~\eqref{LP4}
by the penalty term $\frac{\cost(A)-\cost(B)}{|B|-|A|}\cdot|X|$ in
the objective.
\begin{observation}
We have that $g_{A,B}(A)=g_{A,B}(B)$.
\end{observation}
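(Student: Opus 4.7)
The observation amounts to a direct one-line calculation, so my plan is simply to expand both sides of $g_{A,B}(A) = g_{A,B}(B)$ using the definition of $g_{A,B}$ and cancel terms. Specifically, I would write
\[
g_{A,B}(B) - g_{A,B}(A) = \cost(B) - \cost(A) + \tfrac{\cost(A) - \cost(B)}{|B| - |A|} \cdot (|B| - |A|),
\]
and observe that the second summand simplifies to $\cost(A) - \cost(B)$, which exactly cancels the first. Note that this makes sense because $(A,B)$ being weakly optimal guarantees $|A| \neq |B|$ (indeed, we assume $|A| < |B|$ so the denominator is nonzero).

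There is essentially no obstacle here; the identity is the algebraic content of the Lagrangian penalty term being chosen so that $A$ and $B$ lie on a common level set of $g_{A,B}$. Intuitively, the slope of the penalty is precisely the slope of the secant through $(|A|, \cost(A))$ and $(|B|, \cost(B))$, and adding a linear function with that slope to $\cost$ makes these two values coincide. This is the reason for the specific choice of coefficient in the definition of $g_{A,B}$, and it will be the foundation for the combinatorial algorithm's subsequent steps, where one solves a weighted matching instance under $g_{A,B}$ and compares its optimum to the common value $g_{A,B}(A) = g_{A,B}(B)$.
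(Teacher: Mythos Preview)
Your proposal is correct and is exactly the direct verification the paper intends; the paper itself states this as an observation without proof, and your one-line computation is the natural way to see it. One tiny remark: the guarantee that $|A|<|B|$ comes from the algorithm's invariant $|A|<k<|B|$ rather than from the bare definition of a weakly optimal pair, but this does not affect the argument.
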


To gain some intuition, let us consider a plot of the (unknown) function
$h$ and a line $\ell$ that contains the points $(|A|,\cost(A))$
and $(|B|,\cost(B))$, see \Cref{fig:combin-alg}. Intuitively, $\ell$ contains
all points with the same objective function value according to $g_{A,B}$.
Since $(A,B)$ is weakly optimal, we have that $h(|A|)=\cost(A)$
and $h(|B|)=\cost(B)$. Moreover, since $h$ is convex, one can show
that for each value $k'$ with $k'\le|A|$ or $k'\ge|B|$ the point
$(k',h(k'))$ lies on or above $\ell$. However, there might be a
value $k^{*}$ with $|A|<k^{*}<|B|$ such that $(k^{*},h(k^{*}))$
lies below $\ell$. If this is the case, then there must be a corresponding
set $X^{*}\subseteq I$ with $|X^{*}|=k^{*}$, $h(k^{*})=\cost(X^{*})$,
and $g_{A,B}(X^{*})<g_{A,B}(A)=g_{A,B}(B)$; we call such a set $X^{*}$
an \emph{extreme} set. It turns
out that if there is no extreme set $X^{*}$, then we are already done.

\begin{restatable}{lemma}{lemcheckXstar}
\label{lem:checkXstar}If there is no extreme set $X^{*}\subseteq I$,
then $(A,B)$ is an optimal pair.
\end{restatable}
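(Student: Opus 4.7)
The plan is to establish $\beta = f_{A,B}(k)$ where $\beta$ is the optimum value of the LP (\ref{LP1})--(\ref{LP6}). The upper bound $\beta \le f_{A,B}(k)$ is immediate from feasibility: since $|A|<k<|B|$, the convex combination $\tfrac{|B|-k}{|B|-|A|}\chi(M_A)+\tfrac{k-|A|}{|B|-|A|}\chi(M_B)$ is LP-feasible with value $f_{A,B}(k)$.

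The bulk of the work lies in the lower bound. I would introduce the affine function $L(x):=\cost(A)+\tfrac{\cost(B)-\cost(A)}{|B|-|A|}(x-|A|)$, i.e., the line $\ell$ through $(|A|,\cost(A))$ and $(|B|,\cost(B))$ sketched in \Cref{fig:combin-alg}. Two algebraic identities are used repeatedly: $L(k)=f_{A,B}(k)$, and $g_{A,B}(X)\ge g_{A,B}(A)$ if and only if $\cost(X)\ge L(|X|)$. The core claim is that $\cost(X)\ge L(|X|)$ holds for \emph{every} $X\subseteq I$. For $|A|<|X|<|B|$ this is exactly the assumption that no extreme set exists. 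For $|X|\le |A|$ or $|X|\ge |B|$, weak optimality of $(A,B)$ yields $\cost(X)\ge h(|X|)$, and convexity of $h$ (\Cref{lem8928916861}), together with the identities $h(|A|)=\cost(A)$ and $h(|B|)=\cost(B)$, forces $h(|X|)\ge L(|X|)$, because a convex function dominates its secant outside the interpolation interval.

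To finish, observe that $\cost(A)\ge\cost(B)$ since $h$ is non-increasing and $(A,B)$ is weakly optimal, so $L$ is non-increasing. Invoking \Cref{thmMain}, the LP has an integral optimal vertex corresponding to a matching that upgrades some $X^\star\subseteq I$ with $|X^\star|\le k$, giving $\beta = \cost(X^\star)\ge L(|X^\star|)\ge L(k)=f_{A,B}(k)$. Combined with the upper bound this yields $\beta=f_{A,B}(k)$, i.e., $(A,B)$ is an optimal pair.

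The point needing the most care is the pointwise inequality $\cost(X)\ge L(|X|)$ outside $[|A|,|B|]$: one must invoke convexity in the correct direction, namely that a convex function lies \emph{above} its secant outside the interpolation interval (not below it, as would be the case inside). Every other step is bookkeeping with the definitions of $g_{A,B}$, $f_{A,B}$, and weakly optimal pairs.
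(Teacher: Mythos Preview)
Your proof is correct and follows the same overall strategy as the paper (invoke \Cref{thmMain} to get an integral optimum, then use the no-extreme-set hypothesis to bound its cost below by $f_{A,B}(k)$). The paper's version is slightly more economical: rather than establishing the pointwise bound $\cost(X)\ge L(|X|)$ for \emph{all} $X$ and invoking convexity of $h$ (\Cref{lem8928916861}) to handle $|X|\le|A|$ and $|X|\ge|B|$, it simply takes an optimal set $S$ with $|S|=k$ exactly (admissible since upgrading additional suppliers never increases cost). Because $|A|<k<|B|$, the hypothesis ``no extreme set'' applies directly to $S$, giving $g_{A,B}(S)\ge g_{A,B}(A)$ and hence $\cost(S)\ge f_{A,B}(k)$ in one line. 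Your detour through convexity and the monotonicity of $L$ is harmless but avoidable.
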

\begin{proof}
    Let $S \subseteq I$ with $|S| = k$ be any optimal set.
    Since there is no extreme set, we have $g(A) \le g(S)$.
    By \Cref{thmMain}, the optimum value of the LP is equal to
    \begin{align*}
        \cost(S)
        = g(S) - \frac{\cost(A) - \cost(B)}{|B| - |A|} \cdot k
        & \ge g(X^*) - \frac{\cost(A) - \cost(B)}{|B| - |A|} \cdot k \\
        & \ge g(A) - \frac{\cost(A) - \cost(B)}{|B| - |A|} \cdot k \\
        & = f_{A,B}(k). \qedhere
    \end{align*}
\end{proof}

\begin{figure}
    \centering
    \includegraphics[width=0.95\linewidth]{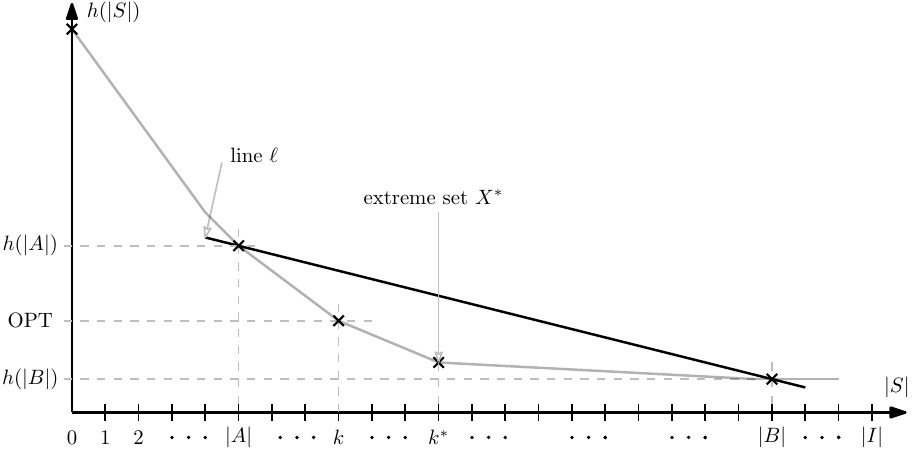}
    \caption{Illustration of the combinatorial algorithm.}
    \label{fig:combin-alg}
\end{figure}

In fact, we can easily check whether an extreme set $X^{*}$ exists.
More formally, we even compute a set $X^{*}$ that optimizes $g_{A,B}(X)$
over all sets $X\subseteq I$. All we need to do for this is to solve
an auxiliary instance of weighted bipartite matching. In this instance,
we take the bipartite graph corresponding to our instance as defined
in \Cref{secPreliminiaries} and decrease the costs of each red
edge by $\frac{\cost(A)-\cost(B)}{|B|-|A|}$. Then, we compute a minimum
cost perfect matching for the instance, i.e., \emph{without }a constraint
bounding the number of selected red edges.
\begin{lemma}
    \label{lem4hg893hg9g}
In polynomial time we can compute a set $X^{*}\subseteq I$ such that
$g_{A,B}(X^{*})\le g_{A,B}(X)$ holds for each set $X\subseteq I$. If $g_{A,B}(X^{*})<g_{A,B}(A)$,
then $X^{*}$ is an extreme set; otherwise, no extreme set exists.
\end{lemma}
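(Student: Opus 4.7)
The plan is to compute $X^*$ via a single call to minimum-cost bipartite perfect matching on a modified copy of the multigraph $G$ from \Cref{secPreliminiaries}, and then read off from the value of $g_{A,B}(X^*)$ whether any extreme set exists. Set $\gamma \coloneqq \frac{\cost(A)-\cost(B)}{|B|-|A|}$. Since $(A,B)$ is weakly optimal and $h$ is non-increasing, $\cost(A)=h(|A|)\ge h(|B|)=\cost(B)$, so $\gamma\ge 0$. I would keep each blue edge $(i,j)$ at cost $c_i d_j$ and raise each red edge $(i,j)$ to cost $b_i d_j + \gamma$. For any perfect matching $M$, writing $X(M)$ for the set of suppliers incident to a red edge of $M$, the modified cost of $M$ equals its original cost plus $\gamma\cdot|X(M)|$; conversely, for any $X\subseteq I$ there is a perfect matching with $X(M)=X$ whose original cost is exactly $\cost(X)$. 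Hence the minimum modified cost over perfect matchings equals $\min_{X\subseteq I} g_{A,B}(X)$, and extracting $X^*\coloneqq X(M^*)$ from any optimum $M^*$ (computable in strongly polynomial time by standard bipartite matching algorithms) gives the required minimizer.

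For the dichotomy, suppose first that $g_{A,B}(X^*)<g_{A,B}(A)$; I claim $X^*$ is an extreme set. Optimality of $X^*$ immediately yields $\cost(X^*)=h(|X^*|)$, because for any $X$ with $|X|=|X^*|$ the inequality $g_{A,B}(X^*)\le g_{A,B}(X)$ collapses to $\cost(X^*)\le\cost(X)$. To force $|A|<|X^*|<|B|$, I consider the line $\ell$ through $(|A|,h(|A|))$ and $(|B|,h(|B|))$, which has slope $-\gamma$ and equation $y+\gamma x = g_{A,B}(A)=g_{A,B}(B)$. By convexity of $h$ (\Cref{lem8928916861}), every $k'\in\{0,\dots,|A|\}\cup\{|B|,\dots,|I|\}$ satisfies that $(k',h(k'))$ lies on or above $\ell$, i.e.\ $h(k')+\gamma k'\ge g_{A,B}(A)$. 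If $|X^*|$ belonged to this range, we would get $g_{A,B}(X^*)=h(|X^*|)+\gamma|X^*|\ge g_{A,B}(A)$, contradicting the hypothesis. Together with the hypothesis itself, this provides all three defining properties of an extreme set.

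Conversely, if $g_{A,B}(X^*)\ge g_{A,B}(A)$, then since $X^*$ globally minimizes $g_{A,B}$, every $X\subseteq I$ satisfies $g_{A,B}(X)\ge g_{A,B}(A)$, ruling out any extreme set. Overall, no step presents a real obstacle: the matching computation is entirely standard, and the extreme-set characterization amounts to translating the one-dimensional picture of ``points lie on or above the secant line through $(|A|,h(|A|))$ and $(|B|,h(|B|))$'' into an inequality for $g_{A,B}$, which is exactly what convexity of $h$ supplies via \Cref{lem8928916861}.
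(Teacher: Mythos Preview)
Your proof is correct and follows exactly the approach the paper sketches just before the lemma: reduce the minimization of $g_{A,B}$ to a single min-cost perfect matching by shifting the red-edge costs by $\gamma=\frac{\cost(A)-\cost(B)}{|B|-|A|}$, then invoke convexity of $h$ (\Cref{lem8928916861}) to pin $|X^*|$ strictly between $|A|$ and $|B|$ whenever $g_{A,B}(X^*)<g_{A,B}(A)$. Your sign is the right one---since $\gamma\ge 0$ the red edges must be \emph{increased} by $\gamma$ to encode the penalty, which is a small slip in the paper's informal description (``decrease''); otherwise the arguments coincide.
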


Thus, if there is no extreme set, then $(A,B)$ is an optimal pair by \Cref{lem:checkXstar}.
Otherwise, let $X^*$ the set due to \Cref{lem4hg893hg9g}.
Note that, once we fix the cardinality of $X$, the
value of $g(X)$ depends only on $\cost(X)$.
This yields the following observation.
\begin{observation}
\label{obs:weakly-optimal}For every $X\subseteq I$ with $|X^{*}|=|X|$ we have $\cost(X^{*})\le\cost(X)$.
\end{observation}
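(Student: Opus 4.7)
The plan is to exploit the fact that the Lagrangian penalty term in $g_{A,B}$ depends on $X$ only through its cardinality. More precisely, I would start from the definition
\[
g_{A,B}(X) = \cost(X) + \frac{\cost(A)-\cost(B)}{|B|-|A|}\cdot |X|,
\]
and observe that for any $X \subseteq I$ with $|X| = |X^*|$ the second summand is the same constant $\frac{\cost(A)-\cost(B)}{|B|-|A|}\cdot |X^*|$ for both $X$ and $X^*$.

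Next I would invoke \Cref{lem4hg893hg9g}, which guarantees $g_{A,B}(X^*) \le g_{A,B}(X)$ for every $X \subseteq I$, and in particular for every $X$ with $|X| = |X^*|$. Subtracting the shared penalty term from both sides of this inequality immediately yields $\cost(X^*) \le \cost(X)$, which is exactly the desired conclusion.

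There is no real obstacle here: the observation is essentially a one-line consequence of the fact that the extra term in the Lagrangian cost is a function of $|X|$ alone. The only thing to be mindful of is the sign of $\frac{\cost(A)-\cost(B)}{|B|-|A|}$, but since the term is identical on both sides of the inequality, its sign plays no role in the cancellation.
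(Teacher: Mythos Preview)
Your proposal is correct and matches the paper's reasoning exactly: the paper justifies the observation by noting that, once the cardinality of $X$ is fixed, $g_{A,B}(X)$ depends only on $\cost(X)$, and then the minimality of $g_{A,B}(X^*)$ from \Cref{lem4hg893hg9g} gives $\cost(X^*)\le\cost(X)$.
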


Recall that since $X^*$ is extreme, we have $|A| < |X^*| < |B|$.
If $|X^{*}|=k$, then we are done since the matching $M_{X^{*}}$ is
optimal by \Cref{obs:weakly-optimal}. If $k<X^{*}$ then the
pair $(A,X^{*})$ is a weakly optimal pair with $|X^{*}|-|A|<|B|-|A|$.
Similarly, if $k>X^{*}$ then the pair $(X^{*},B)$ is weakly optimal
with $|B|-|X^{*}|<|B|-|A|$.

Thus, if we repeat this procedure for at most $|I|$ iteration, we
eventually find an optimal pair $(A,B)$. Then, we apply \Cref{lemSimplify}
to transform $(A,B)$ to a simple optimal pair, and finally we apply
\Cref{lem:one-iteration} iteratively until we found an optimal
integral solution.

\section{Applications and extensions}
\label{secApplications}
In this section, we first present details on the connection
of multiplicative assignment problems to scheduling.
We then explore natural extensions, for which there are randomized
pseudopolynomial time algorithms via simple reductions to exact
matching and for which the computational complexity with binary encoded costs is open.
Finally, we give counter-examples demonstrating that
the integrality properties we prove in \Cref{secIntegrality}
do not generalize to them.

\subsection{Scheduling with upgrades}
\label{secSchedulingUpgrades}
Consider the following scheduling problem: we are given
$n$ jobs with processing times $p_1,p_2,\dotsc,p_n$ as well
as upgraded (lower) processing times $q_1,q_2,\dotsc,q_n$ and a limit $k\in\N$ on the number of jobs to be upgraded. Furthermore, there are
$m$ machines with speeds $s_1,s_2,\dotsc,s_m$.
We have to assign each job to machines, upgrade $k$ of them,
and then process them in some order on each machine.
Our goal is to minimize the average completion time of the
jobs.
The variant without upgrades is well known to be reducible
to minimum weight perfect matching, see e.g.~\cite[Chapter~7]{elements-of-scheduling}.
While we present the setting above in the most general form,
we note that solving the problem on a single machine
does not seem obvious either. In fact, we are not aware of another
successful approach than the one presented in this paper.

To model this problem as the multiplicative assignment problem, each job is a supplier.
For each machine, we create $n$ ordered slots,
which form the $n\cdot m$ customers over all machines.
The slots correspond to positions in the schedule on a machine.
More precisely, if a job is assigned to the first slot of a machine,
this means that it is scheduled last; if it is assigned
to the second slot, then it is scheduled second-to-last, etc.
The demand for a job is $p_j$, and the upgraded demand is $q_j$.
The cost of the $\ell$th slot of machine $i$
is $\ell / (n s_i)$. Here, the rationale is that a job $j$ placed
on the $\ell$th slot will delay $\ell$ many jobs, including
itself, by $p_j / s_i$ (or $q_j / s_i$ if upgraded).
The division of $n$ is to transform the total completion time into
average completion time.
An example is shown in \Cref{fig:scheduling-reduction}.

\begin{figure}[htb]
    \centering
    \includegraphics[width=0.3\linewidth]{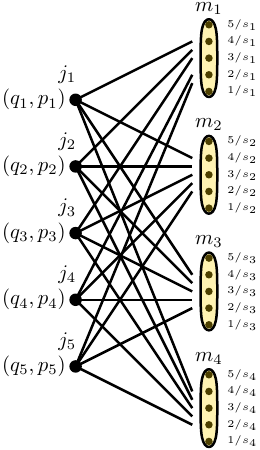}
    \caption{Illustration for the scheduling with uniform machines. The figure shows an example consisting of five jobs and four machines. The left and right sides are job and machine nodes, respectively. Each machine node includes five sub-nodes. Each job node connects to each sub-node of each machine.}
    \label{fig:scheduling-reduction}
\end{figure}

For an instance of multiplicative assignment, we need at least as many suppliers as customers. Therefore, we remove all slots (i.e., customers) except the $m$ slots with the lowest cost, as there is always an optimal solution that does not use these slots.
An optimal solution to the instance of multiplicative
assignment will always use a prefix of slots for each machine,
since the costs are strictly increasing.
It is now straight-forward to transform a solution to the
scheduling problem to one with the same cost in the multiplicative
assignment problem and vice versa.

\subsection{Exact matching and pseudopolynomial time algorithms}
Recall that in the exact bipartite perfect matching problem
we are given a bipartite graph with edges colored either red or blue
and a number $k$.
Our goal is to find a perfect matching that contains exactly $k$ red
edges.
This problem admits a randomized polynomial time algorithm~\cite{mulmuley1987matching}.
More generally, the following problem can be solved in randomized pseudopolynomial time:
given a bipartite graph with edge weights $w: E\rightarrow \N$ and
a target $t\in \N$,
find a perfect matching of weight exactly $t$.
The weighted variant can be reduced to the red-blue variant
by subdividing each edge $2 w(e) - 1$ times 
and coloring the path alternatingly with red and blue~\cite{GurjarKMST16}.
In fact, even the variant with $\ell = O(1)$ many weight 
functions $w_1,w_2,\dotsc,w_{\ell}$ and targets $t_1,t_2,\dotsc,t_{\ell}$ can be solved in randomized pseudopolynomial
time by aggregating the $\ell$ functions into one with
appropriate zero-padding, see e.g.~\cite{Kannan83}. This can be used to derive randomized pseudopolynomial time algorithms for various matching related problems with upgrades.

One can, for example, obtain a
pseudopolynomial time
randomized algorithm for
the multiplicative assignment problem, i.e., whose running time is polynomial
in $n$, $\max_{j\in J} d_j$,
and $\max_{i\in I} c_j$.
Towards this, we use the natural bipartite graph as described in
\Cref{secPreliminiaries}, which contains for each $(i, j)\in I\times j$
an upgraded copy (of cost $b_i \cdot d_j$) and a non-upgraded copy (of cost $c_i \cdot d_j$).
For each $t \in \{1,2,\dotsc, \sum_{j\in J, i\in I} c_i \cdot d_j\}$
we check if there is a matching with cost $t$ that selects
exactly $k$ upgraded edges and output the lowest $t$, for which
there is.
This can be solved using exact bipartite perfect matching
with two weight functions.

We note that this is inferior to the main result of this
paper, which does not require pseudopolynomial time. 
However, the approach via exact matching generalizes to
more complicated problems. For example, it still applies
even if the bipartite graph in multiplicative assignment
is not complete, that is, some suppliers cannot be assigned
to some customers. Furthermore, more complex
cost functions can be implemented, namely, any function
in the supplier, customer,
and whether the supplier is upgraded.
One can even allow both suppliers and customers to be upgraded,
either bounding the total number or---using an additional weight
function---the number of upgraded supplies and customers individually.
Similarly, one may partition the suppliers into a constant number of sets and allow a fixed number of upgrades per set, at the cost
of increasing the running time due to more weight functions. 

Another variant of exact matching is the following optimization
problem, mentioned for example in~\cite{Maalouly23}: given a bipartite graph with red or blue edges, a number
$k$, and a weight function $w: E \rightarrow \N$, find a minimum
weight perfect matching with exactly $k$ red edges.
Again, this problem can be solved in pseudopolynomial time by reduction
to the previous variants.
The problem is not known to be NP-hard for binary encoded weights (which might be exponentially large in the number of suppliers and customers)~\cite{Maalouly23}.
If there was a polynomial time algorithm for it, this would
solve the multiplicative assignment problem, as well as the extensions
mentioned above in polynomial time.
As fascinating as this question is, a solution seems out of reach
since the only known algorithm for exact matching
is via an algebraic framework. It seems illusive to
enhance it such that optimizes an objective function (over the set of
all matchings with exactly $k$ edges), rather than
only computing \emph{some} matching with exactly $k$ red edges.

\subsection{Counter-examples for integrality}
Given our positive results in \Cref{secIntegrality}, one may
wonder if they extend to other cases, specifically those mentioned in the previous subsection, for which there are pseudopolynomial
time algorithms. Since the computational complexity of the optimization
version of exact matching is open (see the previous subsection), 
none of these cases are known to be NP-hard for binary-encoded costs
either.

\paragraph*{Non-complete bipartite graph.}
Let us consider a generalization of the multiplicative assignment problem, where
some customers cannot be assigned to some suppliers, that is, the graph as described in \Cref{secPreliminiaries} is not complete.
For example, in the instance shown in the subfigure (a) of \Cref{fig:noncomplete}, there is no edge from the third left supplier to the second right customer (the suppliers correspond to the depicted nodes on the left and the customers correspond to the nodes on the right).
Suppose we are only allowed to upgrade one supplier (i.e., $k=1$). 
For this case, it is one of the best options to upgrade the top-left supplier, resulting in the optimal matching shown in the subfigure (b) of \Cref{fig:noncomplete}.
The matching has cost
\[
    0 \cdot 3 + 5 \cdot 1 + 2 \cdot 0 = 5.
\]

However, the fractional solution can cheat because the missing edge prevents us from arranging the suppliers increasingly.
Consider the fractional solution shown in the subfigure (c) of \Cref{fig:noncomplete}.
Here, two upgraded edges and four normal edges each receive a value of $\frac{1}{2}$. So, the resulting cost is
\[
    \frac{1}{2} \cdot \left( 0 \cdot 3 + 1 \cdot 1 + 2 \cdot 1 + 5 \cdot 0 + 2 \cdot 3 + 2 \cdot 0 \right) = 
    \frac{9}{2} < 5.
\]

\begin{figure}
    \centering
    \includegraphics[width=0.7\linewidth]{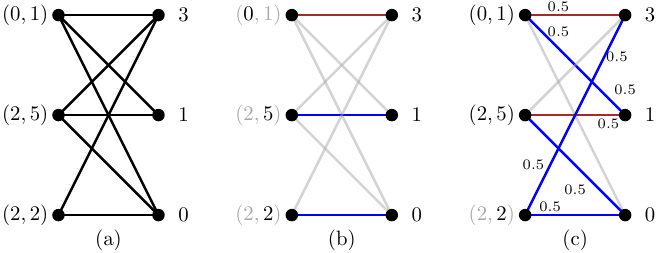}
    \caption{Illustration of non-complete bipartite graph that breaks the integrality. The input bipartite graph is shown in the subfigure (a). The subfigure (b) is the optimal integral solution. Here, we denote upgraded and normal edges in red and blue, respectively. The subfigure (c) is the optimal fractional solution. Here, each blue/red edge will be taken to a fractional extent of $1/2$. It is easy to verify that such a solution is feasible.}
    \label{fig:noncomplete}
\end{figure}

\paragraph*{Partition of suppliers.}
Consider now a partition of the suppliers into subsets with
the constraint that only a specific number can be upgraded per
subset. This is a natural generalization of the cardinality constraint on the set of upgraded suppliers
to a partition matroid. 
We show that the integrality property is lost already for a partition into two sets. 
The instance is shown in the subfigure (a) of \Cref{fig:partition-matroid}: one out of the upper two suppliers and one out of the lower two suppliers are allowed to be upgraded.
\begin{figure}
    \centering
    \includegraphics[width=1\linewidth]{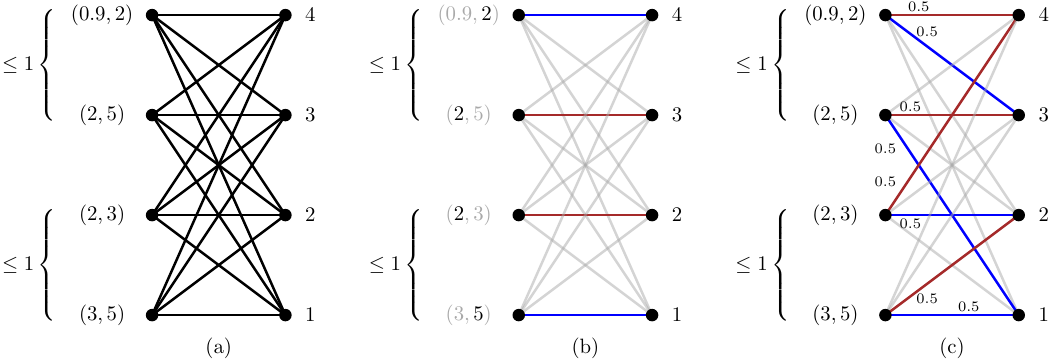}
    \caption{Illustration of a partition matroid that breaks the integrality. There are four suppliers, forming two groups. At most one supplier can be upgraded from each group. The subfigure (b) shows an optimal integral solution, where we upgrade the second node from the first group and the first node from the second group. The blue/red edges represent the normal/upgrading edges, respectively. The subfigure (c) shows an optimal fractional solution where each edge is taken to a fractional extent of $1/2$.}
    \label{fig:partition-matroid}
\end{figure}
\Cref{tab:partition-matroid} shows the cost of upgrading each pair of suppliers, and both the last two solutions are optimal integral solutions.
\begin{table}[htb]
    \centering
    \begin{tabular}{c c}  
    \hline
			upgraded suppliers & cost of the optimal matching \\
			\hline
			$1,3$ & $4\cdot 0.9+3\cdot 2+2\cdot 5+1\cdot 5=24.6$  \\
			$1,4$ & $4\cdot 0.9+3\cdot 3+2\cdot 3+1\cdot 5=23.6$  \\
			$2,3$ & $4\cdot 2+3\cdot 2+2\cdot 2+1\cdot 5=23$  \\
			$2,4$ & $4\cdot 2+3\cdot 2+2\cdot 3+1\cdot 3=23$ \\
			\hline
		\end{tabular}
        \caption{All feasible solutions to the instance shown in the subfigure (a) of \Cref{fig:partition-matroid}.}
    \label{tab:partition-matroid}
\end{table}
Now consider the solution to the LP shown in the subfigure (c) of \Cref{fig:partition-matroid}.
Here each edge is taken to a fractional extent of $1/2$. 
The cost is
\begin{equation*}
    \frac 1 2 \cdot (4\cdot 0.9 + 3 \cdot 2 + 3 \cdot 2 + 1 \cdot 5 + 4 \cdot 2 + 2 \cdot 3 + 2 \cdot 3 + 1 \cdot 5) = 22.8
\end{equation*}
This is lower than the cost of any of the integral solutions.

\paragraph*{Upgrading suppliers and customers.}
Consider now that not only the suppliers can be upgraded, but also the customers (which decreases their demand). 
You are allowed to upgrade at most $k$ suppliers and customers in total. 
The cost of an edge is the (potentially upgraded) demand of the customer multiplied by the (potentially upgraded) cost of the supplier.
This is a natural generalization of the cardinality constraint on upgraded suppliers. We show that we do not have an optimal integral solution to the LP. Consider the instance 
shown in the subfigure (a) of \Cref{fig:dual-upgrading}.

\begin{figure}
    \centering
    \includegraphics[width=0.85\linewidth]{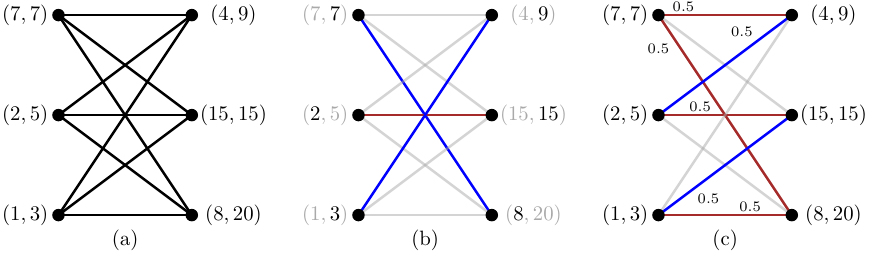}
    \caption{Illustration of upgrading both suppliers and customers that breaks integrality. The subfigure (a) is the input instance. The subfigure (b) is an optimal integral solution. The red/blue edges are upgrading/normal edges. The subfigure (c) is a fractional solution, where each blue/red edge will be taken to the extent of $1/2$.}
    \label{fig:dual-upgrading}
\end{figure}

For $k=1$, we can upgrade the third customer, resulting in an optimal matching of cost $7\cdot 8+5 \cdot 9+3 \cdot 15=146$.
For $k=3$, we can upgrade the second and third supplier and the first customer, resulting in an optimal matching of cost $7\cdot 4+2 \cdot 15+1 \cdot 20=78$.
So by taking the convex combination of the characteristic vectors of these two matchings, we obtain an LP solution for $k=2$ with a cost of $\frac{146+78}{2}= 112$. 
This fractional solution is shown in the subfigure (c) of \Cref{fig:dual-upgrading}.
There is no benefit in upgrading the first supplier or the second customer. For $k=2$, \Cref{tab:dual-upgrading} shows the cost of upgrading each remaining pair.
An optimal integral solution is shown in the subfigure (b) of \Cref{fig:dual-upgrading}.
\begin{table}[htb]
    \centering
\begin{center}
	\begin{tabular}{c c c}
		\hline
		upgraded suppliers & upgraded customers & cost of the optimal matching \\
		\hline
		$2,3$ & none &$7\cdot 9+2\cdot 15+1\cdot 20=113$  \\
		$2$ & $1$    &$7\cdot 4+3\cdot 15+2\cdot 20=113$  \\
		$2$ & $3$    &$7\cdot 8+3\cdot 9+2\cdot 15=113$  \\
		$3$ & $1$    &$7\cdot 4+5\cdot 15+1\cdot 20=123$ \\
		$3$ & $3$    &$7\cdot 8+5\cdot 9+1\cdot 15=116$ \\
		none & $1,3$ &$7\cdot 4+5\cdot 8+3\cdot 15=113$ \\
		\hline
	\end{tabular}
        \caption{All feasible solutions to the instance shown in the subfigure (a) of \Cref{fig:dual-upgrading}. The table does not involve the first supplier and second customer as upgrading them would not change their respective costs).}
    \label{tab:dual-upgrading}
\end{center}
\end{table}
As there is no integral solution of cost at most $112$, there is no optimal integral solution.

\section*{Acknowledgments}
The authors would like to thank Thomas Rothvoss, Laura Sanità, and Robert Weismantel for organizing the 2024 Oberwolfach Workshop on Combinatorial Optimization (2446), where initial results of this work have been discussed.
\bibliographystyle{plain}
\bibliography{main}

\appendix

\section{Greedy algorithm}
\label{apx:greedy}

Consider the following greedy algorithm for the multiplicative assignment problem with upgrades.
We initialize the set $A \coloneqq \emptyset$ and iteratively add a supplier from $I$ to
$A$ in each round, always choosing a supplier that decreases the objective
the most in that round. We stop when $|A|=k$.

This algorithm might fail to compute an optimal solution.
To see this, consider the instance
where $I=J=\{1,2,3\}$ and $(b_{1},c_{1})=(1,5)$, $(b_{2},c_{2})=(0,3)$,
$(b_{3},c_{3})=(3,10)$, $d_{1}=1$, $d_{2}=2$, $d_{3}=3$ (see \Cref{fig:counter_example_greedy}).
In the first iteration, the greedy algorithm selects the supplier
$1$ since $\cost(\{1\})<\cost(\{2\})=\cost(\{3\})$. However, if
$k=2$, then the optimal solution is to upgrade suppliers 2 and 3
since $\cost(\{2,3\})<\cost(\{1,2\})=\cost(\{1,3\})$. Thus, while
upgrading supplier $1$ is optimal for $k=1$, it is not contained
in the unique optimal solution for $k=2$.

\begin{figure}
\centering \includegraphics[width=1\linewidth]{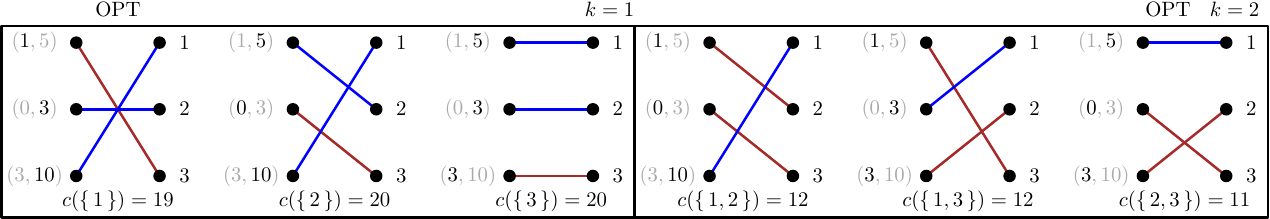}
\caption{ Illustration of an instance in which the greedy algorithm does compute
an optimal solution. Suppliers are depicted on the left, customers
on the right, and upgraded suppliers are marked red. The figure shows
optimal assignments for all possible sets to be upgraded. For $k=1$
the unique optimal solution is on the left, for $k=2$ the unique
optimal solution is on the right. The corresponding upgraded sets
are disjoint.}\label{fig:counter_example_greedy}
\end{figure}

\section{Supermodularity}
\label{sec:supermodularity}
In this section, we will prove that the function $\cost(\cdot)$ is supermodular. We first prove the following auxiliary statement.
\begin{lemma}\label{lem:submodular-aux}
 Consider two instances of the multiplicative assignment problem with upgrades.
 Both instances have the same  suppliers $I = \{1,\dotsc,n\}$,
 customers $J = \{1,\dotsc,n\}$, demands $\bd$, but different 
 costs $\bb, \bc$ and $\bb', \bc'$.
 Further, both cost vectors are ordered non-decreasingly, i.e., $c_{i}\leq c_{i+1}$ and $c'_{i}\leq c'_{i+1}$ for all $i\le n-1$, and the costs in $\bc'$ are at least as large as $\bc$, i.e., $c'_i \ge c_i \ge b_i$ for all $i\in I$.
 Finally, there is a special supplier $s$ with $c_s = c'_{s}$ and $b_s = b'_s$. Then
 \begin{equation*}
     \cost(\emptyset) - \cost(\{s\}) \le \cost'(\emptyset) - \cost'(\{s\}) ,
 \end{equation*}
 where $\cost(\emptyset), \cost'(\emptyset)$ are the optimal objective values for the instances with costs $c, c'$ and no upgraded suppliers, and $\cost(\{s\}), \cost'(\{s\})$ are those after upgrading $c_s, c'_{s}$ to $b_s, b'_s$.
\end{lemma}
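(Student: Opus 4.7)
My plan is to reduce the discrete comparison of the four cost values to a pointwise comparison of two slope functions by means of a continuous one-parameter interpolation. First I would relabel customers so that $d_1 \ge d_2 \ge \dotsb \ge d_n$; since both $\bc$ and $\bc'$ are non-decreasing, \Cref{lemComputeAssignmentOnly} then tells us that in each instance the un-upgraded optimal matching pairs supplier $i$ with customer $i$. I would then introduce the auxiliary function $F(t)$ equal to the optimal matching cost in the $\bc$-instance after replacing $c_s$ by $t$ (while leaving all other costs untouched), and define $F'(t)$ analogously for the $\bc'$-instance. Because $c_s = c'_s$ and $b_s = b'_s$,
\[
    \cost(\emptyset) - \cost(\{s\}) = F(c_s) - F(b_s) \quad\text{and}\quad \cost'(\emptyset) - \cost'(\{s\}) = F'(c_s) - F'(b_s),
\]
so the task reduces to showing $F(c_s) - F(b_s) \le F'(c_s) - F'(b_s)$.

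The key structural claim to prove next is that $F$ is continuous and piecewise linear in $t$, with slope $d_{\phi(t)}$ at every non-transition point, where
\[
    \phi(t) \;=\; |\{ i \in I \setminus \{s\} : c_i < t \}| + 1
\]
is the position into which a supplier of cost $t$ is inserted in the sorted order; an analogous statement holds for $F'$ with $\phi'(t) = |\{ i \in I \setminus \{s\} : c'_i < t \}| + 1$. This will follow directly from the closed-form optimum provided by \Cref{lemComputeAssignmentOnly}: on a maximal interval on which $\phi(t)$ is constantly equal to $i$, supplier $s$ is paired with customer $i$ (contributing exactly $t \cdot d_i$ to the total), while the remaining pairings do not depend on $t$. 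Continuity at transition points is automatic because two suppliers with identical effective cost are interchangeable at zero marginal cost.

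Finally I would compare slopes: since $c'_i \ge c_i$ for every $i$, the condition $c'_i < t$ forces $c_i < t$, whence $\phi'(t) \le \phi(t)$ pointwise. Because demands are non-increasing in the index, this yields $d_{\phi'(t)} \ge d_{\phi(t)}$, and integrating over $[b_s, c_s]$ gives
\[
    \cost'(\emptyset) - \cost'(\{s\}) \;=\; \int_{b_s}^{c_s} d_{\phi'(t)}\, \mathrm{d}t \;\ge\; \int_{b_s}^{c_s} d_{\phi(t)}\, \mathrm{d}t \;=\; \cost(\emptyset) - \cost(\{s\}),
\]
which is the desired inequality. The main (mild) obstacle will be the clean bookkeeping at the finitely many values of $t$ where $\phi$ or $\phi'$ is ambiguous (namely those equal to some other cost $c_i$ or $c'_i$); however $F$ and $F'$ remain continuous there, and this tie set has measure zero, so the integral comparison is unaffected.
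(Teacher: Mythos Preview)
Your argument is correct. The continuous interpolation $F(t)$, together with the observation that its slope at $t$ is $d_{\phi(t)}$, gives precisely $\cost(\emptyset)-\cost(\{s\})=\int_{b_s}^{c_s} d_{\phi(t)}\,\mathrm{d}t$ (and likewise for $F'$), and the pointwise comparison $\phi'(t)\le\phi(t)$ follows from $c'_i\ge c_i$ just as you say. The tie set is handled correctly by continuity.

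The paper proceeds differently: it writes down the optimal assignment $\pi_s$ for the $\bc$-instance with $s$ upgraded, computes the difference explicitly as
\[
\cost(\emptyset)-\cost(\{s\}) \;=\; c_s d_s - b_s d_t + \sum_{i=t}^{s-1} c_i\,(d_i-d_{i+1}),
\]
bounds this term-by-term using $c_i\le c'_i$ (the coefficients $d_i-d_{i+1}$ are non-negative), and then observes that the resulting expression equals $\cost'(\emptyset)$ minus the cost of the \emph{same} assignment $\pi_s$ in the $\bc'$-instance, which is at most $\cost'(\emptyset)-\cost'(\{s\})$ since $\pi_s$ is merely feasible there. So the paper's argument has two inequality steps (coefficient bound, then feasibility slack), whereas yours has a single pointwise slope inequality integrated over $[b_s,c_s]$. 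Your route is a bit more conceptual and evaluates both sides exactly, avoiding the feasibility step; the paper's route is shorter, stays purely discrete, and reuses the explicit assignment from \Cref{lemComputeAssignmentOnly} without introducing any auxiliary function.
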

\begin{proof}
    Assume without loss of generality that $d_j\geq d_{j+1}$ for all $j\leq n-1$.
    As shown in \Cref{lemComputeAssignmentOnly} the assignment $\pi(j) \coloneqq j$ for $j\in J$ is optimal when not upgrading any suppliers in both instances.
    Let $t\in I$ be minimal with $b_s\leq c_{t}$.
    Then 
    \begin{equation*}
    \pi_s(j) \coloneqq \begin{cases}
		j &\text{ for }j<t \text{ or } j>s \\
		s &\text{ for }j=t \\
		j-1 &\text{ for }t < j\leq s
	\end{cases}
    \quad\text{ and }\quad
        \pi_s^{-1}(i) \coloneqq \begin{cases}
		j &\text{ for }i<t \text{ or } i>s \\
		t &\text{ for }i=s \\
		i+1 &\text{ for }t \le i < s
	\end{cases}
    \end{equation*}
    are an optimal assignment and its inverse for upgrading $\{s\}$ in the instance with costs $\bc$ as shown in the proof of \Cref{lemComputeAssignmentOnly}.
    It follows that
    \begin{align*}
        \cost(\emptyset) - \cost(\{s\}) &= c_s d_s - b d_t + \sum_{i = t}^{s - 1} c_i (d_i - d_{i+1}) \\
        &\le c'_s d_s - b d_t + \sum_{i = t}^{s - 1} c'_i (d_i - d_{i+1}) \\
        &\le \cost'(\emptyset) - \cost'(\{s\}) .
    \end{align*}
    The first inequality holds because all coefficients for $c_i$ (respectively, $c'_i$) are non-negative. The second inequality holds
    because the assignment $\pi_s$ has cost at least $\cost'(\{s\})$ for upgrading $s$ in the instance with costs~$c'$.
\end{proof}
\begin{lemma}
	The set function $\cost(A)$ is supermodular.
\end{lemma}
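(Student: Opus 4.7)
The plan is to establish supermodularity by reducing to a single-element exchange inequality and then applying \Cref{lem:submodular-aux}. First, I will show that supermodularity of $\cost$ is equivalent to: for every $A \subseteq I$ and distinct $s, t \in I \setminus A$,
\begin{equation*}
    \cost(A) + \cost(A \cup \{s, t\}) \ge \cost(A \cup \{s\}) + \cost(A \cup \{t\}). \tag{$\star$}
\end{equation*}
The general inequality $\cost(A) + \cost(B) \le \cost(A \cap B) + \cost(A \cup B)$ follows from $(\star)$ by a routine induction on $|A \symdiff B|$.

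To establish $(\star)$, I will invoke \Cref{lem:submodular-aux} with two sub-instances sharing the supplier set $I$, customer set $J$, and demands $\bd$. Sub-instance~1 has regular costs $\hat c^{(1)}_i := b_i$ for $i \in A$ and $\hat c^{(1)}_i := c_i$ otherwise; sub-instance~2 has $\hat c^{(2)}_i := b_i$ for $i \in A \cup \{t\}$ and $\hat c^{(2)}_i := c_i$ otherwise. Both sub-instances use upgraded cost $b_i$ at every supplier, and $s$ plays the role of the special supplier, with $\hat c^{(1)}_s = \hat c^{(2)}_s = c_s$ and common upgraded cost $b_s$. The two vectors agree except at $t$, where $\hat c^{(1)}_t = c_t \ge b_t = \hat c^{(2)}_t$. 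For $S \in \{\emptyset, \{s\}\}$ we have $\cost^{(1)}(S) = \cost(A \cup S)$ and $\cost^{(2)}(S) = \cost(A \cup \{t\} \cup S)$, so the conclusion of \Cref{lem:submodular-aux}---that the marginal benefit of upgrading $s$ is no smaller in sub-instance~1 than in sub-instance~2---is exactly $(\star)$.

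The main obstacle will be verifying the sortedness hypothesis of \Cref{lem:submodular-aux}: I need a labeling of $I$ under which both $\hat c^{(1)}$ and $\hat c^{(2)}$ are non-decreasing. Since these vectors differ only at $t$, a common such labeling exists iff $t$'s sorted rank coincides in both, which fails whenever some other supplier has cost strictly between $b_t$ and $c_t$. I will handle this by decomposing the change $\hat c^{(1)}_t = c_t \to \hat c^{(2)}_t = b_t$ into a chain of intermediate cost vectors separated by the other suppliers' costs lying in $[b_t, c_t]$. On each sub-range where $t$'s rank is unchanged, the endpoint vectors admit a common sorted labeling and \Cref{lem:submodular-aux} applies, comparing the marginal benefit of upgrading $s$ at the two endpoints. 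At the threshold values where $t$'s rank actually changes, this marginal benefit is continuous in $\hat c^{(\cdot)}_t$ (being a sum of order statistics of the cost vector), so the chain of inequalities glues together. Telescoping across all sub-ranges yields $(\star)$ and completes the proof.
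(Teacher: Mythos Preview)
Your reduction to the single--pair inequality $(\star)$ and your identification of the two sub-instances are exactly what the paper does, and your chain argument is correct: the function $v\mapsto \cost^{v}(\emptyset)-\cost^{v}(\{s\})$ is continuous (indeed piecewise linear) in the varied cost $v$, and on each closed sub-range where $t$'s rank is fixed, a common sorted labeling exists so \Cref{lem:submodular-aux} applies to the two endpoints; telescoping then yields $(\star)$.

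The paper avoids your chain decomposition by a one-line observation you overlooked. Since $(\star)$ is symmetric in $s$ and $t$, one may assume without loss of generality that $c_s\ge c_t$. After sorting the vector $\hat c^{(1)}$ (costs with $A$ upgraded), supplier $t$ sits at some position $t'$ and supplier $s$ at some position $s'>t'$ (breaking the tie $c_s=c_t$ this way if needed). Replacing $c_t$ by $b_t$ and re-sorting can only permute positions $1,\dotsc,t'$, so $s$ remains at position $s'$ in the sorted $\hat c^{(2)}$. Thus the special supplier has the same index in both sorted vectors, the component-wise inequality is immediate, and \Cref{lem:submodular-aux} applies directly in one shot. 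Your approach buys nothing extra and costs an induction over threshold crossings; the paper's symmetry trick is strictly simpler.
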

\begin{proof}
	A set function $\cost$ is supermodular if and only if for all sets $A\subseteq I$ and two distinct elements $s,t\in I \setminus A$ we have
	\begin{equation}\label{eqSubmodularity}
		\cost(A)+\cost(A\cup \{s,t\})\geq \cost(A\cup \{s\})+\cost(A\cup \{t\})
	\end{equation}
    Assume without loss of generality that $c_s \ge c_t$.
    Consider two instances with modified costs $\bc'$ and $\bc''$ that are otherwise identical to the original instance.
    The cost vector $\bc'$ are obtained from upgrading $A\cup \{t\}$ and $\bc''$ are obtained from upgrading $A$.
    Formally,
    \begin{equation*}
        c'_{i} = \begin{cases}
            b_{i} &\text{ if } i\in A\cup \{t\} \\
            c_{i} &\text{ otherwise }
        \end{cases}
        \quad\text{ and }\quad
        c''_{i} = \begin{cases}
            b_{i} &\text{ if } i\in A \\
            c_{i} &\text{ otherwise }
        \end{cases}
    \end{equation*}
    We reorder the components of each of the vectors $\bc'$ and $\bc''$
    non-decreasingly and call the resulting vectors $\bar \bc'$ and $\bar \bc''$, and the corresponding upgraded cost vectors $\bar \bb'$ and $\bar \bb''$. Let $s'$, $t'$ be the new indices of $s$, $t$ in $\bar \bc''$. 
    Since $c_s \ge c_t$ we may break ties in such a way that $s' > t'$.
    Since we can obtain $\bar \bc'$ from $\bar \bc''$ by upgrading $t'$ and reordering the elements $1,2,\dotsc,t'$ with an appropriate tie-breaking rule, we also have that $s'$ is the new index of $s$ in $\bar \bc'$. Let $\cost'(S)$ and $\cost''(S)$ be the optimal objective values in the instances with costs $ \bar \bb', \bar \bc'$ and $\bar \bb'',\bar \bc''$ when $S\subseteq I$ is upgraded. As $c_i'\leq c_i''$ implies $\bar c'_i \le \bar c''_i$ for all $i\in I$, we have by \Cref{lem:submodular-aux}
    that
    \begin{multline*}
        \cost(A) - \cost(A \cup \{s\}) = \cost''(\emptyset) - \cost''(\{s\}) \\
        \ge \cost'(\emptyset) - \cost'(\{s\}) = \cost(A \cup \{t\}) - \cost(A \cup \{s, t\}) . \qedhere
    \end{multline*}
\end{proof}

\section{Deferred proofs}
\label{secDeferredProofs}

\lemComputeAssignmentOnly*
\begin{proof}
    Let $n\coloneqq |J|$. For $i\in I$ let $c_i'\coloneqq \begin{cases}
		b_i &\text{ for }i\in A\\
		c_i &\text{ for }i \in I \setminus A
		\end{cases}$ 
	be the effective cost of the supplier $i$. 
	Let $I=\{1, \dots , |I|\}$ be ordered such that $c_{i}'\leq c_{i+1}'$ for all $i\leq |I|-1$. 
	Also let $J=\{1, \dots , n\}$ be ordered such that $d_j\geq d_{j+1}$ for all $j\leq n-1$.
	Let $\pi(j) \coloneqq j$ for $j\in J$.
	Clearly, this assignment is one-to-one and can be computed in polynomial time. It remains to show that it is optimal.
	
	Let $\pi'$ be an one-to-one assignment that minimizes $\sum_{j\in J}c'_{\pi(j)}d_{j}$. 
	We update $\pi'$ step by step until $\pi'=\pi$ while not increasing the cost of the assignment. 
	Then let $j_1$ be the minimal customer with $\pi'(j_1)\neq j_1$. 
    First suppose that $ \pi'^{-1}(j_1)=\emptyset$. Then we assign customer $j_1$ to supplier $j_1$ instead of supplier $\pi(j_1)$. Formally, let $\pi''(j)\coloneqq \pi'(j)$ for $j\in J\setminus \{j_1\}$ and $\pi''(j_1)\coloneqq j_1$. Then \begin{align*}
		\sum_{j\in J}c'_{\pi''(j)}d_{j}-\sum_{j\in J}c'_{\pi'(j)}d_{j}
		&=(c'_{\pi''(j_1)}-c'_{\pi'(j)})d_{j_1}=(c'_{j_1}-c'_{\pi'(j)})d_{j_1}\leq 0
	\end{align*}
    So the cost of $\pi''$ is at most the cost of $\pi'$.
    
    Now suppose that $\pi'^{-1}(j_1)\neq \emptyset$. Then let $j_2$ the customer assigned to supplier $j_1$. Thus $j_1<j_2$ and $\pi '(j_1)> \pi'(j_2)$. Let $\pi''$ be obtained from $\pi'$ by swapping the values of $j_1$ and $j_2$, i.e. $\pi''(j)\coloneqq \pi'(j)$ for $j\in J\setminus \{j_1, j_2\}$, $\pi''(j_1)\coloneqq \pi'(j_2)$ and $\pi''(j_2)\coloneqq \pi'(j_1)$. Now we have
	\begin{align*}
		\sum_{j\in J}c'_{\pi''(j)}d_{j}-\sum_{j\in J}c'_{\pi'(j)}d_{j}
		&=c'_{\pi''(j_1)}d_{j_1}+c'_{\pi''(j_2)}d_{j_2}-c'_{\pi'(j_1)}d_{j_1}-c'_{\pi'(j_2)}d_{j_2}\\
		&=c'_{\pi'(j_2)}d_{j_1}+c'_{\pi'(j_1)}d_{j_2}-c'_{\pi'(j_1)}d_{j_1}-c'_{\pi'(j_2)}d_{j_2}\\
		&=(c'_{\pi'(j_2)}-c'_{\pi'(j_1)})(d_{j_1}-d_{j_2})\leq 0
	\end{align*}
	as $j_1<j_2$ and $\pi'(j_1)>\pi'(j_2)$ imply $d_{j_1}-d_{j_2}\geq 0$ and $c'_{\pi'(j_2)}-c'_{\pi'(j_1)}\leq 0$. So indeed, the cost of $\pi''$ is at most the cost of $\pi'$. Also the procedure terminates in at most $n-1$ steps with $\pi'=\pi$. Thus $\pi$ is also optimal.
\end{proof}

\lemVertexEdgeMatchings*
\begin{proof}
    For a red edge $e$, let $(\bx, \by)_e:=x_e$ be the LP value corresponding to the edge $e$.
    Similarly, for a blue edge $e$ we set $(\bx, \by)_e:=y_e$.
    Let $C \coloneqq \{e\in E:0<(\bx, \by)_e<1 \}$ be the set of all edges with fractional coordinates in $(\bx, \by)$. 
    The point $(\bx, \by)$ cannot be in the relative interior of a face of $P$ of dimension at least $2$, as this would imply that it lies in the relative interior of a face of dimension at least $1$ when the single Constraint~\eqref{LP4} is added. 
    And this contradicts that $(\bx, \by)$ is a vertex of $P(k)$. 
    So as $(\bx, \by)$ is not a vertex of $P$ by assumption, it lies on an edge of $P$. 
    As shown in e.g.~\cite[Thm.~18.4]{schrijver2003combinatorial}, the set $C$ is a unique cycle. 
    Let $C=\{e_1, \dots, e_{\ell}\}$ such that $e_i$ and $e_{i+1}$ are adjacent for all $i\in [\ell]$ (where we use the notation $e_{\ell+1}\coloneqq e_1$). 
    As the graph is bipartite, $\ell$ is even. 
    Let $A'=\{e_{2i}:i\in [\ell/2]\}$ and $B'=\{e_{2i-1}:i\in [\ell/2]\}$. 
    Due to Constraints \eqref{LP2} and \eqref{LP3} we have $(\bx, \by)_{e_i}+(\bx, \by)_{e_{i+1}}=1$ for all $i\in [\ell]$. 
    This implies $(\bx, \by)_{e_i}=(\bx, \by)_{e_2}$ for $e_i\in A'$ and $(\bx, \by)_{e_i}=(\bx, \by)_{e_1}$ for $e_i\in B'$. 
    Let $M_{\bar{A}}=A'\cup \{e\in E:(\bx, \by)_e=1\}$, $M_{\bar{B}}=B'\cup \{e\in E:(\bx, \by)_e=1\}$ and $\bar{\lambda} \coloneqq (\bx, \by)_{e_2}$. 
    The edge sets $M_{\bar{A}}$ and $M_{\bar{B}}$ are matchings, as no edge $e\in E$ with $(\bx, \by)_e=1$ can be adjacent to a node in $C$.
    This implies $1- \bar{\lambda} = (\bx, \by)_{e_1}$ and $(\bx, \by) =\bar{\lambda} \cdot \chi(M_{\bar{A}})+(1-\bar{\lambda})\chi(M_{\bar{B}})$. 
    
    The sets $\bar{A}$ and $\bar{B}$ are the nodes in $I$ incident to the red edges in $M_{\bar{A}}$ and $M_{\bar{B}}$. If $|\bar{A}|\leq |\bar{B}|$ let $A\coloneqq \bar{A}$, $B\coloneqq \bar{B}$ and $\lambda \coloneqq \bar{\lambda}$. Otherwise let $A\coloneqq \bar{B}$, $B\coloneqq \bar{A}$ and $\lambda \coloneqq 1- \bar{\lambda}$. Then we have $|A|\leq |B|$ and $(\bx, \by) =\lambda \cdot \chi(M_{A})+(1-\lambda)\chi(M_{B})$. Recall that $(\bx, \by)$ is a vertex of $P(k)$. As $\chi(M_{A}), \chi(M_{B})\in P$ we have $\chi(M_{A})\in P \setminus P(k)$ or $\chi(M_{B})\in P \setminus P(k)$. This implies $|A|>k$ or $|B|>k$ which both imply $|B|>k$. As $\lambda |A|+(1-\lambda) |B|\leq k$ and $\lambda\in (0,1)$ we get $|A|<k$. This completes the proof.
\end{proof}

\end{document}